\newcommand{\nocontentsline}[3]{}
\newcommand{\tocless}[2]{\bgroup\let\addcontentsline=\nocontentsline#1{#2}\egroup}
\newcommand{\dagg}{^\dagger}
\newcommand{\eps}{\e}
\newcommand{\eprp}{\ket{\r{EPR}_d}}
\newcommand{\epr}{{\r{EPR}_d}}
\newcommand{\eprrho}{{\r{EPR}_\s}}
\newcommand{\pauli}{\m P_d^{\otimes n}}
\newcommand{\paulin}[1]{\m P_d^{\otimes #1}}
\newcommand{\drho}[3]{\mathop{D}\limits_{#1}\left(#2\middle\| #3\right)}
\DeclareMathOperator*\can{\sf{can}}
\begin{document}
\title{Robust self-testing for linear constraint system games}

\author{Andrea Coladangelo}
\author{Jalex Stark}
\date{}
\affil{
Computing and Mathematical Sciences, Caltech\\ 
{\{acoladan,jalex\}@caltech.edu}
}
\maketitle

\abstract{
	We study linear constraint system (LCS) games over the ring of arithmetic modulo $d$. 
    We give a new proof that certain LCS games (the Mermin--Peres Magic Square and Magic Pentagram over binary alphabets, together with parallel repetitions of these) have unique winning strategies, where the uniqueness is robust to small perturbations.
	In order to prove our result, we extend the representation-theoretic framework of Cleve, Liu, and Slofstra \cite{cleve2016perfect} to apply to linear constraint games over $\Z_d$ for $d\geq 2$. We package our main argument into machinery which applies to any nonabelian finite group with a ``solution group'' presentation. We equip the $n$-qubit Pauli group for $n\geq 2$ with such a presentation; our machinery produces the Magic Square and Pentagram games from the presentation and provides robust self-testing bounds.
	The question of whether there exist LCS games self-testing maximally entangled states of local dimension not a power of 2 is left open. A previous version of this paper falsely claimed to show self-testing results for a certain generalization of the Magic Square and Pentagram mod $d\neq 2$. We show instead that such a result is impossible.
}

\pagenumbering{gobble}
\pagebreak
\pagenumbering{arabic}

\tocless\section{Introduction}
	In \cite{peres1990incompatible,mermin1990simple}, Mermin and Peres discovered an algebraic coincidence related to the $3\times3$ ``Magic Square'' of operators on $\C^2\otimes \C^2$ in Figure \ref{fig:magic-square-operators-introduction}.

If we pick any row and take the product of the three operators in that row (note that they commute, so the order does not matter), we get the identity operator. Similarly, we can try this with the columns. Two of the columns give identity while the other  gives $-1$ times identity. Thus, the product of these nine operators depends on whether they are multiplied row by row or column by column. This can be exploited to define a two-player, one-referee game called the Mermin--Peres Magic Square game \cite{aravind2004quantum} (see Definition \ref{definition:linear-constraint-game} and Figure \ref{fig:magic-square-equations} for a formal definition). Informally, the Mermin--Peres Magic Square game mod $2$ is as follows. The players claim to have a $3\times 3$ square of numbers in which each row and each of the first two columns sums to $0\pmod 2$, while the third column sums to $1\pmod 2$. (The players are usually called ``provers'', since they try to prove that they have such a square.) The referee asks the first player to present a row of the supposed square and the second to present a column. They reply respectively with the $3$ entries of that row and column in $\{0,1\}$. They win if their responses sum to $0$ or $1$ as appropriate, and they give the same number for the entry where the row and column overlap. This game can be won with probability $1$ by provers that share two pairs of maximally entangled qubits of dimension $2$, but provers with no entanglement can win with probability at most $\frac89$. Games which are won in the classical case with probability $< 1$ but are won in the quantum case with probability $1$ are known as \emph{pseudotelepathy games}.

\begin{figure}[!b]
	\caption{On the left are the operators of the Magic Square. $X$ and $Z$ are the generalized Pauli operators, i.e. they are unitaries for which $X^2 = Z^2 = I$ and each permutes the eigenbasis of the other. Across any solid line, the three operators commute and their product is identity. Across the dashed line, the operators commute and their product is $-1$ times identity.}
	\label{fig:magic-square-operators-introduction}
	\begin{tabular}
		{m{0.5\textwidth}m{0.5\textwidth}}
		  \resizebox{0.5\textwidth}{!}{\includegraphics{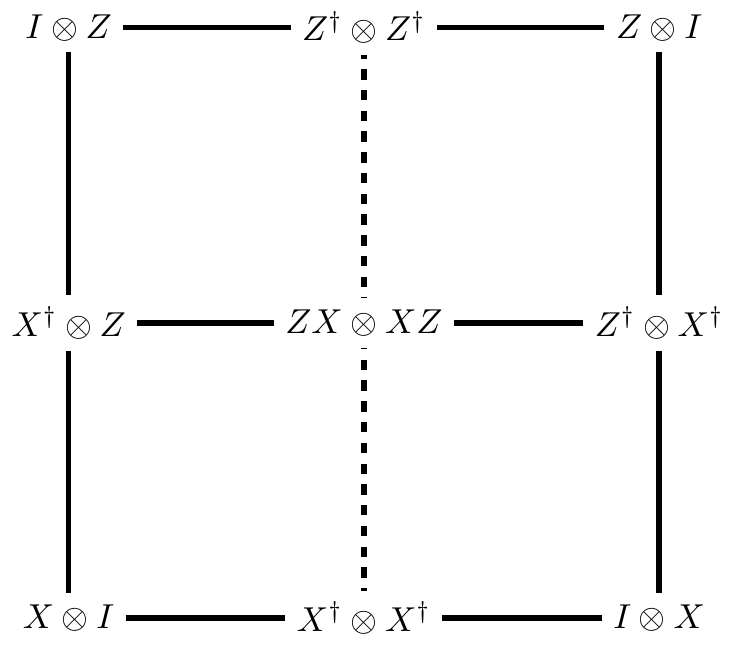}}
		 &\resizebox{0.5\textwidth}{!}{\includegraphics{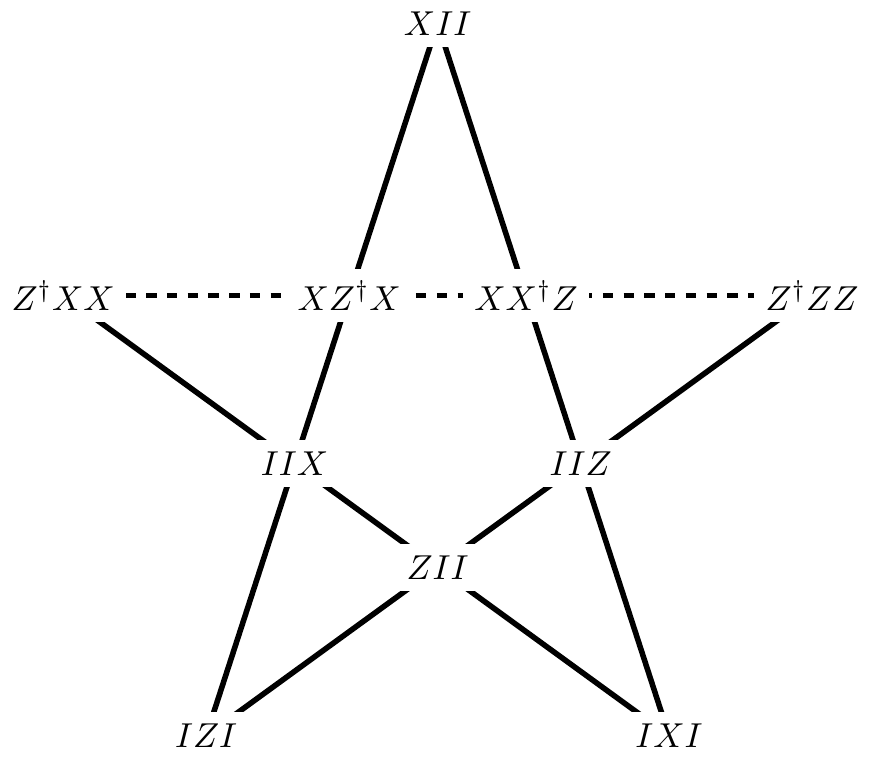}}
	\end{tabular}
	\caption{On the right are the operators of the Magic Pentagram. These are operators on $(\C^2)^{\otimes 3}$; the tensor product symbols are omitted. Across any line, the four operators commute. Across any solid line, the alternating product $AB^\dagger CD^\dagger$ of the four operators is identity. Across the dashed line, the alternating product (computed from left to right) is $-1$ times identity.}
	\label{fig:magic-pentagram-operators-introduction}
\end{figure}

How special is this ``algebraic coincidence'' and the corresponding game? We can refine this question into a few sub-questions.
\begin{question}\label{question:is-magic-square-setup-unique?}
	Are there other configurations of operators with similarly interesting algebraic relations? Do they also give rise to pseudotelepathy games?
\end{question}

Arkhipov \cite{arkhipov2012extending} gives a partial answer to this question by introducing the framework of \emph{magic games}. Starting from any finite graph, one can construct a magic game similar to the Magic Square game. Arkhipov finds that there are exactly two interesting such magic games: the Magic Square (derived from $K_{3,3}$, the complete bipartite graph with parts of size $3$) and the Magic Pentagram (derived from $K_5$, the complete graph on $5$ vertices). 
Subsequently, Cleve and Mittal \cite{cleve2014characterization} introduced \emph{linear constraint system games} (hereafter referred to as \emph{LCS games}), which can be thought of as a generalization of Arkhipov's magic games from graphs to hypergraphs. Moreover, they proved that any linear constraint game exhibiting pseudotelepathy requires a maximally entangled state to do so. Their result also suggested that there may be other interesting linear constraint games to find. Indeed, Ji showed \cite{ji2013binary} that there are families of linear constraint games requiring arbitrarily large amounts of entanglement to win.

\begin{question}\label{question:is-magic-square-solution-unique?}
	The easiest proof of correctness for a Magic Square game strategy uses the fact the observables measured by the players satisfy the appropriate algebraic relations. Is this a necessary feature of any winning strategy?
\end{question}

In order to answer questions like this, Cleve, Liu, and Slofstra \cite{cleve2016perfect} associate to each LCS game an algebraic invariant called the \emph{solution group} (see Section \ref{sec:linear-constraint-games} for a precise definition), and they relate the winnability of the game to the representation theory of the group. In particular, they show that any quantum strategy winning the game with probability $1$ corresponds to a representation of the solution group---in other words, that the observables in a winning strategy must satisfy the algebraic relations captured by the group. This reduces the problem of finding LCS games with interesting properties to the problem of finding finitely-presented groups with analogous representation-theoretic properties, while maintaining combinatorial control over their presentations. Slofstra used this idea together with techniques from combinatorial group theory to resolve the weak Tsirelson problem \cite{slofstra2016tsirelson}. By including some techniques from the stability theory of group representations, he improved this result to show that the set of quantum correlations is not closed \cite{slofstra2017set}. In words, he constructed an LCS game which can be won with probability arbitrarily close to $1$ with finite-dimensional quantum strategies, but cannot be won with probability $1$ by any finite (or infinite) dimensional quantum strategy (in the tensor product model).

\begin{question}\label{question:is-magic-square-game-rigid}
	We introduced the magic square operators and then noticed that they satisfy certain algebraic relations. Do these algebraic relations characterize this set of operators? Could we have picked a square of nine different operators, possibly of much larger dimension, satisfying the same relations?
\end{question}

This question was resolved by Wu et.\ al \cite{wu2016device}. They showed that any operators satisfying the same algebraic relations as those in the Magic Square game are equivalent to those in Figure \ref{fig:magic-square-operators-introduction}, up to local isometry and tensoring with identity. This is sometimes referred to as \emph{rigidity} of the Magic Square game. Moreover, they showed that the Magic Square game is \emph{robustly rigid}, or \emph{robustly self-testing}. Informally, we say that a game is \emph{rigid with $O(\d(\e))$-robustness and perfect completeness} if whenever Alice and Bob win the game with probability at least $1-\e$, then there is a local isometry taking their state and measurement operators $O(\d(\e))$-close to an ideal strategy, possibly tensored with identity.


\paragraph{Our contributions}
Our main result is a robust self-testing theorem which applies to any linear constraint game with sufficiently nice solution group; this is stated as Theorem \ref{thm:robust-self-testing}. Our proof employs the machinery of \cite{cleve2016perfect} and \cite{slofstra2016tsirelson}. We apply the general self-testing result to conclude robust rigidity for the Magic Square game, the Magic Pentagram game, and for a certain repeated product of these two games. We informally state these results now. We emphasize that these results are not new, but it is new that we can achieve all three as simple corollaries of the main self-testing machinery. The general result holds for LCS games mod $d$, but the only nontrivial application we have is for LCS games mod $2$.

\begin{thm}[Informal, c.f. Definition \ref{defn: robust-self-testing} and Theorem \ref{thm:robust-self-testing-square}]\label{thm:informal-magic-square}
	The Magic Square game is rigid with $O(\e)$-robustness and perfect completeness. The ideal state is two copies of the maximally entangled state of local dimension $2$, and the ideal measurements are onto the eigenbases of the operators in Figure \ref{fig:magic-square-operators-introduction}.
\end{thm}
\noindent
This recovers the same asymptotics as in \cite{wu2016device}. Note that they state their robustness as $O(\sqrt\e)$; this is because they use the Euclidean distance $\norm{\ket\psi - \ket{\text{ideal}}}$, while we use the trace-norm distance of density operators $\norm{\rho - \rho_{\text{ideal}}}_1$. 

\begin{thm}[Informal, c.f. Theorem \ref{thm:robust-self-testing-pentagram}]\label{thm:informal-magic-pentagram}
	The Magic Pentagram game (see Figure \ref{fig:magic-pentagram-equations} for a definition) is rigid with $O(\e)$-robustness and perfect completeness. The ideal state is three copies of the maximally entangled state of local dimension $2$, and the ideal measurements are onto the eigenbases of the operators in Figure \ref{fig:magic-pentagram-operators-introduction}.
\end{thm}
\noindent
This recovers the same asymptotics as \cite{kalev2017rigidity}, up to translation between distance measures.

Applying our general self-testing theorem to the LCS game product
\footnote{This is defined precisely in Definition \ref{def:lcs-game-product}. This is similar to but not the same as playing multiple copies of the game in parallel.} 
of many copies of the Magic Square game yields a self-test for $n$ maximally entangled pairs of qubits and associated $n$-qubit Pauli measurements. 

\begin{thm}[Informal, c.f. Theorem \ref{thm:self-testing-pauli-LCS}]\label{thm:informal-n-qudit-pauli-self-testing}
	For any $n\geq 2$, there is a linear constraint system game with $O(n^2)$ variables, $O(n^2)$ equations, and $\Z_2$-valued answers which is rigid with $O(n^{10}\e)$-robustness and perfect completeness.
	The ideal state is $n$ copies of the maximally entangled state of local dimension $2$. The ideal measurements are onto the eigenbases of certain Pauli operators of weight at most $5$. 
\end{thm}
The polynomial scaling in $n$ is similar to previous works that self-test $n$ pairs of maximally entangled qubits via copies of the magic square game \cite{coladan17, CN16}, but we obtain our bound by a simple application of our general self-testing theorem. 

\tocless\subsection{Proof Overview}
We step away from games and back towards algebra to discuss Question \ref{question:is-magic-square-game-rigid}. Suppose we wanted a $3\times 3$ square of operators, call them $e_1$ through $e_9$, with the same relations as those in the Magic Square. Concretely, those relations are as follows:\\
\begin{tabular}
	{m{0.2\textwidth}m{0.6\textwidth}}

	  \resizebox{0.2\textwidth}{!}{\includegraphics{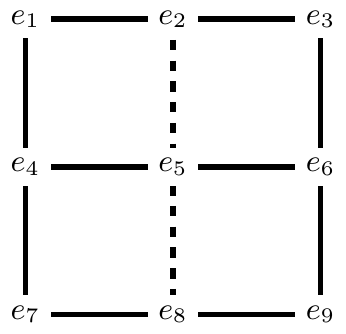}}
	& \begin{itemize}
		\item The linear constraints of each row and column: $e_2e_5e_8 = -I$,
		$e_1e_2e_3 = e_4e_5e_6 = e_7e_8e_9 = e_1e_4e_7 = e_3e_6e_9 = I$. 
		\item Commutation between operators in the same row or column: $e_1e_2 = e_2e_1$, $e_1e_3 = e_3e_1$, $e_2e_3=e_3e_2$, \ldots,
		$e_3e_6 = e_6e_3$, $e_3e_9=e_9e_3$, $e_6e_9=e_9e_6$.
		\item Associated unitaries have $2$ eigenspaces: $e_i^2= I$ for all $i$.
	\end{itemize}
\end{tabular}

These are just multiplicative equations. We can define an abstract \emph{group} whose generators are the $e_i$ and whose relations are those above. This is, in a sense, the most general object satisfying the Magic Square relations. More precisely, any square of operators satisfying these relations is a \emph{representation} of this group. 
It's not hard to compute that this group is isomorphic to the group of two-qubit Pauli matrices, a friendly object. (This is proven as Proposition \ref{prop:solution-group-square}.) 
This group is the \emph{solution group} of the magic square game. We study the representation theory of the solution group of the magic square game, and we apply \cite{cleve2016perfect} to deduce the exact version of our self-testing Theorem \ref{thm:informal-magic-square} (i.e. the $\e=0$ case). One might view our proof via solution groups as an ``algebrization'' of the proof in \cite{wu2016device}.


In order to get the robustness bounds, we must work significantly harder. Tracing through the proof of the main result of \cite{cleve2016perfect}, a finite number of equalities between various operators are applied. Knowing how many equalities are needed, one can get quantitative robustness bounds by replacing these with approximate equalities and then applying finitely many triangle inequalities. In order to carry out this counting argument, we introduce a measure of complexity for linear constraint games and then upper bound the robustness parameter as a function of this complexity.

This complexity measure depends on the use of van Kampen diagrams, a graphical proof system for equations in finitely-presented groups. Van Kampen diagrams are introduced in \S \ref{subsection:van-kampen-diagrams}. Several of our main proofs reduce to reasoning visually about the existence of such diagrams.
Manipulating the chains of approximate equalities requires us to develop familiarity with a notion of state-dependent distance; this is done in \S \ref{subsection:state-dependent-distance}.


\tocless\subsection{Organization}

In Section \ref{sec:preliminaries}, we establish basic tools that we'll use without comment in the main body of the paper. In Section \ref{sec:linear-constraint-games}, we give the definition and basic properties of linear constraint games over $\Z_d$. Those familiar with linear constraint games over $\Z_2$ will not find surprises here. In Section \ref{sec:self-testing}, we establish our measure of LCS game complexity and prove our general robust self-testing result, Theorem \ref{thm:robust-self-testing}. We warm up first by proving the $\e=0$ case of the theorem in \S\ref{subsection:exact-self-testing}. We then introduce two new ingredients to obtain a robust version. In \S\ref{subsection:stability-lemma}, we give a proof by Vidick \cite{vidick2017approx} of a so-called \emph{stability theorem for representations of finite groups} (Lemma \ref{lemma:vidick-gowers-hatami}). Such a result first appeared in \cite{gowers2015inverse}. In \S\ref{subsection:quant-van-kampen}, we show how to extract quantitative bounds on lengths of proofs from van Kampen diagrams, and in \S\ref{subsection:robust-self-testing}, we complete the proof of the general case. In Section \ref{sec:specific-games}, we specialize our robust self-testing theorem to the case of the Magic Square and Magic Pentagram games, establishing Theorems \ref{thm:robust-self-testing-square} and \ref{thm:robust-self-testing-pentagram}. We go on to exhibit a way to compose LCS games in parallel while controlling the growth of the complexity, proving Theorem \ref{thm:self-testing-pauli-LCS}.

\section*{Acknowledgements}

An early version of Theorem \ref{thm:informal-magic-square} used a more complicated linear constraint game. We thank William Slofstra for pointing out that the same analysis goes through for the Magic Square.

The arxiv version 1 of this paper falsely claimed that in a certain $3\times 3$ square of operators, every pair of operators sharing a row or column commute. We thank Richard Cleve, Nadish De Silva and Joel Wallman for pointing out that one pair of them did not. 
We thank Richard Cleve and Joel Wallman for sharing with the authors a proof that the magic square game mod $d$ for $d\neq 2$ is not a pseudotelepathy game. More details about this impossibility are provided in section \ref{sec:impossibility}.

We thank 
William Ballinger,
William Hoza,
Jenish Mehta,
Chinmay Nirkhe,
William Slofstra,
Thomas Vidick,
Matthew Weidner,
and
Felix Weilacher
for helpful discussions.
We thank Martino Lupini for pointing us to reference \cite{de2017operator} and Scott Aaronson for pointing us to reference \cite{cleve2004consequences}.

We thank
Arjun Bose,
Chinmay Nirkhe, 
and 
Thomas Vidick
for helpful comments on preliminary drafts of the paper.

We thank Thomas Vidick for various forms of guidance throughout the project. A.C. was supported by AFOSR YIP award number FA9550-16-1-0495.
J.S. was supported by NSF CAREER Grant CCF-1553477 and the Mellon Mays Undergraduate Fellowship. 
Part of this work was completed while J.S. was visiting UT Austin. 

\tableofcontents

\section{Preliminaries}\label{sec:preliminaries}
We assume a basic familiarity with quantum information, see e.g.\ \cite{nielsen2002quantum}. We introduce all necessary notions from the fields of nonlocal games and self-testing, but we don't reproduce all of the proofs.

\subsection{Notation}
We write $[n]$ to refer to the finite set $\set{1,\ldots, n}$ with $n$ elements. We write $[A,B]$ for $ABA\1B\1$, the group commutator of $A$ and $B$.
We use the Dirac delta notation 
\begin{equation}
	\delta_{x,y} :=\begin{cases}
		1, &\text{ if }x=y\\
		0, &\text{ otherwise}
	\end{cases}.
\end{equation}
$\mathbf H$ will refer to a hypergraph, while 
$\m H$ will refer to a Hilbert space.
$\m L(\m H)$ is the space of linear operators on the Hilbert space $\m H$. 
$\rho$ will always refer to a state on a Hilbert space, while $\s$ and $\tau$ are reserved for group representations. $\w_d:= e^{2\pi i /d}$ will always refer to the same $d\th$ root of unity. 
When we have multiple Hilbert spaces, we label them with subscripts, e.g.\ as $\m H_A, \m H_B$. In that case, we may also put subscripts on operators and states to indicate which Hilbert spaces they are associated with. 
When the Hilbert space is clear from context, $I$ refers to the identity operator on that space. 
$I_d$ will always refer to the identity operator on $\C^d$. 
$\eprp:=\frac{1}{\sqrt d}\sum_i^d\ket{ii}$ refers to the maximally entangled state on $\C^d\otimes \C^d$. We use the shorthand $\Tr_\rho(X) = \Tr X\rho$. 
We use the following notion of state-dependent distance, which we'll recall, and prove properties of, in \S\ref{subsection:state-dependent-distance}.
\begin{equation}\label{eq:state-dependent-distance-definition}
	\drho{\rho}{X}{Y} = \sqrt{\Tr_\rho(X-Y)^\dagger(X-Y)}.
\end{equation}
$\norm{X}_p$ denotes the $p$-norm of $X$, i.e.\ $\norm X_1 = \Tr\sqrt{XX^\dagger}$ and $\norm X_2 = \sqrt{\Tr XX^\dagger}$.

\subsection{Nonlocal games}

\begin{definition}[Nonlocal game]
	For our purposes, a nonlocal game $G$ is a tuple $(A,B,X,Y,V,\pi)$, where $A,B,X,Y$ are finite sets of answers and questions for Alice and Bob, $\pi:X\times Y\to [0,1]$ is a probability distribution over questions, and $V:A\times B \times X \times Y\to \pair01$ is \emph{the win condition}. 
\end{definition}

\begin{definition}[Strategies for nonlocal games]
	If $G$ is a nonlocal game, then a strategy for $G$ is a probability distribution $p:A\times B\times X\times Y\to [0,1]$.	The \emph{value} or \emph{winning probability} of a strategy is given by 
	\begin{equation}
		\w(G;p) := \sum_{a,b,x,y}\pi(x,y)p(a,b\|x,y)V(a,b,x,y).
	\end{equation}
	If the value is equal to $1$, we say that the strategy is \emph{perfect}. If the probability distribution is separable, i.e.\ $p(a,b\|x,y) = \sum \a_i p_i(a\|x)q_i(b\|y)$ for some probability distributions $\set{p_i}, \set{q_i}$, then we say that the strategy is \emph{local}.
\end{definition}
	We think of a local strategy as being implemented by using only the resource of public shared randomness. Alternatively, the local strategies are the strategies which are implementable by spacelike-separated parties in a hidden variable theory of physics.
\begin{definition}[Quantum strategies, projective measurement version]
	We say that a strategy $p:A\times B\times X\times Y\to [0,1]$ is \emph{quantum of local dimension $d$} if there exist projective measurements $\set{\set{A_x^a}_a}_x, \set{\set{B_y^b}_b}_y$ on $\C^d$ and a state $\rho \in \m L(\C^d\otimes \C^d)$ such that 
	\begin{equation}
		p(a,b\|x,y) = \Tr_\rho(A_x^a\otimes B_y^b)
	\end{equation}
	 (By \emph{projective measurement} we mean that for all $x,y,a,b$ we have $(A_x^a)^2 = A_x^a = (A_x^a)\dagg, (B_y^b)^2 = B^b_y= (B^b_y)\dagg$, and for all $x,y$, we have $\sum_a A_x^a = I = \sum_bB_y^b$.)
	 \\\noindent We say that a strategy is \emph{quantum} if it is quantum of local dimension $d$ for some $d$.
\end{definition}
We denote by $\w_*(G)$ the \emph{optimal quantum value} of $G$, i.e. the supremum over all quantum strategies of the winning probability. If the value of a strategy is $\w_*(G)$, we say that the strategy is \emph{ideal}. For quantum strategies, we use the term \emph{strategy} to refer interchangeably to the probability distribution or to the state and measurement operators producing it. 

	

	

\begin{definition}[Self-testing]
We say that a non-local game $G$ \emph{self-tests} a quantum strategy $S = (\set{\set{A_x^a}_a}_x, \set{\set{B_y^b}_b}_y, \ket{\Psi}) $ if any quantum strategy $S'$ that achieves the optimal quantum winning probability $w_*$ is equivalent up to local isometry to $S$.
\end{definition}
By \emph{local isometry} we mean a channel $\Phi: \m L(\m H_A\otimes \m H_B) \to \m L(\m H_A'\otimes \m H_B')$ which factors as $\Phi(\rho) = (V_A\otimes V_B) \rho(V_A\otimes V_B)^\dagger$, where $V_A: \m H_A\to \m H_A', V_B: \m H_B\to \m H_B'$ are isometries.

\begin{definition}[Robustness of self-tests] We say that a non-local game $G$ is $\left(\eps,\delta(\eps)\right)$-rigid if it self-tests a strategy $S = (\set{\set{A_x^a}_a}_x, \set{\set{B_y^b}_b}_y, \ket{\Psi})$, and, moreover, for any quantum strategy $\tilde{S} = \big(\set{\set{\tilde{A}_x^{a}}_a}_x, \set{\set{\tilde{B}_y^{b}}_b}_y, \rho \big)$ that achieves a winning probability of $w_*(G) - \eps$, there exists a local isometry $\Phi$ such that 
\begin{equation}
\norm {\Phi(\tilde{A}_x^{a}\otimes \tilde{B}_y^{b}\ \rho \tilde{A}_x^{a}\otimes \tilde{B}_y^{b}) - \big(A_x^a\otimes B_y^b \ket{\Psi}\bra{\Psi} A_x^a\otimes B_y^b \big)\otimes \rho_{\text{extra}}}_{2} \leq \delta(\eps)
\end{equation}
where $\rho_{\text{extra}}$ is some auxiliary state, and $\delta(\eps)$ is a function that goes to zero with $\eps$.
\end{definition}
\subsection{Groups}

We work with several groups via their presentations. For the basic definitions of group, quotient group, etc.\ see any abstract algebra text, e.g.\ \cite{dummit2004abstract}.
\begin{definition}
	Let $S$ be a set of letters. We denote by $\m F(S)$ the \emph{free group on $S$}. As a set, $\m F(S)$ consists of all finite words made from $\set{s,s\1\;s\in S}$ such that no $ss\1$ or $s\1s$ appears as a substring for any $s$. The group law is given by concatenation and cancellation.
\end{definition}

\begin{definition}[Group presentation]
	Let $S$ be finite and $R$ a finite subset of $\m F(S)$. Then $G = \Braket {S:R}$ is the \emph{finitely presented group} generated by $S$ with relations from $R$. Explicitly, 
	$G = \m F(S)/\braket R$, where $/$ is used to denote the quotient of groups, and $\braket R$ denotes the subgroup generated by $R$. 
	We say that an equation $w = w'$ is \emph{witnessed by $R$} if $w'w^{-1}$ (or some cyclic permutation thereof) is a member of $R$. 
\end{definition}
We emphasize that in this work, we sometimes distinguish between two presentations of the same group. If $G = \braket{S:R}, G' = \braket{S':R'}$ are two finitely presented groups, we reserve equality for the case $S = S'$ and $R = R'$, and in this case we'll say $G = G'$. We'll say that $G \cong G'$ if there is a group isomorphism between them.

\begin{definition}\label{definition:canonical-form}
	Let $G = \braket{S:R}$ be a finitely presented group and $\can: G\to \m F(S)$ be an injective function. We say that $\can$ is a \emph{canonical form} for $G$ if the induced map $\bar \can: G\to \m F(S)/\braket{R}$ is an isomorphism. In other words, we require that $\can(g)\can(h) = \can(gh)$ as elements of $G$, but not as strings.
\end{definition}

Now and throughout the paper, for a group $G$, we'll denote by $1$ its identity, and we'll let $[a,b] := aba\1b\1$ denote the commutator of $a$ and $b$. The group presentations of interest in this paper will take a special form extending the ``groups presented over $\Z_2$'' from \cite{slofstra2016tsirelson}. 

\begin{definition}[Group presentation over $\Z_d$]
	Let $d\in \N$ and let $\Z_d = \Braket{J:J^d}$ be the finite cyclic group of order $d$. A \emph{group presented over $\Z_d$} is a group $G = \Braket{S':R'}$, where $S'$ contains a distinguished element $J$ and $R'$ contains relations $[s,J]$ and $s^d$ for all $s\in S$.
	
	For convenience, we introduce notation that suppresses the standard generator $J$ and the standard relations.
	\begin{equation}
		G= \Braket{S:R}_{\Z_d} = \Braket{
		S\cup\set{ J } : R \cup \set{ s^d,J^d,[s,J] \; s\in S }
		}
	\end{equation}
\end{definition}
In the group representations of interest, we'll have $J \mapsto e^{2\pi i/d}$---we should always just think of $J$ as a $d\th$ root of unity. We'll think of relations of the form $J\1[a,b]$ as ``twisted commutation'' relations, since they enforce the equation $aba\1b\1 = e^{2\pi i/d}$.

\begin{example}\label{example:P_d}
	The Pauli group on one $d$-dimensional qudit can be presented as a group over $\Z_{d}$. 
	\begin{equation}
	\m P_d^{\otimes 1} = \Braket{x,z: J[x,z]}_{\Z_{d}}
	\end{equation}
\end{example}

\subsection{Group pictures}\label{subsection:van-kampen-diagrams}

Suppose we have a finitely presented group $G = \braket{S:R}$ and a word $w\in \m F(S)$ such that $w = 1$ in $G$. 
Then by definition, there is a way to prove that $w = 1$ using the relations from $R$. How complicated can such a proof get? Group pictures give us a way to deal with these proofs graphically, rather than by writing long strings of equations.
In particular, we will use group pictures to get quantitative bounds on the length of such proofs. (For a more mathematically rigorous treatment of group pictures, see \cite{slofstra2016tsirelson}. These are dual to what are usually known as van Kampen diagrams.) 

\begin{definition}[Group picture]
	Let $G = \Braket{S:R}_{\Z_d}$ be a group presented over $\Z_d$. A $G$-picture is a labeled drawing of a planar directed graph in the disk. Some vertices may lie on the boundary. The vertices that do not lie on the boundary are referred to as \emph{interior vertices}. A $G$-picture is \emph{valid} if the following conditions hold:
	\begin{itemize}
		\item Each interior vertex is labeled with a power of $J$. (We omit the identity label.)
		\item Each edge is labeled with a generator from $S$.
		\item At each interior vertex $v$, 
		the clockwise product of the edge labels (an edge labeled $s$ should be interpreted as $s$ if it is outgoing and as $s\1$ if it is ingoing) is equal to the vertex label, as witnessed by $R$. (Since the values of the labels are in the center of the group, it doesn't matter where you choose to start the word.)
	\end{itemize}
\end{definition}
Note that the validity of a $G$-picture depends on the presentation of $G$. Pictures cannot be associated directly with abstract groups.

If we collapse the boundary of the disk to a point (``the point at infinity''), then the picture becomes an embedding of a planar graph on the sphere (see Figure \ref{fig:sphere}). The following is a kind of ``Stoke's theorem'' for group pictures, which tells us that the relation encoded at the point at infinity is always valid.

\begin{figure}
	\begin{center}
	\begin{tabular}{m{0.5\textwidth}m{0.5\textwidth}}
	    \resizebox{0.5\textwidth}{!}{\includegraphics{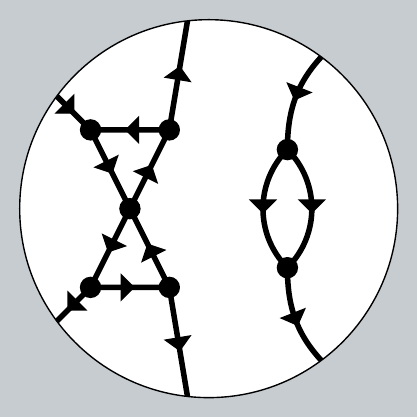}}
	    &
	    \resizebox{0.5\textwidth}{!}{\includegraphics{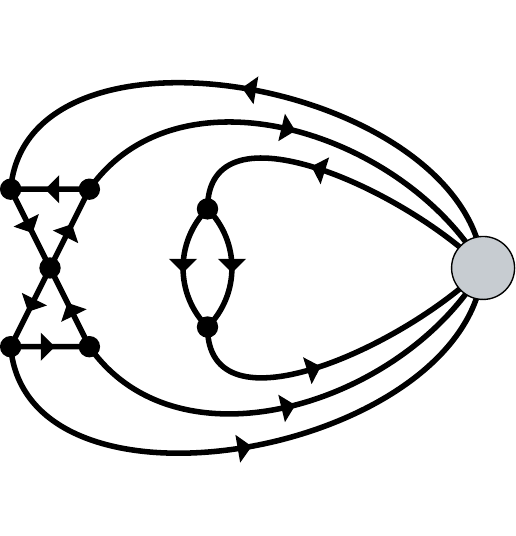}}
	\end{tabular}
    \end{center}
    \caption{This is a directed version of Figure 3 from \cite{slofstra2016tsirelson}
    .
    The interior vertices are drawn with dots, while the edge labels and the non-interior vertices are suppressed. 
    }
    \label{fig:sphere}
\end{figure}
\begin{definition}
	Suppose $\m P$ is a $G$-picture. The \emph{boundary word} $w$ is the product of the edge labels of the edges incident on the boundary of $\m P$, in clockwise order.
\end{definition}

\begin{lemma}[van Kampen]
\label{lemma:van-kampen}
	Suppose $\m P$ is a valid $G$-picture with boundary word $w$. Let $J^a$ be the product of the labels of the vertices in $\m P$. Then $w=J^a$ is a valid relation in $G$. Moreover, we say that the relation $w=J^a$ is \emph{witnessed} by the $G$-picture $\m P$.
\end{lemma}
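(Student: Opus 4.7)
The plan is to prove this by induction on the number $n$ of interior vertices of $\mathcal{P}$.

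For the base case $n = 0$, no interior vertex labels contribute, so $a = 0$ and I want to show that the boundary word $w$ is freely trivial. I would carry out a secondary induction on the number of edges. If there are no edges, then $w$ is the empty word. Otherwise, pick any edge $e$ with label $s$; since both its endpoints are boundary vertices and the picture is planar, tracing the boundary of the disk clockwise encounters $e$ exactly twice, once in each direction, enclosing some subword $u$. Thus $w$ has the form $\alpha s u s^{-1} \beta$ up to cyclic rotation. Removing $e$ yields a valid picture with strictly fewer edges and whose boundary word is $\alpha u \beta$; by induction this is freely trivial, so $w$ is as well.

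For the inductive step with $n \geq 1$, I would choose an interior vertex $v$ lying on the boundary of the outer (unbounded) face of $\mathcal{P}$, which exists by planarity and connectedness arguments. By validity at $v$, the clockwise product $w_v$ of the labels of edges incident to $v$ equals $J^{b_v}$ as witnessed by a relator in $R$. The key operation is to \emph{absorb $v$ into the boundary}: delete $v$ and its incident edges, then reroute the disk's boundary so that it takes a small clockwise detour through the region previously occupied by $v$, passing alongside each deleted edge in its natural planar position. The resulting picture $\mathcal{P}'$ has $n-1$ interior vertices with total label exponent $a - b_v$, and its boundary word $w'$ is obtained from $w$ by replacing one occurrence of $J^{b_v}$ (arising from where $v$ met the outer face) with the word $w_v$. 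By the inductive hypothesis, $w' = J^{a - b_v}$ in $G$; applying the relation $w_v = J^{b_v}$ in reverse recovers $w = J^a$.

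The main obstacle is formalizing the topological bookkeeping for the two planar surgeries --- edge deletion in the base case and vertex absorption in the inductive step --- carefully enough to justify that the boundary word transforms in exactly the claimed way. This relies essentially on planarity and the clockwise orientation convention to pair up edge occurrences with consistent signs and to insert $w_v$ at the right location. These details, while intuitively clear, are what elevate the lemma from a triviality to a genuine combinatorial version of Stokes' theorem.
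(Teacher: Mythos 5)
Your overall strategy is sound and is essentially the mirror image of the paper's: the paper proves the (quantitative) statement in \S\ref{subsection:quant-van-kampen} by starting from a single interior vertex and \emph{growing} a bubble one vertex at a time (Algorithm \ref{algorithm:quant-van-kampen}), whereas you \emph{shrink} the picture by absorbing one outer-face vertex at a time into the boundary. Both are inductions on the number of interior vertices, both spend exactly one application of the relator per vertex, and both lean on the centrality of $\braket J$ to permit cyclic permutation; your version is cleaner for the qualitative lemma but would need the paper's pointer bookkeeping to recover the quantitative bounds of Proposition \ref{lem:quant-van-kampen}.

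Two details in your write-up need repair. First, in the inductive step you should \emph{not} delete the edges incident to $v$: if such an edge joins $v$ to another interior vertex $u$, deleting it destroys the validity of the relation read at $u$. The correct surgery is to cut along an arc from the disk boundary to $v$ (possible since $v$ lies on the outer face), which opens $v$ onto the boundary while every incident edge survives with its far endpoint untouched; the new boundary word is then the old one with a cyclic rotation of $w_v^{\pm1}$ \emph{inserted} at the cut point --- not ``an occurrence of $J^{b_v}$ replaced by $w_v$,'' since $J^{b_v}$ need not literally appear in $w$. Because $w_v = J^{b_v}$ in $G$ and $J$ is central, this insertion changes the value of the boundary word by exactly $J^{\pm b_v}$, matching the change in the vertex-label sum, and the induction closes. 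Second, in the base case the inference ``$\alpha u\beta$ freely trivial $\Rightarrow \alpha s u s^{-1}\beta$ freely trivial'' is false in general (take $u=t$, $\beta=t^{-1}$); you must instead use planarity to note that $u$ is the full boundary word of the subpicture enclosed by $e$, apply the induction to conclude $u$ itself is freely trivial, and then cancel $ss^{-1}$ --- or simply choose $e$ innermost so that $u$ is empty. With these corrections the argument is a complete proof of Lemma \ref{lemma:van-kampen}.
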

The proof is elementary and relies on the fact that the subgroup $\braket J$ is abelian and central, so that cyclic permutations of relations are valid relations.  
By counting what goes on at each step in the induction of a proof of the above lemma, one can extract a quantitative version. This is stated and proved in \S \ref{subsection:quant-van-kampen}. 

\begin{example}
	Recall the group $\paulin 1$ from Example \ref{example:P_d}. It's easy to see that $(xz)^d = 1$ in this group. In Figure \ref{fig:xz-group-picture}, we give two proofs of this fact, for the case $d =3$. The examples are chosen to illustrate that shorter proofs are more natural than longer proofs in the group picture framework.

\begin{figure}[h]
	\begin{tabular}
	{m{0.2\textwidth}m{0.25\textwidth}m{0.25\textwidth}m{0.2\textwidth}}
		$\begin{aligned}[t]
				  &(zx)z(xz)x\\
				  &=(Jxz)z(J\1zx)x\\
				  &=x(zzz)xx\\
				  &=(xxx)\\
				  &=1
				\end{aligned}$
		& \resizebox{0.25\textwidth}{!}{\includegraphics{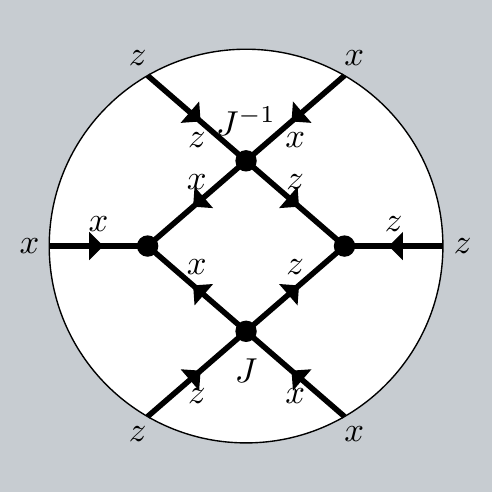}}
		& \resizebox{0.25\textwidth}{!}{\includegraphics{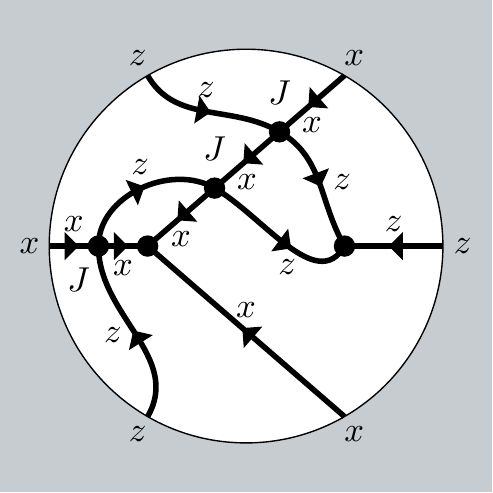}}
		&$\begin{aligned}[t]
				  &(zx)zxzx\\
				  &=Jxz(zx)zx\\
				  &=J^2x(zx)zzx\\
				  &=xx(zzz)x\\
				  &=(xxx)\\
				  &=1
				\end{aligned}$
		\end{tabular}
	\caption{The first picture uses a minimal number of relations, and corresponds (in an imprecise sense) to the equation manipulations on the left. The second picture corresponds to the equation manipulations on the right, in which each $z$ is commuted all the way to the end of the string.
}
	\label{fig:xz-group-picture}
\end{figure}
\end{example}

\subsection{Representation theory of finite groups}

We'll study groups through their representations. We collect here some basic facts about the representation theory of finite groups. For exposition and proofs, see e.g. \cite{dummit2004abstract}. Throughout, $G$ will be a finite group. It should be noted that some of these facts are not true of infinite groups. 


\begin{definition}
	A $d$-dimensional \emph{representation} of $G$ is a homomorphism from $G$ to the group of invertible linear operators on $\C^d$.
	A representation is \emph{irreducible} if it cannot be decomposed as a direct sum of two representations, each of positive dimension.
	A representation is \emph{trivial} if its image is $\set I$, where $I$ is the identity matrix.
	The \emph{character} of a representation $\s$ is the function defined by $g\mapsto \Tr(\s(g))$. 
	Two representations $\rho_1$ and $\rho_2$ are \emph{equivalent} if there is a unitary $U$ such that for all $g$, $U\rho_1(g) U^\dagger = \rho_2(g)$.
\end{definition}
Notice that a $1$-dimensional representation and its character are the same function, and that $1$-dimensional representations are always irreducible. We sometimes write ``irrep'' for ``irreducible representation.'' The next fact allows us to check equivalence of representations algebraically.

\begin{fact}
	$\rho_1$ is equivalent to $\rho_2$ iff they have the same character. 
\end{fact}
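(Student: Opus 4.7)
The forward direction is immediate: if $U\rho_1(g)U^\dagger = \rho_2(g)$ for all $g$, then cyclicity of the trace gives $\Tr \rho_2(g) = \Tr(U\rho_1(g)U^\dagger) = \Tr \rho_1(g)$, so the characters agree. The substance is in the converse, and the plan is to reduce to a statement about multiplicities of irreducible constituents.

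First I would invoke Maschke's theorem, which says that any finite-dimensional representation $\rho$ of a finite group $G$ decomposes (uniquely up to equivalence) as a direct sum $\rho \cong \bigoplus_\tau \tau^{\oplus m_\tau(\rho)}$ ranging over (equivalence classes of) irreducible representations $\tau$, with nonnegative integer multiplicities $m_\tau(\rho)$. Two representations are equivalent iff they have the same multiplicity vector $(m_\tau)_\tau$, so it suffices to show that the character $\chi_\rho = \Tr\circ\rho$ determines the multiplicities. Since characters are additive under direct sums, $\chi_\rho = \sum_\tau m_\tau(\rho)\, \chi_\tau$.

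The key tool is the Schur orthogonality relation for characters: with the inner product $\langle \chi, \psi\rangle := \frac{1}{|G|}\sum_{g\in G} \chi(g)\overline{\psi(g)}$, the characters of inequivalent irreducible representations are orthonormal, i.e.\ $\langle \chi_\tau, \chi_{\tau'}\rangle = \delta_{\tau,\tau'}$. I would prove this by the standard averaging argument: for any linear map $T$, the operator $\tilde T := \frac{1}{|G|}\sum_g \tau(g) T \tau'(g)^{-1}$ intertwines $\tau$ and $\tau'$, and Schur's lemma (applied to irreducibles) forces $\tilde T = 0$ when $\tau\not\cong\tau'$ and $\tilde T = \frac{\Tr T}{\dim \tau} I$ when $\tau = \tau'$; taking $T$ to range over matrix units and tracing gives the orthogonality relation.

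Given Schur orthogonality, the multiplicity $m_\tau(\rho) = \langle \chi_\rho, \chi_\tau\rangle$ is determined by $\chi_\rho$ alone. Hence if $\chi_{\rho_1} = \chi_{\rho_2}$, then $m_\tau(\rho_1) = m_\tau(\rho_2)$ for every irreducible $\tau$, and the two representations are equivalent by Maschke. The main technical obstacle is the Schur orthogonality input; everything else is bookkeeping with the decomposition given by Maschke's theorem. Since this is a textbook fact cited without further use inside the proof, I would likely just cite \cite{dummit2004abstract} for the two ingredients rather than reproving them.
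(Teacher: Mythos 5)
Your argument is correct and is exactly the standard textbook proof: trace cyclicity for the forward direction, and Maschke plus Schur orthogonality of irreducible characters to recover the multiplicities from the character for the converse. The paper does not prove this fact at all — it is stated in the preliminaries with a pointer to \cite{dummit2004abstract}, where the argument you outline is the one given — so there is nothing to compare beyond noting that your instinct to simply cite the reference matches what the paper actually does.
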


The following is immediate:
\begin{lemma}\label{fact:character-convex-combination}
	Let $\s = \bigoplus_i \s_i$ be a direct sum decomposition of $\s$ into irreducibles. Let $\circ$ denote composition of maps, and let $\chi = \Tr\circ \s, \chi_i = \Tr\circ\s_i$ be the characters corresponding to the representations $\s$. Then $\chi = \sum_i \chi_i$. 

	Furthermore, define $\tilde \chi = \frac1{\dim \s}\chi$ and $\tilde\chi_i = \frac1{\dim \s_i} \chi_i$ as the \emph{normalized characters} of $\s,\s_i$. 
	Then the normalized character of $\s$ is a convex combination of the normalized characters of $\s_i$. 
	\begin{equation}
		\tilde \chi = \sum_i \frac{\dim \s_i}{\dim \s}\tilde\chi_i.
	\end{equation}
\end{lemma}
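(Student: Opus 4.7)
The plan is to observe that both claims follow immediately from the block-diagonal nature of a direct sum of representations and the linearity of the trace. There is no serious obstacle here; the argument is essentially a bookkeeping exercise with dimensions.

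First I would unpack the definition of direct sum: in a suitable basis, $\sigma(g)$ is the block-diagonal matrix whose $i$th diagonal block is $\sigma_i(g)$, so $\Tr(\sigma(g)) = \sum_i \Tr(\sigma_i(g))$. Evaluating this at each $g \in G$ gives the equality of characters $\chi = \sum_i \chi_i$, establishing the first sentence.

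For the second sentence, I would start from $\dim \sigma = \sum_i \dim \sigma_i$, which holds because the diagonal blocks of $\sigma(g)$ have sizes $\dim \sigma_i$. Dividing the equation $\chi = \sum_i \chi_i$ by $\dim \sigma$ and multiplying and dividing each summand by $\dim \sigma_i$ gives
\begin{equation}
\tilde\chi \;=\; \frac{1}{\dim \sigma}\sum_i \chi_i \;=\; \sum_i \frac{\dim \sigma_i}{\dim \sigma}\cdot \frac{\chi_i}{\dim \sigma_i} \;=\; \sum_i \frac{\dim \sigma_i}{\dim \sigma}\,\tilde\chi_i.
\end{equation}
Finally I would note that the coefficients $\frac{\dim \sigma_i}{\dim \sigma}$ are non-negative and, by the dimension identity above, sum to $1$, so the right-hand side is a genuine convex combination. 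That completes the proof; no nontrivial representation theory (Schur's lemma, orthogonality, etc.) is needed, only the behaviour of trace on block-diagonal matrices.
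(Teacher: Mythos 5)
Your proof is correct and is exactly the argument the paper has in mind: the paper states this lemma as ``immediate'' without proof, and your block-diagonal/linearity-of-trace computation together with the observation that the coefficients $\frac{\dim\s_i}{\dim\s}$ are nonnegative and sum to $1$ is the standard one-line justification. Nothing further is needed.
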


There is a simple criterion to check whether a representation of a finite group is irreducible:
\begin{fact}\label{fact:irreducibility-criterion}
	$\s$ is an irreducible representation of $G$ iff
	\begin{equation}
		\abs G = \sum_{g\in G} \Tr \s(g) \Tr\s(g\1). 
	\end{equation}
\end{fact}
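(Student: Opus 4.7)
The plan is to reduce the statement to the standard orthogonality relations for characters of irreducible representations. The criterion is classical, and I would present it as follows.

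First, I would observe that every representation of a finite group is equivalent to a unitary one (by averaging the standard Hermitian inner product over $G$, a.k.a.\ Weyl's unitary trick). Since the trace and the property of being irreducible are both invariants of the equivalence class, we may assume $\s$ is unitary. Then $\s(g\1) = \s(g)\dagg$, so $\Tr\s(g\1) = \overline{\Tr\s(g)}$, and
\begin{equation}
\frac{1}{|G|}\sum_{g\in G}\Tr\s(g)\Tr\s(g\1) \;=\; \frac{1}{|G|}\sum_{g\in G}\chi_\s(g)\,\overline{\chi_\s(g)} \;=\; \langle\chi_\s,\chi_\s\rangle,
\end{equation}
where $\langle \cdot,\cdot\rangle$ denotes the standard Hermitian inner product on class functions. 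So the displayed equation in the Fact is equivalent to $\langle\chi_\s,\chi_\s\rangle = 1$.

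Next I would invoke the first orthogonality relation for irreducible characters, namely $\langle\chi_i,\chi_j\rangle = \delta_{ij}$ whenever $\chi_i,\chi_j$ are characters of pairwise non-isomorphic irreducibles. By Maschke's theorem we can write $\s \cong \bigoplus_i m_i\s_i$ where the $\s_i$ are pairwise non-isomorphic irreducibles and $m_i\in\N$. Using Lemma~\ref{fact:character-convex-combination} (or more precisely its unnormalized version) we get $\chi_\s = \sum_i m_i\chi_i$, and hence, using bilinearity and the orthogonality relation,
\begin{equation}
\langle\chi_\s,\chi_\s\rangle \;=\; \sum_{i,j} m_i m_j \langle\chi_i,\chi_j\rangle \;=\; \sum_i m_i^2.
\end{equation}
This sum of positive integer squares equals $1$ if and only if exactly one $m_i$ is equal to $1$ and the rest vanish, i.e.\ $\s$ is itself irreducible. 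This establishes both directions of the ``iff''.

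There is no real obstacle here; the only substantive ingredient is the orthogonality relation for irreducible characters, which is standard and can be cited from \cite{dummit2004abstract}. The mildly subtle point is the passage from $\Tr\s(g\1)$ to $\overline{\Tr\s(g)}$, which requires the unitarity reduction; one could equally well skip the reduction and prove the trace identity $\sum_g \Tr\s(g)\Tr\s(g\1) = |G|\sum_i m_i^2$ directly by evaluating each side on a decomposition into irreducibles, using Schur orthogonality of matrix coefficients in the form $\frac{1}{|G|}\sum_g \s_i(g)_{ab}\s_j(g\1)_{cd} = \frac{\delta_{ij}\delta_{ad}\delta_{bc}}{\dim\s_i}$ and then taking appropriate traces.
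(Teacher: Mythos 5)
Your proof is correct. Note that the paper does not actually prove this statement: it is listed as a ``Fact'' in the preliminaries with a pointer to \cite{dummit2004abstract}, so there is no in-paper argument to compare against. What you have written is precisely the standard textbook proof the citation refers to --- unitarize, identify the sum with $\abs{G}\braket{\chi_\s,\chi_\s}$, decompose via Maschke into $\sum_i m_i^2$ by the first orthogonality relation, and observe this equals $1$ exactly when $\s$ is irreducible --- and it is complete as stated.
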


\begin{definition}
	The \emph{commutator subgroup} $[G,G]$ of $G$ is the subgroup generated by all elements of the form $[a,b] := aba\1b\1$ for $a,b\in G$. The \emph{index} $\abs{G:H}$ of a subgroup $H \leq G$ is the number of $H$-cosets in $G$. Equivalently for finite groups, the index is the quotient of the orders $\abs{G:H}= \frac{\abs G}{\abs{H}}$.
\end{definition}

\begin{fact}\label{fact:1-dim-irreps}
	$G$ has a number $\abs{G:[G,G]}$ of inequivalent $1$-dimensional irreducible representations, each of which restricts to the trivial representation on $[G,G]$.
\end{fact}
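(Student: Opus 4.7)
The plan is to show that $1$-dimensional representations of $G$ are precisely the $1$-dimensional representations of the abelianization $G^{\mathrm{ab}} := G/[G,G]$, pulled back along the quotient map, and then to count the irreducible representations of this finite abelian group.

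First I would observe that any $1$-dimensional representation $\chi\colon G \to \mathrm{GL}_1(\C) = \C^*$ has abelian image. Hence for any $a,b \in G$, $\chi([a,b]) = \chi(a)\chi(b)\chi(a)^{-1}\chi(b)^{-1} = 1$, which means $[G,G] \subseteq \ker \chi$. This already proves the second half of the claim: every $1$-dimensional representation restricts to the trivial representation on $[G,G]$. By the universal property of the quotient, $\chi$ factors uniquely as $\chi = \tilde\chi \circ \pi$, where $\pi\colon G \to G^{\mathrm{ab}}$ is the quotient map and $\tilde\chi$ is a homomorphism $G^{\mathrm{ab}} \to \C^*$. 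Conversely, any homomorphism $G^{\mathrm{ab}} \to \C^*$ pulls back via $\pi$ to a $1$-dimensional representation of $G$, and two such pullbacks are equivalent (equivalently, equal, since they are $1$-dimensional) exactly when the underlying homomorphisms on $G^{\mathrm{ab}}$ are equal. So inequivalent $1$-dimensional irreps of $G$ are in bijection with $1$-dimensional irreps of $G^{\mathrm{ab}}$.

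It remains to show that a finite abelian group $A$ of order $n$ has exactly $n$ inequivalent $1$-dimensional irreducible representations. Since $A$ is abelian, every irrep of $A$ is $1$-dimensional (any irreducible representation of an abelian group is $1$-dimensional by Schur's lemma, as every group element acts by a scalar in an irrep). Let the irreps be $\chi_1,\ldots,\chi_k$ with dimensions all equal to $1$. The standard identity $\sum_i (\dim \chi_i)^2 = |A|$ then forces $k = |A| = n$. Applying this with $A = G^{\mathrm{ab}}$ and using $|G^{\mathrm{ab}}| = |G:[G,G]|$ gives the stated count.

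The main obstacle, if any, is just citing the right background: the universal property giving the factorization through $G^{\mathrm{ab}}$, Schur's lemma in the abelian case, and the sum-of-squares formula for dimensions of irreps. All three are standard and can be cited from the reference \cite{dummit2004abstract} already invoked in this subsection, so no genuine difficulty arises.
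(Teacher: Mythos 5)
Your proof is correct, and it is the standard argument: the paper does not prove this statement itself but cites it as a background fact from \cite{dummit2004abstract}, where exactly this route (factoring $1$-dimensional representations through the abelianization $G/[G,G]$ and counting irreps of a finite abelian group via Schur's lemma and the sum-of-squares formula) is used. Nothing is missing.
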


\begin{fact}\label{fact:character-dimension}
	For a finite group $G$, the size of the group is equal to the sum of the squares of the dimensions of the irreducible representations. In other words, for $R$ any set of inequivalent irreps, 
	\begin{equation}\label{eq:character-dimension}
			\abs G = \sum_{\s \in R} (\dim \s)^2 \text{ iff $R$ is maximal.}
		\end{equation}	
\end{fact}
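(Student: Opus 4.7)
The plan is to prove this via the left regular representation together with character orthogonality. Let $\lambda: G\to \mathrm{GL}(\C[G])$ denote the left regular representation, where $\C[G]$ has basis $\set{e_g:g\in G}$ and $\lambda(g)e_h = e_{gh}$. Its character is easy to read off: since $\lambda(g)$ is a permutation matrix with a fixed point at $e_h$ iff $gh = h$, we have $\chi_\lambda(g) = \abs G \cdot \delta_{g,1}$.

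The key technical ingredient I would then invoke is the Schur orthogonality relation
\[
\braket{\chi_\s,\chi_\tau} := \frac{1}{\abs G}\sum_{g\in G}\chi_\s(g)\overline{\chi_\tau(g)} = \delta_{\s,\tau}
\]
for inequivalent irreducibles $\s,\tau$. This is the main obstacle; standard proofs go by applying Schur's lemma to the averaged operator $\frac1{\abs G}\sum_g \s(g) M \tau(g)\1$ for an arbitrary $M$, showing it must be a scalar multiple of an intertwiner (hence zero when $\s \not\cong \tau$) and computing the scalar when $\s = \tau$. The paper's Fact \ref{fact:irreducibility-criterion} is essentially the diagonal case $\braket{\chi_\s,\chi_\s}=1$, so if we take that as given, orthogonality is the only new piece required.

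Once Schur orthogonality is in hand, I would decompose $\lambda$ into irreducibles as $\lambda \cong \bigoplus_\s m_\s \s$ (sum over a maximal set of inequivalent irreps). By Lemma \ref{fact:character-convex-combination} applied unnormalized, $\chi_\lambda = \sum_\s m_\s \chi_\s$, so pairing with $\chi_\s$ extracts the multiplicity:
\[
m_\s = \braket{\chi_\lambda,\chi_\s} = \frac{1}{\abs G}\sum_g \abs G\,\delta_{g,1}\overline{\chi_\s(g)} = \overline{\chi_\s(1)} = \dim\s.
\]
In particular every irrep occurs in $\lambda$, with multiplicity equal to its dimension. Taking dimensions of both sides of the decomposition gives $\abs G = \dim \lambda = \sum_\s (\dim\s)^2$, where the sum ranges over a maximal set of inequivalent irreps.

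Finally I would dispatch the ``iff $R$ is maximal'' direction. If $R$ is maximal, the displayed equality holds by the previous paragraph. Conversely, if $R$ is not maximal, there exists some irrep $\tau$ inequivalent to every element of $R$, and since $\dim\tau \geq 1$ we have
\[
\sum_{\s\in R}(\dim\s)^2 \;\leq\; \abs G - (\dim\tau)^2 \;<\; \abs G,
\]
contradicting equality. This shows the equation in \eqref{eq:character-dimension} characterizes maximality exactly, completing the proof.
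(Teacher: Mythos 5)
Your proposal is correct: the regular-representation argument with Schur orthogonality of characters is a complete and standard proof of this fact, including the clean handling of the ``iff maximal'' direction (your decomposition of $\lambda$ into irreducibles implicitly uses Maschke's theorem, which is fine to assume at the same level as the orthogonality relations you invoke). For comparison, the paper does not prove this statement at all --- it is collected among the background facts in the preliminaries with a citation to \cite{dummit2004abstract} --- and the argument you give is essentially the one found in that reference (there phrased via the decomposition of the group algebra, which is the regular representation in ring-theoretic language), so there is no substantive divergence to flag.
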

By ``maximal'', we mean that any irreducible representation is equivalent to one from $R$. This fact can be used to check whether one has a complete classification of the irreducibles of $G$. This is a special case of the following for $x = 1$.
\begin{fact}[Second orthogonality relation for character tables]\label{fact:orthogonality}
	Let $x\in G$. Let $\s$ vary over a maximal set of inequivalent irreps of $G$, and let $n_\s$ be the dimension of $\s$. Then
	\begin{equation}
		\frac 1{\abs G}\sum_\s n_\s\Tr(\s(x)) = \delta_{x,1}.
	\end{equation}
\end{fact}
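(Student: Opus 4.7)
The plan is to prove this second orthogonality relation by exhibiting a specific representation of $G$ whose character is exactly $|G|\delta_{x,1}$ and decomposing it into irreducibles with the correct multiplicities. The natural choice is the \emph{regular representation} $\rho_{\text{reg}}\colon G\to \m L(\C^G)$, where $\C^G$ has basis $\set{\ket g}_{g\in G}$ and $\rho_{\text{reg}}(x)\ket g = \ket{xg}$.

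First I would compute $\chi_{\text{reg}}(x)$ directly. Since $\rho_{\text{reg}}(x)$ is a permutation matrix in the basis $\set{\ket g}$, its trace counts fixed points: the number of $g\in G$ with $xg=g$. This equals $\abs G$ if $x=1$ and $0$ otherwise, so $\chi_{\text{reg}}(x) = \abs G \cdot \delta_{x,1}$.

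Next I would show that in the decomposition $\rho_{\text{reg}} \cong \bigoplus_\s m_\s \s$ over a maximal set of inequivalent irreps, the multiplicity $m_\s$ equals $n_\s=\dim \s$. The dimension check is consistent with Fact \ref{fact:character-dimension}, since $\sum_\s n_\s \cdot n_\s = \abs G = \dim \rho_{\text{reg}}$. To pin down the actual multiplicities, I would invoke Schur's first orthogonality relation: for any representation $\pi$ of $G$ and any irrep $\s$, the multiplicity of $\s$ in $\pi$ is $\frac{1}{\abs G}\sum_{g\in G}\Tr(\pi(g))\Tr(\s(g\1))$. (This is the natural generalization of the irreducibility criterion in Fact \ref{fact:irreducibility-criterion}, which is the case $\pi=\s$.) Applying this formula to $\pi = \rho_{\text{reg}}$ and using the explicit form of $\chi_{\text{reg}}$ yields $m_\s = \frac{1}{\abs G}\cdot \abs G \cdot \Tr(\s(1)) = n_\s$, as desired.

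Finally, combining these two computations via Lemma \ref{fact:character-convex-combination}: the character of a direct sum is the sum of characters, so $\chi_{\text{reg}}(x) = \sum_\s n_\s \Tr(\s(x))$. Setting this equal to $\abs G\cdot \delta_{x,1}$ and dividing by $\abs G$ produces the claim. The only real obstacle is that the first orthogonality relation is not explicitly listed among the stated facts; I would either invoke it from a standard reference such as \cite{dummit2004abstract} or derive it by the usual Schur-lemma argument applied to the averaging operator $\frac{1}{\abs G}\sum_g \s_1(g)\otimes \s_2(g)^*$, which restricts to zero between inequivalent irreducibles and to $\frac{1}{n_\s}$ times the identity on the isotypic component.
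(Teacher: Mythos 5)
Your proof is correct. Note, however, that the paper does not prove this statement at all: it is listed in \S 1.5 among the ``basic facts about the representation theory of finite groups'' with proofs deferred to a standard reference (Dummit and Foote), so there is no in-paper argument to compare against. Your route --- computing the character of the regular representation as $\abs{G}\,\delta_{x,1}$ by counting fixed points, identifying the multiplicity of each irrep $\s$ in it as $n_\s$ via the first orthogonality relation, and then summing characters over the decomposition --- is the standard textbook derivation of exactly this special case of the second orthogonality relation (the case $y=1$). You correctly flag the one external ingredient, the first orthogonality relation / multiplicity formula, which is indeed not among the paper's stated facts; your proposed derivation of it from Schur's lemma applied to the averaging projector is the usual one and is sound. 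The dimension count $\sum_\s n_\s^2 = \abs G$ you mention is a consistency check rather than a substitute for the multiplicity computation, and you treat it as such, which is right.
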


\begin{fact}[Schur's lemma]\label{fact:schur}
	Let $\tau: G\to U(\C^d)$ be an irrep and $X \in \m L(\C^d)$ be a linear operator. Suppose that $X\tau(g) = \tau(g)X$ for all $g\in G$. Then $X = \lambda I$ is a scalar multiple of identity. 
\end{fact}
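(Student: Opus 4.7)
The plan is to exploit the algebraic closure of $\C$: since $X$ acts on a finite-dimensional complex vector space, its characteristic polynomial has a root $\lambda \in \C$, so $X$ has at least one eigenvalue. The strategy is then to show $X - \lambda I = 0$, which yields the desired $X = \lambda I$.

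First I would set $Y := X - \lambda I$ and observe that $Y$ still commutes with every $\tau(g)$, since $\tau(g) Y = \tau(g) X - \lambda \tau(g) = X \tau(g) - \lambda \tau(g) = Y \tau(g)$, using that scalar multiples of the identity are central in $\m L(\C^d)$. Next I would check that $\ker Y$ is a $\tau$-invariant subspace of $\C^d$: if $v \in \ker Y$, then $Y \tau(g) v = \tau(g) Y v = 0$, so $\tau(g) v \in \ker Y$ for every $g \in G$. By the choice of $\lambda$ as an eigenvalue, $\ker Y$ is nonzero.

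Finally, I would invoke irreducibility of $\tau$ to conclude $\ker Y = \C^d$, whence $Y = 0$. The only subtle point, and the main obstacle, is that the definition of ``irreducible'' given earlier is phrased as indecomposability into a direct sum of subrepresentations, whereas my argument produces an invariant subspace. These notions coincide for representations of finite groups by Maschke's theorem: one averages an arbitrary Hermitian inner product over $G$ to make $\tau$ unitary, and then the orthogonal complement of any $\tau$-invariant subspace is again $\tau$-invariant, giving a direct-sum decomposition of $\tau$. So a proper nonzero $\tau$-invariant subspace would contradict irreducibility, forcing $\ker Y = \C^d$. The rest of the argument — existence of an eigenvalue plus $\tau$-invariance of $\ker Y$ — is otherwise a short one-line computation.
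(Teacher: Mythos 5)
Your proof is correct, and it is the standard argument; the paper itself offers no proof of this statement, citing it as a textbook fact (Dummit--Foote), so there is nothing to diverge from. One small simplification: the Maschke-style averaging you invoke to reconcile ``no proper invariant subspace'' with the paper's direct-sum definition of irreducibility is unnecessary here, because the statement already assumes $\tau$ maps into $U(\C^d)$; since $\tau$ is unitary with respect to the standard inner product, the orthogonal complement of any $\tau$-invariant subspace is automatically $\tau$-invariant, so a proper nonzero invariant subspace immediately yields a direct-sum decomposition contradicting irreducibility. With that shortcut, your eigenvalue-plus-kernel argument goes through exactly as written.
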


\section{Linear constraint system games over $\Z_d$}\label{sec:linear-constraint-games}
	We recall several definitions from previous works of Cleve, Liu, Mittal, and Slofstra \cite{slofstra2016tsirelson,cleve2016perfect,cleve2014characterization}. Following a suggestion from \cite{cleve2016perfect}, we define the machinery over $\Z_d$ instead of $\Z_2$. 

\begin{definition}
	A \emph{hypergraph} $\mathbf H = (V,E,H)$ consists of a finite \emph{vertex set} $V$, a finite \emph{edge set} $E$ and an \emph{incidence matrix} $H: V\times E \to \Z$. 
\end{definition}
We think of $V$ as a set of $\Z$-linear equations, $E$ as a set of variables, and $H(v,e)$ as the coefficient of variable $e$ in equation $v$. Following Arkhipov \cite{arkhipov2012extending}, some of our hypergraphs of interest will be graphs. Unlike previous works, we introduce signed coefficients (outgoing edges have a positive sign in the incidence matrix, while ingoing edges have a negative sign). This is because previous works considered equations over $\Z_2$, where $1 = -1$.

\begin{definition}[\cite{cleve2014characterization}, \cite{slofstra2016tsirelson}]\label{definition:linear-constraint-game}
	Given hypergraph $\mathbf H$, vertex labelling $l: V \to \Z$, and some modulus $d\in \Z$, we can associate a nonlocal game which we'll call the \emph{linear constraint game} $\LCS(\mathbf H, l, \Z_d)$. Informally, a verifier sends one equation $x$ to Alice and one variable $y$ to Bob, demanding an assignment $a:E\to \Z_d$ to all variables from Alice and an assignment $b\in \Z_d$ to variable $y$ from Bob. The verifier checks that Alice's assignment satisfies equation $x \pmod d$, and that Alice and Bob gave the same assignment to variable $y$.
	\\\noindent
	Formally, we have the following question and answer sets: $X = V$, $Y = E$, $A = \Z_d^E$, $B = \Z_d$. The win condition selects those tuples $(a,b,x,y)$ satisfying:
	\begin{align}
	\label{eq:lcs-equation}
		a(y) &= b &&\text{(Consistency) }
	\\	\sum_{e\in E} H(x,e)a(e) &\equiv l(x) \pmod d.	&&\text{(Constraint satisfaction) }
	\end{align}
\end{definition}
We introduce the two primary LCS games of interest in this paper.
\begin{example}
\label{example:magic-square}
	The magic square LCS (mod $d$) has vertex set $\set{v_1,\ldots, v_6}$, edge set $\set{e_1,\ldots, e_9}$, vertex labeling $l(v_5) = 1, l(v_i) = 0$ for $i\neq 5$. See Figure \ref{fig:magic-square-equations} for the full description of the hypergraph and the associated set of linear equations.
	
\end{example}
\begin{figure}[h]
	\begin{center}
	\begin{multicols}{2}
	    \resizebox{0.2\textwidth}{!}{\includegraphics{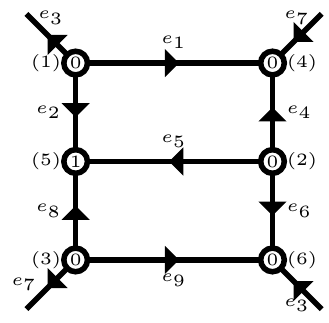}}
	    
	    \begin{align}
	    	(1)&& e_1 + e_2 + e_3 &= 0 & (4)&& -(e_1 + e_4 + e_7) &= 0 
	    \\	(2)&& e_4 + e_5 + e_6 &= 0 & (5)&& -(e_2 + e_5 + e_8) &= 1 
	    \\	(3)&& e_7 + e_8 + e_9 &= 0 & (6)&& -(e_3 + e_6 + e_9) &= 0 
	    \end{align}
	\end{multicols}
    \end{center}
    \caption{The magic square LCS, presented both in terms of equations (mod $d$) and in terms of a labelled hypergraph. The two line segments labeled $e_3$ are parts of the same edge, as are the pair of line segments labeled $e_7$. The underlying graph is $K_{3,3}$, the smallest bipartite non-planar graph. The direction of the edges emphasizes the bipartition.}
    \label{fig:magic-square-equations}
\end{figure}
\begin{example}
\label{example:magic-pentagram}
	The magic pentagram LCS (mod $2$) has vertex set $\set{v_1,\ldots, v_5}$, edge set $\set{e_1,\ldots, e_{10}}$, vertex labeling $l(v_5) = 1, l(v_i) = 0$ for $i\neq 5$. See Figure \ref{fig:magic-pentagram-equations} for the full description of the hypergraph and the associated set of linear equations.
\end{example}
\begin{figure}[h]
	\begin{center}
	\begin{multicols}{2}
	    \resizebox{0.3\textwidth}{!}{\includegraphics{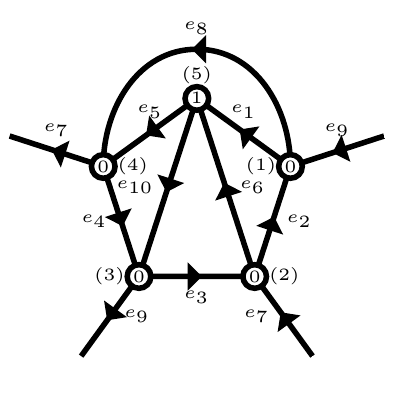}}

	    \begin{align}
	    (1)&& e_1 - e_2 + e_8 - e_9 	&= 0\\
	    (2)&& e_2 - e_3 + e_6 - e_7 	&= 0\\
	    (3)&& e_3 - e_4 + e_9 - e_{10} 	&= 0\\
	    (4)&& e_4 - e_5 + e_7 - e_{8} 	&= 0\\
	    (5)&& e_5 - e_6 + e_{10} - e_1 	&= 1\\
	    \end{align}
	\end{multicols}
    \end{center}
    \caption{The magic pentagram LCS, presented both in terms of equations (mod $2$) and in terms of a labelled hypergraph. The two line segments labeled $e_7$ are parts of the same edge, as are the pair of line segments labeled $e_9$. The underlying graph is $K_5$, the smallest complete non-planar graph.} 
    \label{fig:magic-pentagram-equations}
\end{figure}

The following is the main tool we use to understand linear constraint system games.

\begin{definition}[Solution group over $\Z_d$, \cite{cleve2016perfect}]
	For an LCS game $\LCS(\mathbf H, l, \Z_d)$ with $\mathbf H = (V,E,H)$, the \emph{solution group} $\G(\mathbf H, l, \Z_d)$ has one generator for each edge of $\mathbf H$ (i.e.\ for each variable of the linear system), one relation for each vertex of $\mathbf H$ (i.e.\ for each equation of the linear system), and relations enforcing that the variables in each equation commute. Formally, define the sets of relations $R_c$, the local commutativity relations, and $R_{eq}$, the constraint satisfaction relations as
	\begin{align}
		R_{c} 		&:= \set{[e,e'] \; H(v,e) \neq 0 \neq H(v,e')\text{ for some } v\in V}
		\\R_{eq}	&:= \set{ J^{-l(v)}\prod_{e\in E}e^{H(v,e)} \; v\in V}.
	\end{align}
Then define the solution group as
\begin{equation}
	\G(\mathbf H, l, \Z_d) := \Braket{E:R_{c}\cup R_{eq}}_{\Z_d}.
\end{equation}
(Notice that the order of the products defining $R_{eq}$ is irrelevant, since each pair of variables appearing in the same $R_{eq}$ relation also have a commutation relation in $R_c$.)
\end{definition}
When the LCS game is clear from context, we'll just write $\G$ to denote its solution group.

Our aim is to prove that for some specific linear constraint system games, strategies that win with high probability are very close to some ideal form.
We start by observing that for any LCS game, \emph{any} strategy already has a slightly special form.

\begin{lemma}[Strategies presented via observables]\label{lemma:unitary-observable-strategy-exact}
	Suppose that $p(a,b\|v,e) =\Tr_\rho \tilde A_v^a\otimes \tilde B_e^b$ is a quantum strategy for an LCS game over $\Z_d$ with hypergraph $\mathbf H = (H,V,E)$. Then there are unitaries $\set{A_e^{(v)} \; e\in E, v\in V}$ and $\set{B_e \; e\in E}$ such that for all $v,e$, $(A_e^{(v)})^d = I = B_e^d$; for any fixed $v$, the $A_e^{(v)}$ pairwise commute; moreover, the provers win with probability $1$ iff 
	\begin{equation}\label{eq:exact-win-criterion-con}
		\text{for all $v,e$, }\Tr_\rho A_e^{(v)}\otimes B_e  = 1\text{, and}
	\end{equation}
	\begin{equation}\label{eq:exact-win-criterion-sat}
		\text{for all }v\text{, }\Tr_\rho \prod_{e}\left(A_e^{(v)}\right)^{H(v,e)}\otimes I_B = \w_d^{l(v)}.
	\end{equation}
\end{lemma}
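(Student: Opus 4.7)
The plan is to build from any quantum strategy a family of $d$-outcome observables---unitaries of order dividing $d$---via the discrete Fourier transform on $\Z_d$, and then translate the two winning conditions into the trace identities \eqref{eq:exact-win-criterion-con} and \eqref{eq:exact-win-criterion-sat} using strict convexity of the unit disc in $\C$.

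First I would marginalize Alice's joint measurement $\{\tilde A_v^a\}_{a\in \Z_d^E}$ to obtain, for each pair $(v,e)$, the projective measurement $\tilde A_{v,e}^k := \sum_{a : a(e) = k} \tilde A_v^a$ on $\Z_d$. By construction, for fixed $v$, the families $\{\tilde A_{v,e}^k\}_k$ and $\{\tilde A_{v,e'}^{k'}\}_{k'}$ commute for any $e, e' \in E$, since both are coarsenings of the common refinement $\{\tilde A_v^a\}$. I would then define the observables
\begin{equation}
    A_e^{(v)} := \sum_{k\in \Z_d} \w_d^{k}\, \tilde A_{v,e}^k, \qquad B_e := \sum_{k\in \Z_d} \w_d^{-k}\, \tilde B_e^k,
\end{equation}
the sign in the exponent for $B_e$ being chosen to match the statement of the lemma. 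These are unitaries satisfying $(A_e^{(v)})^d = I = B_e^d$, and for fixed $v$ the $A_e^{(v)}$ pairwise commute by the observation above.

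For the consistency criterion, I would expand
\begin{equation}
    \Tr_\rho A_e^{(v)}\otimes B_e \;=\; \sum_{k,k'\in \Z_d} \w_d^{k-k'}\, \Tr_\rho\bigl( \tilde A_{v,e}^k\otimes \tilde B_e^{k'}\bigr),
\end{equation}
which is a convex combination of $d$-th roots of unity weighted by the joint outcome probabilities on question $(v,e)$. Because the unit disc in $\C$ is strictly convex, this sum equals $1$ if and only if all the weight sits on the diagonal $k=k'$, which is precisely the event $a(e) = b$ with probability $1$. This gives the equivalence between the consistency check and \eqref{eq:exact-win-criterion-con}.

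Finally, for the constraint satisfaction condition I would use that for each fixed $v$ the commuting unitaries $\{A_e^{(v)}\}_{e\in E}$ are simultaneously diagonalizable, with joint eigenspaces indexed by assignments $a:E\to \Z_d$ on which $A_e^{(v)}$ acts as $\w_d^{a(e)}$; in fact the joint eigenspace for $a$ is exactly the range of $\tilde A_v^a$. Hence $\prod_e (A_e^{(v)})^{H(v,e)}$ acts on the $a$-eigenspace as $\w_d^{\sum_e H(v,e)\, a(e)}$, so $\Tr_\rho\, \prod_e (A_e^{(v)})^{H(v,e)}\otimes I_B$ is again a convex combination of $d$-th roots of unity with weights $\Tr_\rho(\tilde A_v^a\otimes I_B)$. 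By the same strict convexity fact, it equals $\w_d^{l(v)}$ if and only if every assignment $a$ appearing with positive probability satisfies $\sum_e H(v,e)\, a(e) \equiv l(v) \pmod d$, which is exactly the constraint satisfaction condition. The only real bookkeeping is carrying out both directions of the ``iff'' uniformly through this single convexity fact; I do not expect a genuine obstacle beyond fixing conventions.
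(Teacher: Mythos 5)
Your proposal is correct and follows essentially the same route as the paper: define $A_e^{(v)}$ and $B_e$ as the $\w_d$-weighted sums of the marginalized projectors, note commutation since they are linear combinations of a common family of projectors, and observe that each trace in \eqref{eq:exact-win-criterion-con} and \eqref{eq:exact-win-criterion-sat} is a convex combination of $d$th roots of unity, which equals the target root of unity iff all the probability weight sits on the winning event. The strict-convexity (extreme point) argument you invoke is exactly the paper's closing step, so no substantive difference remains.
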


We refer to the operators $\set{A_e^{(v)}}, \set{B_e}$ together with the state $\rho$ as a \emph{strategy presented via observables.}
Typically the word ``observable'' is reserved for Hermitian operators. Nonetheless, we call our operators observables because they capture properties of the projective measurements from which they're built in a useful way. Operationally, we think of Bob as measuring the observable $B_e$ and reporting the outcome when asked about variable $e$ and of Alice measuring the observables $A_e^{(v)}$ and reporting the outcome for each $e$ when asked about equation $v$. The fact that Alice's observables pairwise commute at each equation means that Alice can measure them simultaneously without ambiguity.

A version of this lemma is proved in the course of the proof of Theorem 1 of \cite{cleve2014characterization}. We give essentially the same proof, just over $\Z_d$.
\begin{proof}[Proof of Lemma \ref{lemma:unitary-observable-strategy-exact}]
	Define the observables as
	\begin{align}
		B_e:= \sum_j\w_d^{-j}\tilde B^{i}_{e}
		&& 
		A_e^{(v)}:= \sum_i \w_d^i\sum_{a: a(e) =i} \tilde A_{v}^a.
	\end{align}

	It's clear that each of these operators is a unitary whose eigenvalues are $d\th$ roots of unity. To see that $A_e^{(v)}$ commutes with $A_{e'}^{(v)}$, notice that they are different linear combinations of the same set of projectors. Now we compute, for any $v,e$,
	\begin{align}
		\Tr_\rho A_e^{(v)}\otimes B_e 
			&= \sum_{i,j}\w_d^{i-j}\Tr_\rho \left(\sum_{a: a(e) =i} \tilde A_{v}^a\right)\otimes \tilde B^{j}_{e} 
		\\	&= \sum_k \w_d^k\Pr[a(e) - b \equiv k\mid \text{questions } x=v, y=e].
	\end{align}
	Notice that the last line is a convex combination of the $d\th$ roots of unity. Hence, it equals $1$ if and only if $\Pr[a(e) \equiv b \mid \text{questions } x=v, y=e] = 1$.

	A similar computation reveals:
	\begin{align}
		 &\w_d^{-l(v)} \Tr_\rho \prod_{e} \left(A_e^{(v)}\right)^{H(v,e)}\otimes I \\
		= & \sum_k\w_d^{k-l(v)} \Tr_\rho \sum_{\substack{a\\ \sum_e H(v,e)a(e)\equiv k}} \tilde A_{v}^a\otimes I \\
		= & \sum_k \w_d^{k-l(v)}\Pr\left[\sum_{e}H(v,e)a(e)\equiv k\middle|  \text{question } x=v\right]
	\end{align}
Again, the last line is a convex combination of the $d\th$ roots of unity. Hence it equals $1$ if and only if $\Pr\left[\sum_{e}H(v,e)a(e)\equiv l(v) \middle|  \text{question } x=v\right] = 1$. 
\end{proof}

Note that we can always recover the original strategy in terms of projective measurements by looking at the eigenspaces of the observables. Therefore, we restrict our attention to strategies presented by observables without loss of generality. 

Next, we state a simple sufficient condition for the existence of a perfect quantum strategy for an LCS game.

\begin{definition}[Operator solution]
	An \emph{operator solution} for the game $\LCS(\mathbf H, l, \Z_d)$ is a unitary representation $\s$ of the group $\G(\mathbf H, l, \Z_d)$ such that $\s(J) = \w_dI$. A \emph{conjugate operator solution} is a unitary representation sending $J\mapsto \bar{\w_d}I$. 
\end{definition}

Notice that if $\s$ is an operator solution, then for any choice of basis the complex conjugate $\bar\s: g\mapsto \bar{\s(g)}$ is a conjugate operator solution. The existence of an operator solution is sufficient to construct a perfect quantum strategy. 

\begin{example}[Operator solution for magic square]
See the square of group generators in Figure \ref{fig:magic-game-generators}. Let $\G_2$ be the solution group of the Magic Square. Consider the map $\G_2 \to U(\C^d\otimes \C^d)$ generated by sending each generator in this square to the operator in the corresponding location of Figure \ref{fig:magic-square-operators-introduction}. This map is an operator solution.
\end{example}

\begin{figure}[h]
	\begin{tabular}{m{0.4\textwidth}m{0.1\textwidth}m{0.4\textwidth}}
	    \resizebox{0.3\textheight}{!}{\includegraphics{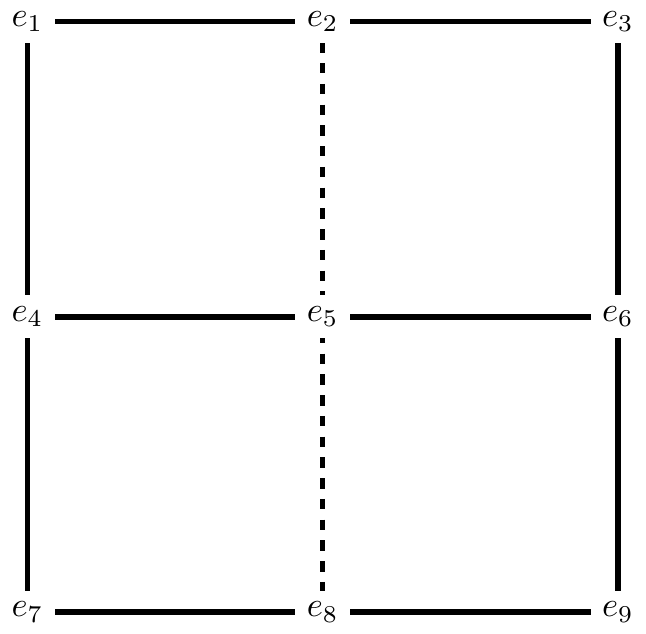}}
	    &
	    &\resizebox{0.3\textheight}{!}{\includegraphics{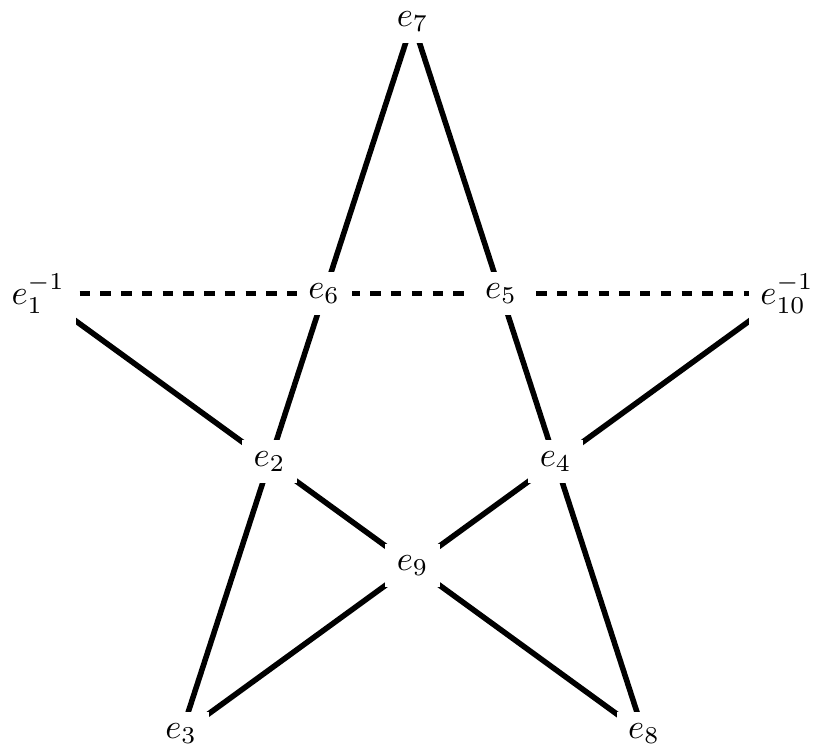}}
   	\end{tabular}
\caption{On the left-hand figure, the product of the generators on any solid line is equal to $1$ in the solution group of the magic square. The product of the operators on the dashed line is equal to $J$. Similarly, on the right-hand figure, the alternating product $ab\1cd\1$ is equal to $1$ on the solid lines and $J$ on the dashed line.}
	\label{fig:magic-game-generators}
\end{figure}

\begin{example}[Operator solution for magic pentagram]
See the pentagram of group generators in Figure \ref{fig:magic-game-generators}. Let $\G_3$ be the solution group of the Magic Pentagram. Consider the map $\G_3 \to U(\C^d\otimes \C^d\otimes\C^d)$ generated by sending each generator in this pentagram to the operator in the corresponding location of Figure \ref{fig:magic-square-operators-introduction}. This map is an operator solution.
\end{example}

\begin{prop}
\label{prop: perfect strategy}
	Let $\s: \G \to U(\C^D)$ be an operator solution. Define a strategy by setting $\ket\psi = \ket{\r{EPR}_D}$, $A_e^{(v)} = \s(e)$ for all $e,v,$ and $B_e = \bar{\s(e)}$ for all $e$. Provers using this strategy win with probability $1$.
\end{prop}
\begin{proof}
	By a well-known property of the maximally entangled state, we have
	\begin{equation}
		\braket{ \psi | \s(e) \otimes \bar{\s(e)} | \psi} = 
		\braket{ \psi | \s(e)\bar{\s(e)}^T \otimes I  | \psi} = 1,
	\end{equation}
	where $^T$ denotes the transpose. 
	Therefore, the consistency criterion \eqref{eq:exact-win-criterion-con} is satisfied.
	Since $\s$ is an operator solution, we have 
	\begin{align}
		\prod_{e} \left(A_e^{(v)}\right)^{H(v,e)} 
		\\&= \s(\prod_e \s(e)^{H(v,e)})
		\\&= \s(J^{l(v)}) 
		\\&=\w_d^{l(v)}I,
	\end{align}
	so the constraint satisfaction criterion \eqref{eq:exact-win-criterion-sat} is satisfied.
\end{proof}

We'll see both exact and approximate converses to this proposition in Section \ref{sec:self-testing}.

\section{General self-testing}\label{sec:self-testing}

In this section, we introduce our main robust self-testing theorem for linear constraint system games with solution groups of a certain form. In \S \ref{subsection:exact-self-testing}, to ease understanding, we start by stating and proving an exact version of the theorem. In \S \ref{subsection:state-dependent-distance} through \S \ref{subsection:quant-stabilizer-state-bounds}, we introduce the necessary tools to prove an approximate version of the self-testing theorem. \S \ref{subsection:state-dependent-distance} introduces the state-dependent distance and some of its properties. \S \ref{subsection:stability-lemma} proves a stability lemma for representations of finite groups, which allows us to deduce that the action of a strategy winning with high probability is close to the action of a representation of the solution group.  \S \ref{subsection:quant-van-kampen} presents a quantitative version of the van Kampen Lemma from Section \S \ref{subsection:van-kampen-diagrams}, which is key in bounding the robustness of the main theorem. \S \ref{subsection:quant-stabilizer-state-bounds} shows that if a joint state is approximately stabilized by the action of the Pauli group on two tensor factors, then it is close to the maximally entangled state on the two tensor factors. In \S \ref{subsection:robust-self-testing} we combine these tools to prove our robust self-testing theorem. 
 
\subsection{Exact self-testing}
\label{subsection:exact-self-testing}

Throughout, let $\LCS(\mathbf H, l, \Z_d), \mathbf H = (V,E,H)$ be an LCS game with solution group $\G$.

\begin{thm}[Rigid self-testing of observables]\label{thm:rigid-self-testing-observables}
Suppose $\G$ is finite and all of its irreducible representations with $J\mapsto \w_d I$ are equivalent to a fixed irrep $\s:\G\to U(\C^d)$. Suppose $\set{A_e^{(v)}}, \set{B_e}, \rho\in \m L(\m H_A\otimes \m H_B)$ is a perfect strategy presented via observables for the game. 
Then there are local isometries $V_A, V_B$ such that
	\begin{itemize}
		\item for all $e,v$, $V_AA_e^{(v)}V_A^\dagger = \s(e)\otimes I \oplus \hat A^{(v)}_e$, where $\hat A^{(v)}_e V_A\rho V_A^\dagger = 0$, and 
		\item for all $e$ $V_BB_eV_B^\dagger =\bar{\s(e)}\otimes I \oplus \hat B_e$, where $\hat B_e V_B\rho V_B^\dagger = 0$.
	\end{itemize}
\end{thm}
Awkwardly, we must pick a basis to take the complex conjugate in. Fortunately, we only care about our operators up to isometry. So to make sense of the theorem statement, we pick the basis for complex conjugation first, and then the isometry $V_B$ depends on this choice.
 We break the proof into two lemmas.
\begin{lemma}\label{lem:unique-operator-solution}
	Suppose $\G$ is finite and all of its irreducible representations with $J\mapsto \w_dI$ are equivalent to a fixed irrep $\s:\G\to U(\C^d)$. Then every operator solution is equivalent to $\s\otimes I$ and every conjugate operator solution is equivalent to $\bar{\s}\otimes I$, where the complex conjugate can be taken in any basis. 
\end{lemma}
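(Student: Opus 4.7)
The plan is to decompose an arbitrary operator solution into irreducible sub-representations and apply the hypothesis summand-by-summand. Let $\tau: \G \to U(\C^D)$ be an operator solution, so $\tau(J) = \w_d I_D$. Since $\G$ is finite, $\tau$ is completely reducible, so there is a unitary $U$ and a decomposition $U\tau(g)U^\dagger = \bigoplus_i \tau_i(g)$ into irreducibles $\tau_i$. Conjugating $\tau(J) = \w_d I_D$ by $U$ gives $\bigoplus_i \tau_i(J) = \w_d I_D$, which forces $\tau_i(J) = \w_d I_{\dim \tau_i}$ for each $i$. Thus every summand $\tau_i$ is an irrep of $\G$ sending $J \mapsto \w_d I$, and by the hypothesis of the lemma each such irrep is unitarily equivalent to the fixed irrep $\s$ of dimension $d$. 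In particular $\dim \tau_i = d$ and $D = dk$ for some integer $k$. Absorbing the per-summand equivalences into a single unitary yields $\tau \cong \underbrace{\s \oplus \cdots \oplus \s}_{k \text{ copies}} = \s \otimes I_k$, proving the first claim.

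For the conjugate-operator-solution case, fix a basis of $\C^D$ and for any matrix $A$ let $\bar A$ denote entrywise complex conjugation in this basis. Given a conjugate operator solution $\tau$, define $\bar\tau(g) := \overline{\tau(g)}$. Entrywise conjugation satisfies $\overline{AB} = \bar A \bar B$ and preserves unitarity, so $\bar\tau$ is a unitary representation of $\G$; moreover $\bar\tau(J) = \overline{\bar\w_d}\, I = \w_d I$, so $\bar\tau$ is an (ordinary) operator solution. By the first part there is a unitary $W$ with $W \bar\tau(g) W^\dagger = (\s \otimes I)(g)$ for all $g \in \G$. Conjugating this identity entrywise in the chosen basis gives $\bar W\, \tau(g)\, \bar W^\dagger = (\bar\s \otimes I)(g)$, where we used that $\overline{I} = I$ and that entrywise conjugation respects products. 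Since $\bar W$ is again unitary, this is the required equivalence $\tau \cong \bar\s \otimes I$.

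The argument is essentially just complete reducibility for finite-group representations plus the hypothesis that pins down the unique irrep sending $J \mapsto \w_d I$, so no single step is hard. The only delicate point is that complex conjugation is basis-dependent, so one must fix the basis at the start of the conjugate case and carry it through both the definition of $\bar\tau$ and the final $\bar\s$ in the conclusion; this matches the disclaimer in the theorem statement that follows this lemma.
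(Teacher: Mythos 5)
Your proof is correct and follows essentially the same route as the paper: decompose the operator solution into irreducibles, observe that each summand must also send $J\mapsto \w_d I$, invoke the uniqueness hypothesis, and handle the conjugate case by entrywise conjugation in a fixed basis. (Your derivation that each block $\tau_i(J)$ equals $\w_d I$ — by restricting the scalar matrix $\w_d I_D$ to each invariant block — is in fact slightly more direct than the paper's normalized-character argument, but it is the same proof in substance.)
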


\begin{lemma}[Adapted from Lemma 8, \cite{cleve2016perfect}]\label{lemma:CLS}
Suppose $\set{A_e^{(v)}}, \set{B_e}, \rho\in \m L(\m H_A\otimes \m H_B)$ is a perfect strategy presented via observables for the game. 
	Then, there are orthogonal projections $P_A, P_B$ such that 
	\begin{enumerate}
	\item $(P_A\otimes P_B)\rho(P_A\otimes P_B) = \rho$;
	\item for each $e$, $P_AA_e^{(v)}P_A = P_AA_e^{(v')}P_A$, provided that $H(v,e) \neq 0 \neq H(v',e)$ (we now write $P_AA_eP_A$ without ambiguity);
	\item the map $\s_A:\G\to \ran P_A$ generated by $e\mapsto P_AA_eP_A$ (and $j \mapsto \w_dI$) is an operator solution;
	\item the map $\s_B:\G\to \ran P_B$ generated by $e\mapsto P_BB_eP_B$ (and $j \mapsto \bar{\w_d}I$) is a conjugate operator solution.
	\end{enumerate}
\end{lemma}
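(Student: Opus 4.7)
The plan is to take $P_A$ and $P_B$ to be the projections onto the supports of the reduced states $\rho_A := \Tr_B \rho$ and $\rho_B := \Tr_A \rho$, and then verify the four conclusions by lifting the two trace identities of Lemma~\ref{lemma:unitary-observable-strategy-exact} to operator identities acting on $\rho$. The key initial observation is that whenever $U$ is a unitary whose eigenvalues are $d$th roots of unity and $\Tr_\rho U = \lambda$ for some $d$th root of unity $\lambda$, convexity forces every vector in the support of $\rho$ to be a $\lambda$-eigenvector of $U$, so $U\rho = \lambda \rho$. Applied to the consistency and constraint conditions, this yields
\begin{equation}
    (A_e^{(v)} \otimes B_e)\rho = \rho \qquad\text{and}\qquad \Bigl(\prod_e (A_e^{(v)})^{H(v,e)} \otimes I\Bigr)\rho = \w_d^{l(v)}\rho.
\end{equation}
Condition~1 then follows from the standard fact that the support of any bipartite state is contained in the tensor product of the supports of its two marginals.

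For Condition~2, I would subtract the consistency identity at $v$ from that at $v'$ and multiply by $I\otimes B_e^{-1}$ to obtain $((A_e^{(v)} - A_e^{(v')})\otimes I)\rho = 0$; sandwiching $\rho$ by this operator and its adjoint and then partial-tracing over Bob produces $(A_e^{(v)} - A_e^{(v')})\rho_A(A_e^{(v)} - A_e^{(v')})^\dagger = 0$, from which positivity of $\rho_A$ forces $A_e^{(v)} P_A = A_e^{(v')} P_A$. In parallel I would conjugate the consistency identity to obtain $(A_e^{(v)}\otimes B_e)\rho(A_e^{(v)}\otimes B_e)^\dagger = \rho$; partial-tracing yields $A_e^{(v)} \rho_A (A_e^{(v)})^\dagger = \rho_A$ and $B_e \rho_B B_e^\dagger = \rho_B$, so that each $A_e^{(v)}$ preserves $P_A$ and each $B_e$ preserves $P_B$. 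Consequently, $P_A A_e P_A$ and $P_B B_e P_B$ restrict to unitaries on their respective ranges.

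For Condition~3, the commutation relations of $\G$ come for free from the per-equation commutation of Alice's observables. For the linear-constraint relations, partial-trace the Alice-side operator identity to get $\prod_e (A_e^{(v)})^{H(v,e)} P_A = \w_d^{l(v)} P_A$; since the $A_e^{(v)}$'s all commute and preserve $P_A$, projectors can be inserted between consecutive factors to yield $\prod_e(P_A A_e P_A)^{H(v,e)} = \w_d^{l(v)} P_A$. For Condition~4, the product of the consistency identities over all $e$ appearing in $v$ gives $\bigl(\prod_e(A_e^{(v)})^{H(v,e)} \otimes \prod_e B_e^{H(v,e)}\bigr)\rho = \rho$; combining with the Alice-side constraint yields $\bigl(I\otimes \prod_e B_e^{H(v,e)}\bigr)\rho = \bar{\w_d}^{l(v)}\rho$, whose partial trace over Alice gives the desired conjugate constraint relation. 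Commutation of $P_B B_e P_B$ and $P_B B_{e'} P_B$ for $e,e'$ in a common equation follows by composing two consistency identities and using that Alice's observables already commute.

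The main obstacle is the bookkeeping around the projections: at every step one must know that the operator being manipulated preserves $P_A$ or $P_B$ in order to freely insert projectors into products. Fortunately, the conjugated consistency identity shows that each $A_e^{(v)}$ and each $B_e$ preserves its corresponding reduced state, and hence its support; once this is in hand, all four conclusions reduce to partial-tracing a handful of operator identities and chasing equalities of unitaries modulo the projections.
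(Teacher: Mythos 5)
Your proposal is correct and follows essentially the same route as the paper's proof: both take $P_A,P_B$ to be the projections onto the supports of the reduced states, lift the trace conditions of Lemma \ref{lemma:unitary-observable-strategy-exact} to operator identities on $\supp\rho$, show that Alice's and Bob's observables preserve those supports so projectors can be inserted into products, and then read off the (conjugate) operator-solution relations for the compressions. The only differences are cosmetic (e.g.\ you derive condition 2 by subtracting two consistency identities, while the paper uses $\Tr_{\rho_A}A_e^{(v)}(A_e^{(v')})^\dagger = 1$ directly), so no changes are needed.
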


\begin{proof}[Proof of Theorem \ref{thm:rigid-self-testing-observables}, assuming the lemmas]

	Take the maps $\s_A$ and $\s_B$ from Lemma \ref{lemma:CLS}; note that their ranges are the subspaces determined by $P_A, P_B$. From Lemma \ref{lem:unique-operator-solution} we get partial isometries $W_A$, $W_B$ such that $W_A\s_A(e)W_A^\dagger = \s(e)\otimes I$ and $W_B\s_B(e)W_B^\dagger = \bar{\s(e)}\otimes I$. To complete the proof, let $V_A$ and $V_B$ be any isometric extensions of $W_A$ and $W_B$, and set 
	$\hat A_e^{(v)} = V_A(I-P_A)A_e^{(v)}(I-P_A)V_A^\dagger, \hat B_e = V_B(I-P_B)B_e(I-P_B)V_B^\dagger$. Checking that these operators satisfy the equations in the theorem is a simple computation. 
\end{proof}

\begin{proof}[Proof of Lemma \ref{lem:unique-operator-solution}]
	Let $\tau$ be an operator solution, i.e.\ a representation of $\G$ with $\tau(J) = \w_d I$. Let $\tau = \oplus_{i=1}^k\tau_i$ be a decomposition of $\tau$ into $k$ irreducibles. 
	As in Lemma \ref{fact:character-convex-combination}, let $\tilde \chi: g\mapsto \frac1{\dim \tau}\Tr\tau(g)$ be the normalized character of $\tau$ and $\tilde \chi_i$ be the same for $\tau_i$. One can check that $\abs{\tilde\chi_i(g)} \leq 1$ for all $g\in \G$. Furthermore, $\tilde\chi(g)$ is a convex combination of the $\chi_i(g)$. Therefore, $\tilde\chi_i(J) = \w_d$ for each $i$. Then also $\tau_i(J) = \w_dI$ for each $i$, since this the only $d$-dimensional unitary with trace $d\w_d$. We conclude that  $\tau$ is equivalent to $\bigoplus_{i=1}^k\s = \s\otimes I_k$. 

	Now suppose that $\tau'$ is a conjugate operator solution. Then taking the complex conjugate in any basis, $\bar{\tau'}$ is an operator solution. By the above, $\bar{\tau'}$ is equivalent to $\s\otimes I$. Therefore, $\tau'$ is equivalent to $\bar{\s}\otimes I$.
\end{proof}

\begin{proof}[Proof of Lemma \ref{lemma:CLS}]
	This is essentially the same proof as given in \cite{cleve2016perfect} (their treatment is a bit more complicated since they wish to cover the infinite-dimensional case).

	Let $\m A$ be the set of finite products of unitaries from $\set{A_e^{(v)}}$, and similarly let $\m B$ be the set of finite products of unitaries from $\set{B_e}$. 
	Let $\rho_A = \Tr_B \rho$ and $\rho_B = \Tr_A\rho$. Define 
	\begin{equation}
		\hat{\m H_A} = \supp\rho_A\text{, and }
		\hat{\m H_B} = \supp\rho_B\text,
	\end{equation}
	and let $P_A$ and $P_B$ be the projectors onto these spaces. Notice that $(P_A\otimes P_B) \rho (P_A\otimes P_B) = \rho$. From the consistency criterion \eqref{eq:exact-win-criterion-con}, we have
	\begin{equation}\label{eq:CLS-conjugate}
		1 	= \Tr_\rho A_e^{(v)}\otimes B_e,\text{ so } A_e^{(v)}\ket\phi = B_e^\dagger\ket\phi\text{ for }\ket\phi\in\supp\rho.
	\end{equation}
	Let $A\in \m A$ be arbitrary. Then, the above implies that there is $B\in \m B$ be such that $(A\otimes I) \rho (A^\dagger\otimes I) = (I\otimes B^\dagger)\rho(I\otimes B)$. We compute
	\begin{align}
		A\rho_AA^\dagger
		= \Tr_B (A\otimes I)\rho(A^\dagger\otimes I)
		= \Tr_B (I\otimes B^\dagger)\rho(I\otimes B)
		= \Tr_B \rho
		=\rho_A,
	\end{align}
	from which we conclude that $\m A$ fixes $\hat{\m H_A}$. This implies that $(PA_1P)(PA_2P) = PA_1A_2P$ for $A_1,A_2\in \m A$. 
	Next, we compute
	\begin{equation}
		1 = \Tr_\rho A_e^{(v)}(A_e^{(v')})^\dagger \otimes I
		= \Tr_{\rho_A} A_e^{(v)}(A_e^{(v')})^\dagger,
	\end{equation}
	from which we conclude that $P_AA_e^{(v)}P_A=P_AA_e^{(v')}P_A$. We now write $P_AA_eP_A$ without ambiguity. Finally, we compute
	\begin{align}
		1 = \Tr_\rho w_d^{-l(v)}\prod_{e:H(v,e)\neq 0} A_e\otimes I = \Tr_{\rho_A} w_p^{-l(v)}\prod_{e:H(v,e)\neq 0} A_e\otimes I,
	\end{align}
	from which we conclude that the map $e\mapsto P_AA_eP_A$ is an operator solution. The same argument shows that $e\mapsto P_BB_eP_B$ is a conjugate operator solution. (The conjugation comes from equation \eqref{eq:CLS-conjugate}.)

\end{proof}
Here we constructed representations directly, projecting onto the support of a known state. In the approximate case, this work will be subsumed by an application of the stability lemma \ref{lemma:vidick-gowers-hatami}.

\subsection{State-dependent distance}\label{subsection:state-dependent-distance}
We now begin to collect the necessary tools to generalize the previous subsection to the approximate case. To start, we need a convenient calculus for manipulating our notion of state-dependent distance. Recall the definition of $\drho\rho\cdot\cdot$ as 
\begin{equation}
\drho{\rho}{X}{Y} = \sqrt{\Tr_\rho(X-Y)^\dagger(X-Y)}
\end{equation} 
We use the same notation as the Kullback-–Leibler divergence despite the fact that our $\drho\rho\cdot\cdot$ is symmetric in its arguments. We do this because we will write complicated expressions in the place of $X$ and $Y$; the notation becomes harder to parse if the symbol $\|$ is replaced by a comma.
Notice that if $\rho_{AB}$ is the maximally entangled pure state, then $\drho{\rho_{AB}}{X\otimes I_B}{Y\otimes I_B}$ is exactly the usual $2$-norm distance $\norm{X-Y}_2$. Much like the fidelity of quantum states, the squared distance $\drho\rho\cdot\cdot^2$ is often more natural than the distance. We collect computationally useful properties of $\drho\rho\cdot\cdot$ in the following lemma.

\begin{lemma}\label{lemma:state-dependent-distance}
	Let $\m H = \m H_A\otimes \m H_B$ be a Hilbert space. Let $U,U_i$ be unitary operators on $\m H$. Let $Z,Z_i$ be arbitrary operators on $\m H$. Similarly, let $A_i, B_i$ be unitary operators on $\m H_A, \m H_B$ respectively. Let $X_i, Y_i$ be arbitrary operators on $\m H_A, \m H_B$, respectively.
	Let $\rho$ be a state on $\m H_A\otimes \m H_B$. Let $V: \m H\to \m H'$ be an isometry and $U'$ a unitary operator on $\m H'$. Then
	\begin{enumerate} [(a)]
		\item 
		\label{item:state-dependent-distance-square} 
		$\drho{\rho}{U}{I}^2 =2 - 2\Re\Tr_\rho U$. More generally, $\drho{\rho} ZI = 1 + \Tr_\rho Z\dagg Z - 2\Re\Tr_\rho Z$.
		\item
		\label{item:state-dependent-distance-inverse} 
		$\drho{\rho}{UZ}{I} = \drho{\rho}{Z}{U^\dagger}$. In particular, $\drho{\rho}{U}{I}=\drho{\rho}{U^\dagger}{I}$.
		\item\label{item:state-dependent-distance-triangle}
		$\drho{\rho}{Z_1}{Z_3} \leq \drho{\rho}{Z_1}{Z_2} + \drho{\rho}{Z_2}{Z_3}$.
		\item\label{item:state-dependent-distance-right-multiplication} $\drho{\rho}{ZU_2}{U_3} \leq \drho{\rho}{Z}{I} + \drho{\rho}{U_2}{U_3}$. If $U_2$ commutes with $U_3$ (in particular if $U_3 = I$), then also $\drho{\rho}{U_1U_2}{U_3} \leq \drho{\rho}{U_2}{I} + \drho{\rho}{U_1}{U_3}$.
		\item\label{item:state-dependent-distance-chaining} $\drho{\rho}{\prod_{i}{A_i\otimes I_B}}{\prod_{i}I_A \otimes B_i} \leq \sum_i \drho{\rho}{A_i\otimes I_B}{ I_A\otimes B_i}$.
		\item\label{item:state-dependent-distance-conjugation} If $\drho{\rho}{I_A\otimes WB}{I} \leq \nu$ and $\drho{\rho}{A\otimes B}{I}\leq \eta$, then \mbox{$\drho{\rho}{I_A\otimes BW}{I}\leq \nu + 2\eta$.}
		\item\label{item:state-dependent-distance-jensen}
		$\drho{\rho}{\E{i}{ U_i}} {I} \leq \E{i}\drho{\rho}{U_i}{ I}$.
		\item \label{item:state-dependent-distance-partial-trace}
		$\drho{\rho}{A\otimes I_B}{I_{AB}} = \drho{\rho_A}{A}{I_A}$, where $\rho_A = \Tr_B \rho$.
		\item \label{item:state-dependent-distance-isometry}
		$\drho{\rho}{Z_1}{Z_2} = \drho{V\rho V^\dagger}{VZ_1V^\dagger}{VZ_2V^\dagger}$. 
		\item\label{item:state-dependent-distance-projection-is-identity} 
		If $P$ is a projection such that $P\rho = P$, then $\drho{\rho}{XP}{I} = \drho{\rho}{X}{I} = \drho\rho XP$.
		\item \label{item:state-dependent-distance-isometry-switching} $\drho\rho U{V^\dagger U' V} = \drho{V\rho V\dagg}{VUV\dagg}{U'}$.
	\end{enumerate}
\end{lemma}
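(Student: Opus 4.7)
The plan is to exploit a single unifying reformulation: fix a positive square root $\rho^{1/2}$; then $\drho\rho{X}{Y} = \|(X-Y)\rho^{1/2}\|_2$, where $\|\cdot\|_2$ denotes the Hilbert--Schmidt (Frobenius) norm. From this vantage point, $\drho\rho{\cdot}{\cdot}$ is essentially a seminorm on operators, and several of the listed properties follow directly from standard facts about $\|\cdot\|_2$. In particular, the triangle inequality (c) and the Jensen-type bound (g) are immediate from the corresponding properties of any norm; the isometry-invariance items (i), (k) follow by inserting $V^\dagger V = I$ inside the trace; and the partial-trace identity (h) follows from the definition $\rho_A = \Tr_B \rho$ together with cyclicity of the trace. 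Item (j), read as ``the support of $\rho$ lies in the range of $P$, so that $P\rho^{1/2} = \rho^{1/2}$,'' reduces immediately to $(XP - I)\rho^{1/2} = (X - I)\rho^{1/2}$.

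For items (a) and (b), I would just unfold the definitions. For (a), expand $(U-I)^\dagger(U-I) = 2I - U - U^\dagger$ and take the trace against $\rho$, using $\Tr_\rho Z^\dagger = \overline{\Tr_\rho Z}$; the non-unitary variant of (a) is analogous. For (b), the identity $UZ - I = U(Z - U^\dagger)$ combined with $\|UM\|_2 = \|M\|_2$ (left-unitary-invariance of the Hilbert--Schmidt norm) gives the claim in one line.

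For the chaining items (d) and (e), the approach is to iterate the triangle inequality and peel off unitaries. For (e), one writes the difference as a telescoping sum
\begin{equation}
\prod_i A_i\otimes I_B \,-\, \prod_i I_A \otimes B_i \;=\; \sum_k \Bigl(\prod_{i<k}A_i\otimes I_B\Bigr)\,\bigl(A_k\otimes I_B - I_A\otimes B_k\bigr)\,\Bigl(\prod_{i>k}I_A\otimes B_i\Bigr),
\end{equation}
applies the triangle inequality in $\|\cdot\|_2$, and uses that each prefix $\prod_{i<k}(A_i\otimes I_B)$ is unitary (so can be pulled outside $\|\cdot\|_2$) together with the commutation between $A$- and $B$-side factors to clean up each term. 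For (d), a single triangle step $\drho\rho{ZU_2}{U_3}\le\drho\rho{ZU_2}{U_2}+\drho\rho{U_2}{U_3}$ followed by the rewriting $ZU_2 - U_2 = (Z-I)U_2$ is the core of the argument; the commutativity-assuming second half uses that $U_2$ can be commuted past $U_3$ at no cost in distance before applying (c).

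The main obstacle is item (f), which is the workhorse of the robust self-testing argument of Section~\ref{sec:self-testing}. The plan is to chain three triangle-inequality steps that bounce the action between the two subsystems. Starting from $I_A \otimes BW$, one first uses the hypothesis $A\otimes B \approx_\eta I$ to rewrite $I_A \otimes B$ as $A^\dagger \otimes I_B$ at a cost of $\eta$; next, one uses $I_A \otimes WB \approx_\nu I$, equivalently $I_A \otimes W \approx_\nu I_A \otimes B^\dagger$, to substitute $I_A \otimes B^\dagger$ for $I_A \otimes W$ at a cost of $\nu$; finally, one recognizes the resulting operator as $(A\otimes B)^\dagger$ and bounds its distance from $I$ by $\eta$ using the first hypothesis once more. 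Summing gives $\nu + 2\eta$. The delicate part is to justify each substitution in the state-dependent metric, not in operator norm: this is precisely what the earlier items of the lemma are designed to do, so (f) mainly packages them into a single bookkeeping argument.
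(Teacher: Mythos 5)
Your reformulation $\drho{\rho}{X}{Y}=\norm{(X-Y)\rho^{1/2}}_2$ is correct and disposes of (a), (b), (c), (g), (h), (i), (j) cleanly; for the triangle inequality (c) in particular it is tidier than the paper's argument, which decomposes $\rho$ into orthogonal pure states and invokes Cauchy--Schwarz, and your (g) does not even need the $U_i$ to be unitary.

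The genuine gap is that in (d), (e) and (f) you treat \emph{right} multiplication by a unitary as cost-free, when only \emph{left} multiplication is. Indeed $\norm{UM}_2=\norm{M}_2$ lets you discard a unitary prefix of $(X-Y)\rho^{1/2}$, but a unitary sitting \emph{between} the difference and $\rho^{1/2}$ conjugates the state:
\begin{equation}
\norm{(Z-I)U\rho^{1/2}}_2^2=\Tr\bigl[(Z-I)^\dagger(Z-I)\,U\rho U^\dagger\bigr]=\drho{U\rho U^\dagger}{Z}{I}^2,
\end{equation}
which need not equal $\drho{\rho}{Z}{I}^2$. This is exactly where your (d) breaks: $ZU_2-U_2=(Z-I)U_2$ is a valid operator identity, but it gives $\drho{\rho}{ZU_2}{U_2}=\drho{U_2\rho U_2^\dagger}{Z}{I}$, not $\drho{\rho}{Z}{I}$. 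In fact the first inequality of (d) is false as literally stated: with $\rho=\proj{0}$, $Z=\sigma_z$, $U_2=U_3=\sigma_x$ the left side is $2$ and the right side is $0$. The version that is true (and is what the later sections actually use) has the unitary on the left, $\drho{\rho}{U_2Z}{U_3}\le\drho{\rho}{Z}{I}+\drho{\rho}{U_2}{U_3}$, via $U_2Z-U_2=U_2(Z-I)$; note also that the second half of (d) needs no commutation hypothesis, since $\norm{U_1(U_2-I)\rho^{1/2}}_2=\drho{\rho}{U_2}{I}$. The same defect infects your telescoping for (e), whose terms carry the suffix $\prod_{i>k}(I_A\otimes B_i)$ to the right of the difference (and those factors need not commute with $I_A\otimes B_k$). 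The repairable argument rewrites the quantity as $\drho{\rho}{\prod_i(A_i\otimes B_i^\dagger)}{I}$ and telescopes $\prod_iC_i-I=\sum_k\bigl(\prod_{i<k}C_i\bigr)(C_k-I)$ with every processed factor on the left; this proves (e) with the $B$-product in reversed order, which coincides with the stated form exactly when the $B_i$ commute (as they do, approximately, in all applications). Finally, the first substitution in your (f) strands $I_A\otimes W$ on the right of the difference; the clean route is the conjugation $I_A\otimes BW=(A\otimes B)(I_A\otimes WB)(A\otimes B)^\dagger$ followed by the left-telescoping bound, giving $\eta+\nu+\eta$.
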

We'll use \eqref{item:state-dependent-distance-conjugation} and \eqref{item:state-dependent-distance-right-multiplication} to convert proofs of group relations into proofs of approximate relations between operators which try to represent the group.

The reader interested in following the $\drho\rho\cdot\cdot$ computations in the rest of the paper may find it useful to find their own proofs of the preceeding facts. For completeness, we provide detailed arguments in the sequel.
\begin{proof}~
\begin{enumerate}[(a)]
	\item We complete the square.
	\begin{align}
		\drho{\rho}{X}{I}^2 
		&=
		\Tr_\rho(X-I)^\dagger(X-I)
		\\&= \Tr_\rho(2I - X - X^\dagger)
		\\&= 2 - (\Tr_\rho X + \bar{\Tr_\rho X})
		\\&= 2 - 2\Re\Tr_\rho X.
	\end{align}
	\item In the second equality, we use that the map $\rho \mapsto X^\dagger\rho X$ is trace-preserving.
	\begin{align}
		\drho{\rho}{XY}{I}^2
		&= \Tr (XY - I)\rho (XY - I)^\dagger
		\\&= \Tr (Y - X^\dagger)\rho (Y - X^\dagger)^\dagger
		\\&= \drho{\rho}{Y}{X^\dagger}^2.
	\end{align}
	\item First, suppose $\rho = \proj \psi$ is pure. Then $\drho{\rho}{Z_1}{Z_3} = \norm{Z_1\ket \psi - Z_3\ket\psi}$ and the triangle inequality for the Hilbert space norm applies. Next, notice that $\drho{\rho}{Z_1}{Z_3}^2$ is linear in $\rho$. 

	Let $\rho = \sum_i \a_i\proj i$ be a convex combination of pairwise orthogonal pure states. Then we apply linearity and Cauchy-Schwarz:	
	\begin{align}
		\drho{\rho}{{Z_1}}{Z_3}^2 
		&= \sum_i \a_i \drho{\proj i}{Z_1}Z_3^2
		\\&\leq \sum_i \a_i \left[\drho{\proj i}{Z_1}{Z_2}^2+\drho{\proj i}{Z_2}Z_3^2+2\drho{\proj i}{Z_1}{Z_2}\drho{\proj i}{Z_2}Z_3\right]
		\\&= \drho{\rho}{{Z_1}}{{Z_2}}^2+\drho{\rho}{{Z_2}}{Z_3}^2
		+2\sum_i 
		\sqrt{\a_i\Braket{i|({Z_1}-{Z_2})^\dagger({Z_1}-{Z_2})|i}}
		\sqrt{\a_i\Braket{i|({Z_2}-Z_3)^\dagger({Z_2}-Z_3)|i}}
		\\&\leq \drho{\rho}{{Z_1}}{{Z_2}}^2+\drho{\rho}{{Z_2}}{Z_3}^2
		+2\sqrt{\sum_i \a_i
		\Braket{i|({Z_1}-{Z_2})^\dagger({Z_1}-{Z_2})|i}
		\sum_j\a_j
		\Braket{j|({Z_2}-Z_3)^\dagger({Z_2}-Z_3)|j}}
		\\&= \drho{\rho}{{Z_1}}{{Z_2}}^2+\drho{\rho}{{Z_2}}{Z_3}^2
		+2\drho{\rho}{{Z_1}}{{Z_2}}\drho{\rho}{{Z_2}}{Z_3} 
		\\&= \left(\drho{\rho}{{Z_1}}{{Z_2}} + \drho{\rho}{{Z_2}}{Z_3}\right)^2.
	\end{align}
	\item Applying \eqref{item:state-dependent-distance-inverse} and \eqref{item:state-dependent-distance-triangle},
	\begin{align}
		\drho{\rho}{XY}{Z}
		&= \drho{\rho}{XYZ^\dagger}{I}
		\\&= \drho{\rho}{X}{ZY^\dagger}
		\\&\leq \drho{\rho}{X}{I} + \drho{\rho}{I}{ ZY^\dagger}
		\\&= \drho{\rho}{X}{I} + \drho{\rho}{Y}{ Z}.
	\end{align}
	If $Y$ commutes with $Z$, then we have
	\begin{align}
		\drho{\rho}{XY}{Z}
		&= \drho{\rho}{XYZ^\dagger}{I}
		\\&= \drho{\rho}{XZ^\dagger Y}{I}
		\\&= \drho{\rho}{Y}{ZX^\dagger}
		\\&\leq \drho{\rho}{Y}{I} + \drho{\rho}{X}{ Z}.
	\end{align}
	\item We apply \eqref{item:state-dependent-distance-inverse} and then apply \eqref{item:state-dependent-distance-right-multiplication} once for each $i$.
	\begin{align}
		\drho{\rho}{\prod_{i}A_i\otimes I_B}{\prod_{i}I_A\otimes B_i} 
		&= \drho{\rho}{\prod_{i}A_i\otimes B_i^\dagger}{I}
		\\&\leq \sum_i \drho{\rho}{A_i\otimes B_i^\dagger}{I}
		\\&= \sum_i \drho{\rho}{A_i\otimes I_B}{I_A \otimes B_i}.
	\end{align}
	\item This follows from \eqref{item:state-dependent-distance-inverse} and \eqref{item:state-dependent-distance-chaining} by writing $I_A \otimes BW = (A)(I_A)(A^\dagger)\otimes (B)(WB)(B^\dagger)$.
	\item By linearity and \eqref{item:state-dependent-distance-square}, we have
	\begin{align}
		\drho{\rho}{\E{i}{U_i}}{I}^2 
		&= 2 - 2\Re\Tr_\rho\left[\E{i} U_i\right]
		\\&= \E{i}\left[ 2 - 2\Re\Tr_\rho U_i\right]
		\\&= \E{i} \drho{\rho}{U_i}{I}^2.
	\end{align}
	Then Jensen's inequality completes the proof. 
	\item We use that the trace of the partial trace is the trace.
	\begin{align}
		\drho{\rho}{A\otimes I_B}{I}^2 
		&= 2 - 2\Re\Tr_\rho A\otimes I_B 
		\\&= 2 - 2\Re\Tr_{\rho_A}A 
		\\&= \drho{\rho_A}{A}{I}^2.
	\end{align}
	\item We apply cyclicity of trace and unitarity, i.e.\ $V\dagg V = I$. 
	\begin{align}
		\drho{V\rho V\dagg}{VZ_1V\dagg}{VZ_2V\dagg}^2
		&= \Tr V(Z_1-Z_2)\dagg V\dagg V(Z_1-Z_2)V\dagg V\rho V\dagg
		\\&= \Tr V(Z_1-Z_2)\dagg (Z_1-Z_2)\rho V\dagg
		\\&= \Tr (Z_1-Z_2)\dagg (Z_1-Z_2)\rho.
	\end{align}
	\item Again, we apply cyclicity of trace.
	\begin{align}
		\drho\rho{XP}I^2
		&= \Tr_\rho (XP- I)\dagg(XP-I)
		\\&= \Tr (PX\dagg- I)(XP-I)\rho
		\\&= \Tr (PX\dagg- I)(X-I)\rho
		\\&= \Tr \rho(PX\dagg- I)(X-I)
		\\&= \Tr_\rho(X\dagg- I)(X-I)
		\\&= \drho\rho XI^2.
	\end{align}
	This gives the first equality; a similar manipulation gives the second.
	\item By unitary of $U$, we can apply \eqref{item:state-dependent-distance-inverse} to get
	\begin{equation}
		\drho\rho U{V\dagg U' V}
		= \drho\rho {U\dagg V\dagg U' V} I.
	\end{equation}
	Next we apply \eqref{item:state-dependent-distance-isometry} to obtain
	\begin{equation}
		\drho\rho U{V\dagg U' V}
		= \drho{V\rho V\dagg} {VU\dagg V\dagg U' V V\dagg} {VV\dagg}.
	\end{equation}
	Now we notice that $VV\dagg$ is a projection with $(VV\dagg) V\rho V\dagg = V\rho V\dagg$, so we apply both parts of \eqref{item:state-dependent-distance-projection-is-identity}:
	\begin{equation}
		\drho{V\rho V\dagg} {VU\dagg V\dagg U' V V\dagg} {VV\dagg}
		=\drho{V\rho V\dagg} {VU\dagg V\dagg U'} {I}.
	\end{equation}
	Finally, by unitary of $U'$, we can apply \eqref{item:state-dependent-distance-inverse} to get
	\begin{equation}
		\drho{V\rho V\dagg} {VU\dagg V\dagg U'} {I}
		=\drho{V\rho V\dagg} {VU\dagg V\dagg} {(U')\dagg}.
	\end{equation}
	Taking adjoints and chaining equalities recovers the desired equation.
\end{enumerate}
\end{proof}

We now use some of the properties of the state-dependent distance to give an approximate version of Lemma \ref{lemma:unitary-observable-strategy-exact} from Section \ref{sec:linear-constraint-games}.

\begin{lemma}[Observable form for LCS game strategies, approximate version]\label{lemma:unitary-observable-strategy}
	Suppose that $\set{A_e^{(v)}}, \set{B_e}, \rho$ is a strategy presented via observables. Let $p_\text{con}$ be the probability that Alice and Bob pass the consistency check, $p_\text{sat}$ be the probability that Alice and Bob pass the constraint satisfaction check, and $p_\text{win}$ be the probability that they pass both checks. Then we have the immediate bounds
	\begin{equation}
		p_\text{sat} + p_\text{con} - 1 \leq p_\text{win} \leq \min\set{p_\text{sat}, p_\text{con}},
	\end{equation}
	together with the following bounds on $p_\text{sat}$ and $p_\text{con}$ in terms of the strategy:
	
	\begin{align}
		\label{eq:approximate-win-criterion-con}
		\eta &= \E{v,e} \frac14\drho{\rho}{A_e^{(v)}\otimes B_e}{I}^2,
		&\eta &\leq 1-p_\text{con} \leq d^2\eta,
		\\ \label{eq:approximate-win-criterion-sat}
		\mu &= 
		\E{v} \frac14
		\drho{\rho}{\prod_{e}\left(A_e^{(v)}\right)^{H(v,e)}\otimes I}{\w_d^{l(v)}I}^2,
		&\mu &\leq 1-p_\text{sat} \leq d^2\mu.
	\end{align}
\end{lemma}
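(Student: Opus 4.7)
The first pair of inequalities, $p_\text{sat}+p_\text{con}-1 \leq p_\text{win} \leq \min\{p_\text{sat},p_\text{con}\}$, is purely probabilistic: $p_\text{win}$ is the probability of the intersection of the consistency and satisfaction events, so it is at most the minimum of the marginals and at least $p_\text{sat}+p_\text{con}-1$ by a union bound on the complementary events. No quantum input is needed.

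For the state-dependent distance bounds, my plan is to reuse the trace computations already done inside the proof of Lemma \ref{lemma:unitary-observable-strategy-exact}, namely
\begin{equation}
\Tr_\rho A_e^{(v)}\otimes B_e = \sum_k \w_d^{k}\, p_k^{(v,e)},\qquad p_k^{(v,e)} := \Pr[a(e)-b\equiv k \mid x=v,\, y=e],
\end{equation}
together with the analogous formula for $\Tr_\rho \prod_e (A_e^{(v)})^{H(v,e)}\otimes I$. Feeding these into part \eqref{item:state-dependent-distance-square} of Lemma \ref{lemma:state-dependent-distance}---and using part \eqref{item:state-dependent-distance-inverse} to absorb the target phase $\w_d^{l(v)}$ appearing in the satisfaction expression---then invoking the elementary identity $\tfrac12(1-\cos\theta)=\sin^2(\theta/2)$, the summands of $\eta$ and $\mu$ rewrite as $\sum_{k\neq 0}p_k^{(v,e)}\sin^2(\pi k/d)$ and $\sum_{k\neq 0}q_k^{(v)}\sin^2(\pi k/d)$ respectively, where $q_k^{(v)}:=\Pr[\sum_e H(v,e)a(e) - l(v) \equiv k \mid x=v]$. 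On the other hand, unpacking the failure probability of each check directly gives $1-p_\text{con} = \E{v,e}\sum_{k\neq 0}p_k^{(v,e)}$ and $1-p_\text{sat} = \E{v}\sum_{k\neq 0}q_k^{(v)}$.

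Both remaining inequalities then collapse into the same trigonometric sandwich: for every $k\in\{1,\dots,d-1\}$ one has $\sin^2(\pi k/d)\in[\sin^2(\pi/d),\,1]$. The upper endpoint immediately yields $\eta\leq 1-p_\text{con}$ and $\mu\leq 1-p_\text{sat}$. The lower endpoint, combined with the standard estimate $\sin(x)\geq 2x/\pi$ on $[0,\pi/2]$, gives $\sin^2(\pi/d)\geq 4/d^2$, and inverting produces $1-p_\text{con}\leq \tfrac{d^2}{4}\eta\leq d^2\eta$ and the analogous estimate for $\mu$. I don't anticipate any genuine obstacle in carrying this out; the only non-algebraic ingredient is the last trigonometric bound, which is precisely where the multiplicative $d^2$ blowup originates and which also explains why one cannot in general hope to sharpen the bound to one independent of $d$.
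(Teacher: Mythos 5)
Your proposal is correct and follows essentially the same route as the paper: both reduce the two state-dependent distances to convex combinations $\sum_k p_k\,\w_d^k$ of roots of unity via the spectral decomposition of the observables, and then sandwich $1-\Re\w_d^k$ between constants for $k\neq 0$. The only difference is cosmetic — your identity $\tfrac12(1-\cos\theta)=\sin^2(\theta/2)$ plus Jordan's inequality replaces the paper's appendix Lemmas \ref{lemma:convex-inequality-easy} and \ref{lemma:convex-inequality-hard} (which use a Taylor estimate for $1-\cos(2\pi/d)$), and in fact yields a slightly sharper constant than the stated $d^2\eta$.
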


\begin{proof}[Proof of Lemma \ref{lemma:unitary-observable-strategy}]
	As in the proof of the exact case, let $\tilde B^i_e$ and $\tilde A_v^a$ be projectors onto the eigenspaces of the observables, as in the following spectral decomposition:
	\begin{align}
		B_e:= \sum_j\w_d^{-j}\tilde B^{i}_{e}
		&& 
		A_e^{(v)}:= \sum_i \w_d^i\sum_{a: a(e) =i} \tilde A_{v}^a.
	\end{align}

	Now, we compute
	\begin{align}
		\E{v,e}\Tr_\rho A_e^{(v)}\otimes B_e 
			&= \E{v,e}\sum_{i,j}\w_d^{i-j}\Tr_\rho \left(\sum_{a: a(e) =i} \tilde A_{v}^a\right)\otimes \tilde B^{j}_{e} 
		\\	&= \E{v,e}\sum_k \w_d^k\Pr[a(e) - b \equiv k\mid \text{questions }x=v, y=e].
		\\	&= p_\text{con} + \sum_{k\in \Z_d\minus\set 0} \w_d^k\Pr[a(e) - b \equiv k].
	\end{align}
	Taking real parts and applying the inequalities of complex numbers \ref{lemma:convex-inequality-easy}, \ref{lemma:convex-inequality-hard}, we recover equation \eqref{eq:approximate-win-criterion-con}:
	\begin{align}
		1 - 2 (1 - p_\text{con})
		\leq
		&\E{v,e} \Re\Tr_\rho A_e^{(v)}\otimes B_e
		\leq 
		1 - 2d^{-2}(1-p)
		\\
		4 (1-p_\text{con})
		\geq
		&\E{v,e} \drho{\rho}{A_e^{(v)}\otimes B_e}{I}^2
		\geq 
		4 d^{-2}(1-p_\text{con}).
	\end{align}
	(To get from the first line to the second, we applied Lemma \ref{lemma:state-dependent-distance}\eqref{item:state-dependent-distance-square}.)

	With a similar computation, we get:
	\begin{align}
		 &\E{v}\w_d^{-l(v)}\Braket {\psi |\prod_{e} \left(A_e^{(v)}\right)^{H(v,e)}\otimes I | \psi } \\
		= & \E{v}\sum_k\w_d^{k}\Braket {\psi |\sum_{\substack{a\\ \sum_e H(v,e)a(e)\equiv k}} \tilde A_{v}^a\otimes I |\psi} \\
		= & \E{v}\sum_k \w_d^{k-l(v)}\Pr\left[\sum_{e}H(v,e)a(e)\equiv k\middle| \text{ question }x=v\right]\\
		= & p_\text{sat} + \sum_{k\in \Z_d\minus\set 0} \w_d^k\Pr\left[\sum_{e}H(v,e)a(e)\equiv k+l(v)\right].
	\end{align}
	Again, \eqref{eq:approximate-win-criterion-sat} follows from the above via Lemmas \ref{lemma:convex-inequality-easy} and \ref{lemma:convex-inequality-hard}.
\end{proof}

\subsection{The stability lemma}
\label{subsection:stability-lemma}

We'll use a general stability theorem for approximate representations of finite groups, which will let us take the following approach to robustness. From a quantum strategy winning with high probability, we extract an ``approximate representation'' of the solution group, i.e.\ a map from the group to unitaries which is approximately a homomorphism. The stability theorem lets us conclude that this function is close to an exact representation in the way that the unitaries act on the joint state of the provers, up to a local isometry. Once we have a representation, we'll be able to start applying reasoning analagous to that of \S \ref{subsection:exact-self-testing}. 

We were first made aware of results of this type by \cite{gowers2015inverse}. The result of interest was restated more conveniently in \cite{gowers2017generalizations}. In what follows, $U(\m H)$ will denote the group of unitary operators on the Hilbert space $\m H$. 
\begin{thm}[Informal statement of Theorem 15.2 of \cite{gowers2017generalizations}]\label{thm:gowers-hatami}
	Let $G$ be a finite group and $f: G\to U(\C^n)$ be such that $\norm{f(x)f(y) - f(xy)}_2 \leq \e\sqrt n$ for all $x,y\in G$. Then there exists $m \leq (1+\e^2)n$, an isometry $V: \C^n\to C^m$, and a unitary representation $\s : G \to U(\C^m)$, such that $\norm{f(x) - V^\dagger\s(x)V}_2 \in O(\e\sqrt n)$ for every $x\in G$.
\end{thm}

Applying this theorem directly requires a guarantee on the Hilbert-Schmidt distance between operators. However, experiments with nonlocal games will only give us guarantees on the state-dependent distance $D_\rho$ between operators, where $\rho$ is the state used by the provers. The following variant addresses this concern. The statement and proof are due to Vidick.
\begin{lemma}[\cite{vidick2017approx}]\label{lemma:vidick-gowers-hatami}
	Let $G$ be a finite group, $f: G \to U(\m H_A)$ be such that $f(x\1) = f(x)^\dagger$, $\rho_{AB}$ a state on $\m H_A\otimes \m H_B$  and
	\begin{equation}
		\E{x,y\in G}\drho\rho{f(x)f(yx)^\dagger f(y)\otimes I_B}{I_{AB}} \leq \eta.
	\end{equation}
	Then there is some Hilbert space $\m H_{\hat A}$, an isometry $V: \m H_A\to \m H_{\hat A}$, and a representation $\tau: G\to U(\m H_{\hat A})$ such that
	\begin{align}
		\E{x\in G} \drho{\rho}{f(x)\otimes I_B}{ V^\dagger \tau(x)V\otimes I_B} 
		&\leq \eta\text{, or equivalently}
		\E{x\in G} \drho{(V\otimes I_B)\rho (V\otimes I_B)\dagg}{Vf(x)V\dagg\otimes I_B}{\tau(x)\otimes I_B} 
		&\leq \eta.
	\end{align}
\end{lemma}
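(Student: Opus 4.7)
The plan is to adapt the Gowers--Hatami embedding construction to the state-dependent seminorm. Let $\m H_{\hat A} := \C^{|G|} \otimes \m H_A$, with $\{\ket g\}_{g \in G}$ an orthonormal basis of $\C^{|G|}$ indexed by group elements; set $\tau(x) := L(x) \otimes I_A$ with $L(x)\ket g := \ket{xg}$ the left regular representation; and define the candidate isometry
\begin{equation}
    V\ket\psi := \frac{1}{\sqrt{|G|}} \sum_{g \in G} \ket g \otimes f(g^{-1})\ket\psi.
\end{equation}
Unitarity of each $f(g)$ combined with $f(g^{-1}) = f(g)^\dagger$ makes $V^\dagger V = I_A$ a one-line calculation, and $\tau$ is a representation by construction.

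The computational core is the identity
\begin{equation}
    V^\dagger \tau(x) V \;=\; \E{g} f(g)\, f(g^{-1} x),
\end{equation}
obtained by pushing $L(x)$ through the sum defining $V$, reindexing, and applying $f(g^{-1})^\dagger = f(g)$. If $f$ were a representation each summand would collapse to $f(x)$; in general the defect $\E{g}[f(g) f(g^{-1} x) - f(x)]$ measures the failure of that collapse. To turn this into the lemma's bound I would use that $A \mapsto \drho{\rho}{A}{0}$ is a seminorm (hence convex in $A$) to pull the average outside the distance, then apply Lemma \ref{lemma:state-dependent-distance}(b) to land the defect inside a triple product:
\begin{align}
    \E{x} \drho{\rho}{f(x) \otimes I_B}{V^\dagger \tau(x) V \otimes I_B}
    &\leq \E{x,g} \drho{\rho}{f(g) f(g^{-1} x) \otimes I_B}{f(x) \otimes I_B}\\
    &= \E{x,g} \drho{\rho}{f(x^{-1}) f(g) f(g^{-1} x) \otimes I_B}{I_{AB}}.
\end{align}
The reparametrization $u := x^{-1},\; v := g^{-1} x$ is a uniform-measure-preserving bijection of $G \times G$ satisfying $vu = g^{-1}$, so $f(u) f(vu)^\dagger f(v) = f(x^{-1}) f(g) f(g^{-1} x)$. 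Substituting turns the right-hand side into $\E{u,v} \drho{\rho}{f(u) f(vu)^\dagger f(v) \otimes I_B}{I_{AB}}$, which is at most $\eta$ by hypothesis. The equivalence of the two formulations in the conclusion is Lemma \ref{lemma:state-dependent-distance}(k) applied pointwise in $x$.

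The only step requiring real care is this reindexing. The Hilbert--Schmidt version of Gowers--Hatami can bound $\|V^\dagger \tau(x) V - f(x)\|_2$ operator-by-operator, but here every inequality must live in the state-dependent seminorm $\drho{\rho}{\cdot}{\cdot}$, so $\rho$ must remain attached throughout. Two facts carry everything through without loss: convexity of $\drho{\rho}{\cdot}{0}$, which interchanges the average over $g$ with the distance; and the group structure, which supplies a measure-preserving bijection matching the ``natural'' distribution over $(x, g)$ arising from $V^\dagger \tau(\cdot) V$ to the ``hypothesis'' distribution over $(u, v)$.
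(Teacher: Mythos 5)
Your proof is correct. The second half of your argument---establishing $V^\dagger \tau(x)V = \E{g}f(g)f(g^{-1}x)$, pulling the average over $g$ outside the distance by convexity of $Z\mapsto\drho{\rho}{Z}{0}$, and then reindexing $(x,g)\mapsto(u,v)=(x^{-1},g^{-1}x)$ to land on the hypothesis---is the same as the paper's (after the substitution $y\mapsto y^{-1}$, the paper's pullback $\E{y}f(y)^\dagger f(yx)$ is literally your $\E{g}f(g)f(g^{-1}x)$). What genuinely differs is the construction of $(\m H_{\hat A},V,\tau)$. The paper follows the Gowers--Hatami Fourier-analytic route: $V$ is assembled from the operators $\hat f(\s)=\E{x\in G}f(x)\otimes\bar{\s(x)}$, maximally entangled states on each irrep block, and a pointer register recording the irrep, and evaluating $V^\dagger\tau(x)V$ requires the second orthogonality relation (Fact \ref{fact:orthogonality}). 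Your left-regular-representation embedding $V\ket\psi=\abs{G}^{-1/2}\sum_g\ket g\otimes f(g^{-1})\ket\psi$ reaches the same pullback by an elementary reindexing with no character theory, on a space of dimension $\abs{G}\cdot\dim\m H_A$ rather than up to $\abs{G}^2\cdot\dim\m H_A$. The two constructions are of course related---decomposing the regular representation into irreducibles essentially recovers the paper's---and since the paper never invokes a dimension bound on $\m H_{\hat A}$, nothing is lost by your simplification; the Fourier version pays off only when one wants the sharp dimension control of Theorem \ref{thm:gowers-hatami}, which is not needed here.
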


\noindent
Notice the lack of a dimension bound on $\hat A$. From the proof one can check that the dimension of $\hat A$ is at most $\abs G^2$ times the dimension of $A$. We won't use any dimension bound explicitly, and proving a tight dimension bound takes considerable effort. 
We give a self-contained proof of Lemma \ref{lemma:vidick-gowers-hatami}. 

\begin{proof}[Proof of Lemma \ref{lemma:vidick-gowers-hatami}, \cite{vidick2017approx}]
	Let $\s$ vary over irreducible representations of $G$. For each $\s$, let $n_\s$ be the dimension of $\s$.
	We define a generalized Fourier transform of $f$, which acts on irreps of $G$, by 
	\begin{equation}
		\hat f(\s) = \E{x\in G} f(x) \otimes \bar {\s (x)} \in \m L(C^d\otimes \C^{n_\s}).
	\end{equation}
	Let $\m H_{A_1A_2} = \bigoplus_\s \C^{n_\s}_{A_1}\otimes \C^{n_\s}_{A_2}$. (Notice that the dimension of $\m H_{A_1A_2}$ is $\abs G$ by Fact \ref{fact:character-dimension}.) For each $\s$, define a state $\ket{\r{EPR}_\s} = \frac1{\sqrt {n_\s}}\sum_i^{n_\s}\ket{ii}$ in the $\s$-summand of $\m H_{A_1A_2}$. (Notice that the $\ket{\r{EPR}_\s}$ form an orthonormal family.)
	Let $\m H_{A_3} = \Span\set{\ket \s}$ be a Hilbert space of dimension equal to the number of inequivalent irreps of $G$. Finally, we define the Hilbert space $\m H_{\hat A}$, isometry $V: \m H_A \to \m H_{\hat A}$, and representation $\tau: G\to U(\m H_{\hat A})$ from the statement of the lemma. 
	\begin{align}
		{\m H}_{\hat A} &= {\m H}_A\otimes {\m H}_{A_1}\otimes {\m H}_{A_2}\otimes {\m H}_{A_3},
		\\V &= \sum_\s n_\s(\hat f(\s)_{AA_1}\otimes I_{A_2})(I_A\otimes \ket\eprrho_{A_1A_2}\otimes \ket\s_{A_3}),
		\\\tau(x) &= I_{AA_1} \otimes \sum_\s (\s(x)_{A_2} \otimes \proj \s_{A_3}).
	\end{align}
	It's clear that $\tau$ is a unitary representation. We check that $V$ is an isometry:
	\begin{align}
		V^\dagger V 
		&= \sum_\s n_{\s}^2(I_A \otimes \bra\eprrho)\hat f (\s)^\dagger \hat f(\s)(I_A\otimes \ket\eprrho)
		\\&= \sum_\s n_{\s} \Tr_{A_1}\hat f (\s)^\dagger \hat f(\s)
		\\&= \b E_x f (x)^\dagger  f(x) 
		\\&= I_A.
	\end{align}
	Now we compute the pullback of $\tau$ along $V$:
	\begin{align}
		V^\dagger \tau(x) V 
		&= \sum_\s n_{\s}^2(I_A \otimes \bra\eprrho) (\hat f (\s)^\dagger \hat f(\s) \otimes \s(x))(I_A\otimes \ket\eprrho)
		\\&= \sum_\s n_{\s}^2 \b E_{y,z\in G}f(y)^\dagger f(z)\bra\eprrho\s(y)^T\bar{\s(z)} \otimes \s(x) \ket\eprrho
		\\&= \sum_\s n_{\s}  \b E_{y,z\in G} \Tr(\s(x)^T\s(y)^T\bar{\s(z)})f(y)^\dagger f(z)
		\\&= \b E_{y\in G}\sum_\s n_{\s}   \b E_{z\in G}\Tr(\s(yxz\1))f(y)^\dagger f(z)
		\\&= \b E_{y\in G} f(y)^\dagger f(yx),
	\end{align}
	where the last equality follows from Fact \ref{fact:orthogonality}. 
	
	Then it follows from properties of $\drho\rho\cdot\cdot$ that
	\begin{align}
		\E{x\in G} \drho{\rho}{f(x)\otimes I_B}{ V^\dagger \tau(x) V\otimes I_B}
		&=
		\E{x,y\in G} \drho{\rho}{f(x)f(yx)^\dagger f(y)\otimes I_B}{I} 
		\\&\leq \eta.
	\end{align}
	The equivalence of the two forms of the conclusion follows from Lemma \ref{lemma:state-dependent-distance}\eqref{item:state-dependent-distance-isometry-switching}.
\end{proof}
\noindent Notice that we can also use the lemma with the isometry acting on the state instead of the representation, since
	\begin{align}
		\E{x\in G} \drho{\rho}{f(x)\otimes I_B}{V^\dagger \tau(x) V\otimes I_B}
		&=\E{x\in G} \drho{\rho}{f(x)V^\dagger \tau(x)^\dagger V\otimes I_B}{ I}
		\\&=\E{x\in G} \drho{V\rho V^\dagger}{Vf(x)V^\dagger \tau(x)^\dagger VV^\dagger\otimes I_B}{I}
		\\&=\E{x\in G} \drho{V\rho V^\dagger}{Vf(x)V^\dagger \tau(x)^\dagger\otimes I_B}{I}
		\\&=\E{x\in G} \drho{V\rho V^\dagger}{Vf(x)V^\dagger\otimes I_B}{\tau(x)\otimes I_B}.
	\end{align}
	Here the last two equalities are applications of Lemma \ref{lemma:state-dependent-distance}(\ref{item:state-dependent-distance-isometry},\ref{item:state-dependent-distance-projection-is-identity}).

\subsection{Quantitative van Kampen lemma}
\label{subsection:quant-van-kampen}

In order to apply the stability lemma of the previous subsection, we need an error bound averaged over the whole solution group. From playing an LCS game, we learn an error bound averaged over the generators and relations. In order to go from the latter to the former, we need a bound on how much work is required to build up the individual group elements from its generators and relations. In particular, we'll use the following quantitative version of the van Kampen lemma introduced in \S \ref{subsection:van-kampen-diagrams}.

\begin{prop}\label{lem:quant-van-kampen}
	Suppose $G = \Braket{S:R}_{\Z_d}$ and $\m P$ is a $G$-picture witnessing the equation $w = J^a$.
	Then the equation $w = J^a$ is true, and can be proven by starting with the equation $1= 1$ and applying the following steps in some order:
	\begin{itemize}
		\item at most twice for each appearance of generator $s$ in $\m P$, conjugate both sides of the equation by $s$, and
		\item exactly once for each appearance of the relation $J^{-a}r\in R$ in $\m P$, right-multiply the left-hand side of the equation by $r$ and multiply the right-hand side by $J^a$.
	\end{itemize}
\end{prop}

It suffices to prove this only for group pictures whose edges and vertices form a connected graph. For graphs with more than one connected component, we can split the picture into subpictures, apply the lemma, and then glue them back in the obvious way.

The proof proceeds via a simple algorithm---we prove the validity of the relation witnessed by the group picture by starting from a subpicture (which witnesses a different relation), and inductively growing it to the whole picture. This can be thought of as a graphical way to prove the validity of the equation witnessed by the group picture, with each step in the algorithm corresponding to a rearrangement of the starting relation. The algorithm then terminates when the subpicture has grown to the full picture, and the starting relation has been transformed into the relation witnessed by the picture. We will then keep track of the steps in the algorithm to verify that Proposition \ref{lem:quant-van-kampen} is true. We describe the algorithm precisely in \ref{algorithm:quant-van-kampen}. We expect, however, that most readers will be satisfied by examining the example application of the algorithm in Figure \ref{fig:quant-van-kampen}.

\begin{figure}

	    \resizebox{\textwidth}{!}{\includegraphics{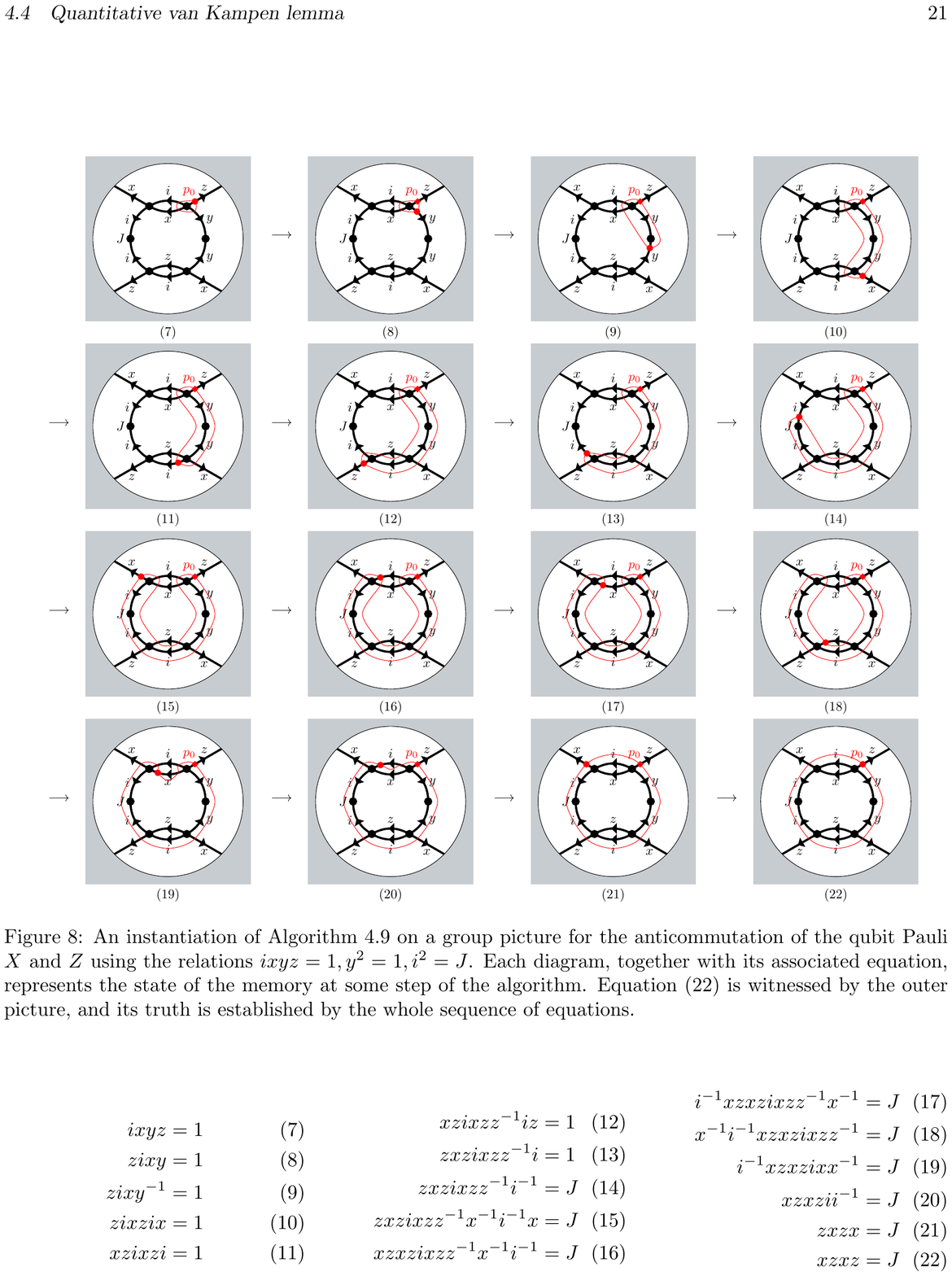}}

\caption{An instantiation of Algorithm \ref{algorithm:quant-van-kampen} on a group picture for the anticommutation of the qubit Pauli $X$ and $Z$ using the relations $ixyz = 1, y^2 = 1, i^2 = J$. 
Each diagram, together with its associated equation, represents the state of the memory at some step of the algorithm. Equation \eqref{eq:quant-van-kampen-12} is witnessed by the outer picture, and its truth is established by the whole sequence of equations.}

\begin{multicols}{3}
	~\begin{align}
		\label{eq:quant-van-kampen-0}
		ixyz &= 1\\
		zixy &= 1\label{eq:quant-van-kampen-1}\\
		zixy\1 &= 1\label{eq:quant-van-kampen-2}\\
		zixzix &= 1\label{eq:quant-van-kampen-3}\\
		xzixzi &= 1\label{eq:quant-van-kampen-4}
	\end{align}
	~\begin{align}
		xzixzz\1iz &= 1\label{eq:quant-van-kampen-5}\\
		zxzixzz\1i &= 1\label{eq:quant-van-kampen-6}\\
		zxzixzz\1i\1 &= J\label{eq:quant-van-kampen-7}\\
		zxzixzz\1x\1i\1x &= J\label{eq:quant-van-kampen-8}\\
		xzxzixzz\1x\1i\1 &= J\label{eq:quant-van-kampen-8b}
	\end{align}
	~\begin{align}
		i\1xzxzixzz\1x\1 &= J\label{eq:quant-van-kampen-8c}\\
		x\1i\1xzxzixzz\1 &= J\label{eq:quant-van-kampen-9}\\
		i\1xzxzixx\1 &= J\label{eq:quant-van-kampen-10}\\
		xzxzii\1 &= J\label{eq:quant-van-kampen-11}\\
		zxzx &= J\label{eq:quant-van-kampen-12a}\\
		xzxz &= J\label{eq:quant-van-kampen-12}
	\end{align}
\end{multicols}
\end{figure}
\label{fig:quant-van-kampen}
In order to define the algorithm, we set up some terminology:
The \emph{bubble} is the boundary of the expanding subpicture. A \emph{bubble-intersection} is the intersection between the bubble and an edge of the picture. The \emph{pointer} is a (vertex, edge) pair. In our diagrams, we'll draw it as a dot at the bubble-intersection at the edge near the vertex.
To \emph{advance the pointer} is to move the pointer from its current location to the next bubble-intersection clockwise around the bubble. 

We'll work informally with smooth curves. This approach can be rigorized with notions from differential topology---see e.g.\ \cite{guillemin2010differential} for an introduction to the subject. See \cite{slofstra2016tsirelson} for a more careful topological treatment of group pictures. 
Alternatively, one can use graph embeddings where all the vertices lie at integer coordinates and all curves are piecewise linear, and then argue constructively from there.

\begin{algorithm}\label{algorithm:quant-van-kampen}
	
First, pick an edge $e_0$ incident on the boundary of $\m P$. Let $v_0$ be the interior vertex incident to $e_0$. Initialize the bubble so that $v_0$ is the only vertex inside it, and each edge going out of $v_0$ has exactly one bubble-intersection. Initalize the pointer at the bubble-intersection with $e_0$; call this initial point $p_0$. Additionally, initialize variables $w \in \m F(S)$  a word in the generators and $j \in \braket J$ will be some power of $J$. Set $w$ to be the counterclockwise product of the labels of the edges around $v_0$; pick the order so that the rightmost letter corresponds to the lcoation of the pointer. Set $j$ to be label of $v_0$.

Repeat the following until the pointer returns to $p_0$.
Let $(v,e)$ be the location of the pointer. Let $s$ be the group element labeling $e$. Let $v'$ be the other vertex incident on $e$.
\begin{itemize}
	\item If $v'$ is on the boundary of $\m P$, advance the pointer. Additionally, replace $w$ by $sws\1$, canceling an $ss\1$ term that appears. (In the example of Figure \ref{fig:quant-van-kampen}, this happens immediately after states \eqref{eq:quant-van-kampen-0}, \eqref{eq:quant-van-kampen-3}, \eqref{eq:quant-van-kampen-5}, \eqref{eq:quant-van-kampen-8}, \eqref{eq:quant-van-kampen-12a}.)
	\item If $v'$ is not inside the bubble, continuously deform the bubble to contain $v'$.
	Move the pointer to $(v',e)$ and then advance the pointer.

	Additionally, let $r$ be the counterclockwise  product of the edges around $v'$, starting with $e$. Let $l$ be the label of $v'$. Replace $w$ by $wr$, canceling an $ss\1$ term that appears. Replace $j$ by $J^lj$. 
	(In the example of Figure \ref{fig:quant-van-kampen}, this happens immediately after states \eqref{eq:quant-van-kampen-1}, \eqref{eq:quant-van-kampen-2}, \eqref{eq:quant-van-kampen-4}, \eqref{eq:quant-van-kampen-6}, \eqref{eq:quant-van-kampen-7}.)
	\item If $v'$ is inside the bubble, advance the pointer. 

	If the pointer is now on $(v',e)$, continuously deform the bubble to contain $e$ and move the pointer back to the most recently visited intersection which still exists. Additionally, cancel an $ss\1$ term that was already present in $w$, and replace $w$ by $(s')\1 ws'$, where $s'$ is the generator associated with the final location of the pointer. (In the example of Figure \ref{fig:quant-van-kampen}, this happens immediately after states \eqref{eq:quant-van-kampen-9}, \eqref{eq:quant-van-kampen-10}, \eqref{eq:quant-van-kampen-11}.)

	If instead the pointer is not on $(v',e)$, replace $w$ by $sws\1$, canceling an $ss\1$ term that appears. 
	(In the example of Figure \ref{fig:quant-van-kampen}, this happens immediately after states \eqref{eq:quant-van-kampen-8b}, \eqref{eq:quant-van-kampen-8c}.)
\end{itemize}
\end{algorithm}

\begin{lemma}\label{lemma:quant-van-kampen-induction}
	After each iteration of the main loop of Algorithm \ref{algorithm:quant-van-kampen}, all of the following hold:
\begin{enumerate}[(i)]
	\item The equation $w=j$ is true in $G$.
	\item\label{lemma:quant-van-kampen-induction-2} The equation witnessed by the picture whose boundary is the bubble is $w=j$.
	\item\label{lemma:quant-van-kampen-induction-3} On the counter-clockwise arc from $p_0$ to the pointer, each bubble-intersection is a previous location of the pointer.
	\item\label{lemma:quant-van-kampen-induction-4} On the counter-clockwise arc from $p_0$ to the pointer, there is at most one bubble-intersection with each edge of the graph.
	\item The rightmost letter of the word on the left-hand side of the equation is the group element associated with the pointer.
\end{enumerate}
\end{lemma}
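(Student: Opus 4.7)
My plan is to prove the five invariants by induction on the number of completed iterations of the main loop of Algorithm \ref{algorithm:quant-van-kampen}. First I would verify the base case, i.e.\ that the invariants hold after initialization: the bubble contains only $v_0$, so the ``bubble-picture'' consists of $v_0$ together with dangling half-edges labelled by its incident generators, and its witnessed relation is exactly the counterclockwise product of those edge labels equals the label of $v_0$, which matches the initial values of $w$ and $j$. Invariant (i) follows because the bubble-picture is a valid $G$-picture in its own right, so Lemma \ref{lemma:van-kampen} applies; invariants (iii)--(v) are trivial since no arc has been traversed yet. After this, invariant (v) is a straightforward bookkeeping check in every case, so the bulk of the argument lies in maintaining (i)--(iv).

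For the inductive step I would do a case analysis matching the three bullets of the algorithm. In the ``boundary'' case, the bubble does not change, so (ii) follows because the bubble-picture is literally the same valid $G$-picture as before, only with the new rightmost letter of its boundary word; (i) follows from (ii) via Lemma \ref{lemma:van-kampen}, or equivalently from noting that $sws\1 = sjs\1 = j$ in $G$ since $j$ is central. Invariants (iii) and (iv) follow from the fact that the pointer is advancing along the bubble into a region of the boundary arc that has not been traversed before; this uses that interior edges never leave the bubble once entered, which is built into the algorithm's design.

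In the ``expand'' case, the bubble grows continuously to absorb the new interior vertex $v'$, and the bubble-picture grows by one vertex whose counterclockwise boundary word (starting from the just-crossed edge $e$) is $r$ with associated vertex label $J^l$. So the new boundary word is obtained from the old one by right-multiplication by $r$ modulo cancellation of the $ss\1$ created at $e$, and the new vertex-label product gets multiplied by $J^l$---exactly the update performed by the algorithm. This gives (ii) directly, and (i) then follows from Lemma \ref{lemma:van-kampen} applied to the new bubble-picture. Invariants (iii) and (iv) hold because every intersection newly created by deforming the bubble past $v'$ sits strictly clockwise of the pointer's new position, so none of them lie on the traversed counterclockwise arc.

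The case I expect to be the main obstacle is the third one, where $v'$ already lies inside the bubble: the bubble-picture is unchanged as an abstract diagram, but its boundary word is being cyclically re-read, and in the ``contraction'' subcase an edge entirely interior to the previous bubble gets absorbed, removing two bubble-intersections at once. Here invariant (ii) amounts to a careful check that the new word $w$ is still the clockwise boundary word of the (possibly deformed) bubble read starting from the new pointer location; the $ss\1$ cancellation in the algorithm is exactly what is needed for this, together with the central conjugation by $s'$ that realigns the starting position. The key geometric fact that keeps (iii) and (iv) intact is that the deformation can only destroy intersections on the arc clockwise of the pointer, because the invariants (iii) and (iv) from the previous step guarantee that the counterclockwise arc already witnessed each edge at most once; combined with the fact that the pointer is moved back to the most recently visited surviving intersection, no edge ends up with two intersections on the counterclockwise arc. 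Once this case is handled, chaining the invariants through iterations gives the conclusion, and Proposition \ref{lem:quant-van-kampen} follows by reading off the total number of applications of each step from the algorithm's trace.
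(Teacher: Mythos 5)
Your proposal is correct and follows essentially the same route as the paper: induction over the loop iterations with a case analysis on the algorithm's three branches, checking that the bubble-picture's witnessed equation is updated exactly as $w$ and $j$ are, and arguing that newly created or destroyed bubble-intersections lie clockwise of the pointer so that invariants (iii) and (iv) persist. The only point where the paper is more explicit is the contraction subcase of (iv), where it additionally invokes planarity to rule out any edge crossing the arc between the two intersections of the absorbed edge; your sketch elides that detail but the underlying idea is the same.
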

\begin{proof}
	\begin{enumerate}[(i)]
		\item The initial equation is true since it is a relation from the group presentation. Each step of the algorithm preserves truth of the equation, since it multiplies the sides of the equation by equal things.
		\item This is true of the intial equation. To see that each step of the algorithm preserves the property, we examine by cases. If the algorithm moves the pointer but not the bubble, then it cyclically permutes the letters on one side of the equation. This is okay, since ``the equation witnessed by a picture'' is only defined up to cyclic permutation.

		If the algorithm moves the bubble by including a new vertex, then the equation witnessed by the bubble changes by replacing the label of one edge at that vertex by the product of the rest of the edge-labels at that vertex.
		This is also how the algorithm changes the equation, multiplying by the relation of the vertex and canceling the $ss\1$ term of the edge to be replaced.

		If the algorithm moves the bubble to include an edge but no vertices, then the equation witnessed by the bubble changes by canceling an $ss\1$ term for that edge. This is also how the algorithm updates the equation.
		\item 
		This is true at the initial step, since the open arc is empty.
		Each time we move the pointer, we move it in the clockwise direction. Whenever we create new bubble-intersections by including a new vertex, we place the pointer at the counter-clockwise-most bubble-intersection at that vertex.
		\item 
		It suffices to check that whenever the pointer is on an edge $e$ with two bubble-intersections, the algorithm immediately moves the bubble so that there are $0$ bubble intersections with that edge.
		Assume inductively that the condition has been true in all the previous steps of the algorithm. We claim that there are no bubble-intersections on the counter-clockwise arc from the current pointer to the other bubble-intersection on edge $e$. 

		First, we must see that no vertex is enclosed by edge $e$ and the arc. 
		By the inductive hypothesis, any edge intersecting this arc does so at most once. By \eqref{lemma:quant-van-kampen-induction-3}, this bubble-intersection is a previous location of the pointer. If this edge were incident on a vertex in the region of interest, then the algorithm would have moved the bubble to enclose that vertex. Therefore, there are no vertices in the region of interest.

		Now suppose some edge $e'$intersects the arc. Since there is no vertex enclosed by the arc and edge $e$, $e'$ must either intersect $e$ or intersect the arc again. The former contradicts planarity of the graph. The latter contradicts the inductive hypothesis.
		\item 




		This is proved by casework similar to the proof of \eqref{lemma:quant-van-kampen-induction-2}.
	\end{enumerate}
\end{proof}

\begin{lemma}
The algorithm always terminates. During the runtime, the pointer leaves from each (vertex, edge) pair at most once. When the algorithm terminates, the equation witness by the bubble is the same as the equation witnessed by the picture. 
\end{lemma}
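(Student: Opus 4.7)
The plan is to establish the three claims in order, leveraging the invariants already proven in Lemma \ref{lemma:quant-van-kampen-induction}.

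For termination, I would partition iterations of the main loop into two types: bubble-modifying steps (either enlarging to absorb a new interior vertex, as in case~2, or enlarging to swallow an edge whose endpoints are already inside, as in the sub-case of case~3) and pointer-only steps. Since the picture is finite, the number of bubble-modifying steps is bounded above by $|V_{\mathrm{int}}| + |E|$. Between consecutive bubble-modifying steps the bubble is fixed, so the number of bubble-intersections is fixed; each pointer-only step strictly advances the pointer clockwise around this fixed bubble, so after at most that many steps either another bubble-modifying step occurs or the pointer returns to $p_0$ and the algorithm halts. Summing gives a finite total.

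For the claim that each $(v,e)$ pair is the pointer's location at most once, I would use properties (iii) and (iv) of Lemma \ref{lemma:quant-van-kampen-induction} together. By (iii) every previous pointer location remains a bubble-intersection on the counter-clockwise arc from $p_0$ to the current pointer position, and by (iv) each edge contributes at most one bubble-intersection on that arc. If the pointer visited $(v,e)$ and later visited $(v',e)$ (with $v,v'$ the two endpoints of $e$, possibly equal), then both would be bubble-intersections on the arc, violating (iv). This rules out both revisits of the same pair and visits of the opposite endpoint of an already-processed edge.

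For the final claim, I would show that at termination the bubble contains every interior vertex and edge; combined with Lemma \ref{lemma:quant-van-kampen-induction}(ii), this gives the desired equality of equations. Reducing WLOG to a connected picture, suppose some interior vertex $v^*$ lies outside the bubble at termination. Then a path in the underlying graph from $v^*$ to the initial vertex $v_0$ must cross the bubble along some edge $e$, producing a bubble-intersection which by (iii) was a previous pointer location. At that iteration the algorithm's casework forces one of the bubble-modifying actions: either case~2 absorbs the outside endpoint into the bubble, or the sub-case of case~3 deforms the bubble to swallow $e$. In both cases $e$ no longer crosses the bubble afterward, and since the bubble only grows, $e$ cannot cross the bubble at termination either, a contradiction.

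The main obstacle is the casework in the last paragraph: one must carefully rule out all pathologies in which the edge $e$ might reappear as a bubble-crossing later because of a subsequent deformation, and one must verify that ``innermost crossing edge'' is well-defined in the piecewise-linear/smooth setting. The invariants (iii) and (iv) ensure this cannot happen, but formalizing the step where the bubble ``deforms to contain $e$'' so that no spurious new bubble-intersections are introduced will require the most attention.
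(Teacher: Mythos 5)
Your overall plan follows the paper's proof quite closely: you rely on the invariants of Lemma \ref{lemma:quant-van-kampen-induction}, deduce the no-revisit property from (iii) and (iv), and prove the final claim by the same path-crossing contradiction; your termination argument (bounding bubble-modifying steps by the number of vertices and edges and pointer-only steps between them by the number of bubble-intersections) is a valid alternative to the paper's, which instead gets termination directly from the no-revisit property. However, two of your steps rest on assertions that are false as written. For the no-revisit claim you quote (iii) as saying that every previous pointer location \emph{remains} a bubble-intersection on the arc; (iii) states the opposite inclusion, and the forward statement fails because bubble-intersections are destroyed when an edge is swallowed in case 3 (the algorithm itself moves the pointer back to ``the most recently visited intersection which still exists''). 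Consequently your appeal to (iv) does not exclude a revisit in the scenario where the earlier intersection has since been destroyed. The gap is repairable with one extra observation: if an intersection on $e$ is ever destroyed, then $e$ lies entirely inside the bubble, and since the bubble only grows, $e$ never acquires a bubble-intersection again, so the pointer cannot return to it; if it survives, it sits on the traversed arc and (iv) applies as you intend.

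In the final claim you assert that when the pointer visited the crossing on the edge $e$, the casework ``forces'' a bubble-modifying action removing the crossing. This overlooks the non-swallow sub-case of case 3 (which does occur, e.g.\ at states \eqref{eq:quant-van-kampen-8b}--\eqref{eq:quant-van-kampen-8c} in the example): there the far endpoint is already inside the bubble, the pointer merely advances, and $e$ still crosses the bubble after the visit, so ``processed once'' does not immediately give ``no longer crossing.'' The paper instead argues at termination: every crossing edge crosses exactly once and its crossing was a previous pointer location, so both of its interior endpoints lie inside the bubble (an outside endpoint would have been absorbed in case 2), which is incompatible with a single crossing; hence every edge crossing the bubble at termination is a boundary edge of the picture, and the path argument you give then yields the contradiction. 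With these two repairs your write-up matches the paper's proof; as written, both steps would fail.
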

\begin{proof}

	By parts \eqref{lemma:quant-van-kampen-induction-3} and \eqref{lemma:quant-van-kampen-induction-4} of Lemma \ref{lemma:quant-van-kampen-induction}, we have that the pointer visits each (vertex, edge) pair at most once before visiting $p_0$ twice. But the algorithm terminates when it visits $p_0$ for a second time, never getting the chance to leave it a second time. 

	Once it terminates, the whole bubble is comprised of the counterclockwise arc from the pointer to $p_0$, since those are the same point. Then every edge intersecting the bubble does so in at most one place, and every bubble-intersection is a previous location of the pointer. Therefore, the interior of the bubble contains any vertex attached to an edge which has a bubble-intersection. So all of the edges intersecting the bubble are also edges intersecting the boundary of the whole picture. 

	Conversely, we claim that every vertex is contained in the interior of the bubble. This implies that every edge intersecting the boundary of the picture also intersects the bubble.
	 To see the claim, suppose towards a contradiction that there's a vertex outside the bubble. Take a simple (i.e. loop-free) path from that vertex to a vertex in the interior of the bubble. This path intersects the bubble at an edge which does not intersect the boundary of the picture; contradiction. 

	Since the bubble contains the same vertices and intersects the same edges as the picture, they witness the same equation. 

\end{proof}

\subsection{Quantitative stabilizer state bounds}
\label{subsection:quant-stabilizer-state-bounds}


To finish our collection of tools, we show that if a state is approximately stabilized by the simultaneous action of an irreducible group representation on two tensor factors, then the state is almost maximally entangled between those factors. This will allow us to deduce self-testing of the provers' state from self-testing of their operators.




\begin{lemma}
	Let $\tau:\Gamma \to U(\C^d)$ be an irreducible representation with $\Gamma$ a finite group. Then the maximally entangled state can be characterized as a uniform combination of operators from the image of $\tau\otimes \bar{\tau}$. In particular,
	\begin{equation}
        \proj\epr = \E{g\in \Gamma}\tau(g) \otimes \bar{\tau(g)}.
	\end{equation}
\end{lemma}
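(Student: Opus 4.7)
The plan is to show two facts: that the operator $P := \mathbb{E}_{g \in \Gamma} \tau(g) \otimes \bar{\tau(g)}$ is a rank-one orthogonal projection, and that it fixes $\ket{\epr}$. Together these imply $P = \proj{\epr}$.

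First I would establish that $P$ is an orthogonal projection, using only the group structure. Self-adjointness follows from the change of variables $g \mapsto g^{-1}$, combined with $\tau(g)^\dagger = \tau(g^{-1})$ (since $\tau$ is unitary) and $\overline{\tau(g)}^\dagger = \overline{\tau(g^{-1})}$. Idempotence follows from
\begin{equation}
P^2 = \mathbb{E}_{g,h}\,\tau(g)\tau(h) \otimes \overline{\tau(g)}\,\overline{\tau(h)} = \mathbb{E}_{g,h}\,\tau(gh) \otimes \overline{\tau(gh)} = P,
\end{equation}
where the last equality uses that left-multiplication by $g$ is a bijection on $\Gamma$.

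Second, I would show $P \ket{\epr} = \ket{\epr}$. The key identity is that for any unitary $U$ on $\mathbb{C}^d$, $(U \otimes \bar U)\ket{\epr} = \ket{\epr}$; this is the standard ``transpose trick'' that follows from $U^T \bar U = (U^\dagger U)^T = I$. Applying this with $U = \tau(g)$ for each $g$ and averaging gives $P \ket{\epr} = \ket{\epr}$, so $P \succeq \proj{\epr}$.

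Third, I would compute the trace of $P$ to pin down its rank. Since $\operatorname{Tr}(\tau(g) \otimes \bar{\tau(g)}) = |\chi_\tau(g)|^2$ where $\chi_\tau$ is the character of $\tau$, we get $\operatorname{Tr}(P) = \mathbb{E}_g |\chi_\tau(g)|^2$. By Fact \ref{fact:irreducibility-criterion} (which is precisely Schur's first orthogonality relation restated as an irreducibility criterion) and the identity $\operatorname{Tr}(\tau(g^{-1})) = \overline{\operatorname{Tr}(\tau(g))}$, this sum equals $1$. Thus $P$ is a projection of trace $1$, i.e.\ of rank $1$, and since it dominates the rank-one projection $\proj{\epr}$ it must equal it. I don't expect significant obstacles here; the only subtlety is remembering to invoke irreducibility, which enters exactly through Fact \ref{fact:irreducibility-criterion}.
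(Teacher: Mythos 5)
Your proposal is correct, and three of its four ingredients (self-adjointness via $g\mapsto g^{-1}$, idempotence via relabeling $h\mapsto gh$, and $\Tr P=1$ via the orthogonality relation for the irreducible character) coincide with the paper's computations. Where you diverge is the final identification step. The paper never exhibits an eigenvector of $P$: it instead shows that the partial trace $\Tr_B P$ commutes with every $\tau(y)$, invokes Schur's lemma to conclude the marginal is maximally mixed, and then appeals to the characterization of $\eprp$ as the unique pure state with maximally mixed marginals. You replace all of that with the transpose trick $(U\otimes\bar U)\eprp = \eprp$, which puts $\eprp$ in the range of the rank-one projection $P$ directly. Your route is shorter and avoids Schur's lemma entirely; the paper's route has the mild advantage of not requiring one to guess the answer in advance (it derives which state $P$ projects onto rather than verifying a candidate), and the Schur's-lemma computation it performs is of a piece with the representation-theoretic style used elsewhere in the paper. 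Both arguments use irreducibility in exactly one place — you through the trace computation, the paper through both the trace computation and Schur's lemma — and both are complete.
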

\begin{proof}
	We'll show four intermediate equations via simple computations.
	\begin{enumerate}
		\item $\rho_{AB} = \rho_{AB}\dagg$
		\item $\Tr \rho_{AB} = 1$
		\item $\rho_{AB}^2 = \rho_{AB}$
		\item $\Tr_B \rho_{AB}$ is maximally mixed.
	\end{enumerate}
	The first two items assert that $\rho_{AB}$ is a density matrix. The third shows that it is in fact pure. The fourth tells us that the state is maximally entangled across the $A/B$ cut. This characterizes the state.

	Our main trick for the whole proof will be to relabel the index of summation defining $\rho_{AB}$. 
	To prove the first item, we use the relabeling $x\mapsto x\1$. 
	\begin{align*}
		\rho_{AB} 
		&= \E x \tau(x)_A \otimes \bar{\tau(x)}_B
		\\&= \E x \tau(x\1)_A \otimes \bar{\tau(x\1)}_B
		\\&= \E x \tau(x)_A\dagg \otimes \bar{\tau(x)}_B\dagg
		\\&= \left[\E x \tau(x)_A \otimes \bar{\tau(x)}_B\right]\dagg
		\\&= \rho_{AB}\dagg.
	\end{align*}
	(Notice we've used the fact that $\tau(x)$ is unitary; this is one of several parts of the proof that relies on the finiteness of $\Gamma$.) Now define the character $\chi(x):= \Tr \tau(x)$ to compute:
	\begin{align*}
		\Tr \rho_{AB}
		&= \Tr \E x \tau(x)_A \otimes \bar{\tau(x)}_B
		\\&= \E x \chi(x) \bar{\chi(x)}
		\\&= 1.
	\end{align*}
	The final equation is true for the character of any irreducible representation character, and is referred to as the ``second orthogonality relation" in Dummit and Foote \cite{dummit2004abstract}. For the second item,
	\begin{align*}
		\rho_{AB}^2 
		&= \left(\E x \tau(x)_A \otimes \bar{\tau(x)}_B\right)^2
		\\&= \E x\E y \tau(x)_A\tau(y)_A \otimes \bar{\tau(x)}_B\bar{\tau(y)}_B
		\\&= \E x\left[\E y \tau(xy)_A \otimes \bar{\tau(xy)}_B\right]
		\\&= \E x\left[\E y \tau(y)_A \otimes \bar{\tau(y)}_B\right]
	\end{align*}
	In the last line, we used the relabeling $y \mapsto xy$. Continuing, we have
	\begin{align*}
		&= \E x\rho_{AB}
		\\&= \rho_{AB}.
	\end{align*}
	Now define $\rho_A = \Tr_B \rho_{AB}$. Let $y\in \G$ be arbitrary and use the relabeling $x \mapsto yxy\1$:
	\begin{align*}
		\rho_A 
		&= \E x \bar{\chi(x)}\tau(x)
		\\&= \E x \bar{\chi(yxy\1)}\tau(yxy\1)
		\\&= \E x \bar{\chi(x)}\tau(y)\tau(x)\tau(y)\1
		\\&= \tau(y)\left[\E x \bar{\chi(x)}\tau(x)\right]\tau(y)\1
		\\&= \tau(y)\rho_A\tau(y)\1.
	\end{align*}
	So $\rho_A$ commutes with $\tau(y)$ for all $y$. By Schur's lemma (Fact \ref{fact:schur}), $\rho_A$ is a scalar multiple of identity. Since $\Tr\rho_A = 1$, we know that $\rho_A$ is in fact the maximally mixed state. 
	
	Since the maximally entangled state of local dimension $d$ on systems $A$ and $B$ is the unique pure state such that the partial trace over either system gives a maximally mixed state, this concludes our proof. 
\end{proof}
\begin{cor}\label{lemma:stabilizer-state}
	Let $\m H_A \cong \m H_B \cong \C^d$. Let $\rho_{ABC}$ be a state on $\m H_A\otimes \m H_B \otimes \m H_C$. Let $\rho_{AB} = \Tr_C\rho_{ABC}$. 

	Let $\Gamma$ be a finite group. Suppose that for each $g\in \Gamma$, $\drho{\rho_{AB}}{\tau(g)\otimes \bar{\tau(g)}}I \leq \eta$. 

	Then there is a state $\rho_{\text{aux}}$ such that $\norm{\rho_{ABC} - \proj\epr\otimes \rho_\text{aux}}_1 \leq 6\eta^2$.
\end{cor}
\begin{proof}
By linearity, we compute
\begin{align}
	\drho{\rho_{AB}}{\proj \epr}I^2
	&=\drho{\rho_{AB}}{\E{g\in \Gamma}\tau(g)\otimes\bar{\tau(g)}}I^2
	\\&=\E{g \in \Gamma}\drho{\rho_{AB}}{\tau(g)\otimes\bar{\tau(g)}}I^2
	\\& \leq \eta^2.
\end{align}
An application of Lemma \ref{lemma:entanglement-monogamy} completes the proof.
	 
\end{proof}

\subsection{Robust self-testing}\label{subsection:robust-self-testing}

Now we prove a robust self-testing theorem for linear constraint system games. First, we specify precisely what we mean by robust self-testing.

\begin{definition}[Robust self-testing for LCS games]\label{defn: robust-self-testing}
	Let $G$ be an LCS game and $\set{\tilde A_e^{(v)}}, \set{\tilde B_e}, \ket\psi$ be a strategy presented via observables. Let $\d:\R \to \R$ be a continuous function with $\d(0) = 0$. We say that $G$ 
	\emph{self-tests} 
	the strategy with
	\emph{perfect completeness}
	and
	\emph{$\d$-robustness}
	if:
	\begin{itemize}
		\item The strategy $\set{\tilde A_e^{(v)}}, \set{\tilde B_e}, \ket\psi$ wins the game with probability $1$, and
		\item for every strategy $\set{A_e^{(v)}}, \set{B_e}, \rho$ which wins with probability at least $1-\e$, there is a local isometry $V = V_A\otimes V_B$ and auxiliary state $\rho_{\text{aux}}$ such that for every $e,v$ with $H(e,v)\neq 0$,
	\begin{align}
		\label{eq:robustness-condition-state}
		\norm{V\rho V^\dagger - \proj \psi\otimes \rho_{\text{aux}}}_1 &\leq \d(\e),\\
		\drho{V\rho V^\dagger}{V_A A_e^{v}V_A^\dagger\otimes I_A}{(\tilde A_e^{v}\otimes I) \otimes I_B}^2 &\leq \d(\e)\text{, and}\label{eq:robustness-condition-alice}\\
		\drho{V\rho V^\dagger}{I_A\otimes V_BB_eV_B^\dagger}{I_A\otimes (\tilde B_e\otimes I)}^2 &\leq \d(\e).\label{eq:robustness-condition-bob}
	\end{align}

	\end{itemize}
\end{definition}

We restrict our attention only to LCS games with sufficiently nice solution groups.
\begin{definition}\label{definition:group-test}
	Let $\G$ be a finite solution group over $\Z_d$ and $\tau:\G \to U(\C^d)$ an irreducible representation of $\G$ with $\tau(J) = \w_dI$. We say that $\G$ \emph{group-tests $\tau$}  
	if:
	\begin{itemize}
		\item every representation of $\G$ which sends $J \mapsto \w_dI$ is equivalent to $\tau$, and
		\item every irreducible representation of $\G$ sends $J$ to $\w_d^{j}I$ for some $j\in \Z_d$.
	\end{itemize}
\end{definition}
Our second condition may seem artificial. What we really need is the existence of some $\d$ such that if $\s$ is any irreducible not equivalent to $\tau$, then $\norm{\s(J)-\tau(J)}_2 \geq \delta$. This condition gives that to us with $\delta = \Theta(d\1)$.

\begin{theorem}\label{thm:robust-self-testing}
	Let $G$ be an LCS game over $\Z_d$ with vertex set $V$, edge set $E$, and constraints given by  $H:V\times E \to \Z_d$ and $l: V\to \Z_d$. Let $\G$ be the solution group of $G$. Suppose that:
	\begin{enumerate}[(i)]
		\item \label{assumption:bounded-degree}
		 $\abs E \leq \Delta$, $\abs V \leq \Delta$, and
		 each equation has at most $\Delta$ variables with multiplicity, i.e.\ \mbox{$\forall v:\sum_e\abs{H(v,e)} \leq \Delta$},
		\item \label{assumption:small-pictures}there is a canonical form $\can$ such that\footnote{We'll also need a technical assumption that for all $x\in \G$, $\can(Jx) = J\can(x)$ or $\can(Jx) = J^{1-d}\can(x)$.} 
		every equation of the form  $\can(e)e\1 = 1$ for $e\in E$ or $\can(g)\can(gh)\1\can(h) = 1$ for $g,h\in \G$ is witnessed by a $\G$-picture in which 
		 each variable is used at most $\Delta$ times
		 and 
		 each relation is used at most $\Delta$ times,
		\item \label{assumption:w-p-irreps}
		 $\G$ group-tests $\tau: \G \to U(\C^{d^n}) $ in the sense of Definition \ref{definition:group-test}.
	\end{enumerate}
	Then $G$ self-tests the strategy $\tilde A_e^{(v)} = \tau(e), \tilde B_e = \bar{\tau(e)}, \ket\psi = \ket{\r{EPR}_{d^n}}$
	with perfect completeness and $O(d^2\Delta^{10}\e)$-robustness.

\end{theorem}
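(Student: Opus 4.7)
}
The plan is to mimic the structure of the exact case (Theorem \ref{thm:rigid-self-testing-observables}, via Lemmas \ref{lem:unique-operator-solution} and \ref{lemma:CLS}), replacing exact projection-onto-the-support arguments with the stability lemma \ref{lemma:vidick-gowers-hatami}, and then paying for each group-theoretic equality with a state-dependent-distance error via Proposition \ref{lem:quant-van-kampen}. Starting from a strategy winning with probability $\geq 1-\e$, Lemma \ref{lemma:unitary-observable-strategy} gives the observable form together with the bounds $\eta,\mu = O(d^2\e)$ on the averaged consistency and constraint-satisfaction errors. I then want to build, separately on Alice's side and Bob's side, candidate approximate representations $f_A, f_B:\G\to U(\m H_A), U(\m H_B)$ by setting $f_A(e) = A_e^{(v(e))}$ for some choice $v(e)$ with $H(v(e),e)\neq 0$ and extending to arbitrary $g\in\G$ via the canonical form $\can(g) = s_{i_1}^{\pm 1}\cdots s_{i_k}^{\pm 1}$, i.e.\ $f_A(g) = f_A(s_{i_1})^{\pm 1}\cdots f_A(s_{i_k})^{\pm 1}$, and analogously $f_B(e) = B_e$.

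The heart of the argument is to show that $f_A$ is an $O(\Delta^{?}\sqrt\e)$-approximate representation in the sense required by Lemma \ref{lemma:vidick-gowers-hatami}. For fixed $x,y\in\G$, the equation $\can(y)\can(yx)^{-1}\can(x) = 1$ in $\G$ is witnessed, by hypothesis \eqref{assumption:small-pictures}, by a $\G$-picture that uses each variable and each relation at most $\Delta$ times. Running Algorithm \ref{algorithm:quant-van-kampen} on that picture turns the equation into a chain of rearrangements, each of which is either (i) a conjugation by a generator $e$, or (ii) an application of a constraint-satisfaction relation $R_{eq}$ or a commutation relation $R_c$ at some vertex $v$. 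The approximate-equality calculus of Lemma \ref{lemma:state-dependent-distance} (especially items \eqref{item:state-dependent-distance-right-multiplication}, \eqref{item:state-dependent-distance-chaining}, \eqref{item:state-dependent-distance-conjugation}) then translates each step into a state-dependent-distance cost bounded by one of the per-equation quantities from Lemma \ref{lemma:unitary-observable-strategy}: commutations within an equation are handled by chaining through Bob's operators using the consistency bound $\eta$ (so that Alice's observables for different $v$'s commute on the state up to $O(\sqrt\eta)$), while $R_{eq}$-relations cost $O(\sqrt\mu)$. With at most $\Delta$ uses of each relation and at most $\Delta$ uses of each variable, and averaging over $x,y$, I expect the total bound to be $\E{x,y}D_\rho(f_A(x)f_A(yx)^\dagger f_A(y)\otimes I_B, I) \leq O(\Delta^{5}\sqrt{d^2\e})$, yielding the $\Delta^{10}$ scaling in the final bound when squared. \emph{This counting/chaining step is the main technical obstacle}; bookkeeping the exact exponents of $\Delta$ requires careful accounting of how many state-dependent triangle inequalities are absorbed into each van Kampen step and how the conjugations from the algorithm are absorbed by unitarity.

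With $f_A,f_B$ established as approximate representations, Lemma \ref{lemma:vidick-gowers-hatami} produces isometries $V_A,V_B$ and honest unitary representations $\tau_A,\tau_B$ of $\G$ on extended spaces with $\E{g}D_\rho(V_Af_A(g)V_A^\dagger, \tau_A(g)) \leq O(\Delta^{5}\sqrt{d^2\e})$ (and similarly for $B$). The second bullet of Definition \ref{definition:group-test} (the $\omega_d$-gap) lets me project $\tau_A$ onto its $J\mapsto\omega_d I$ isotypic component without losing more than $O(d)$ in the error: any irrep sending $J$ to $\omega_d^j I$ with $j\neq 1$ contributes at least $|1-\omega_d^{j-1}| = \Omega(d^{-1})$ to $D_\rho(\tau_A(J),\omega_d I)$, and from the strategy and consistency we get that this quantity is small. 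On the surviving isotypic piece, assumption \eqref{assumption:w-p-irreps} (uniqueness of the $\omega_d$-irrep $\tau$) together with Lemma \ref{lem:unique-operator-solution} gives that $\tau_A$ is unitarily equivalent to $\tau\otimes I_{\text{aux}}$; similarly $\tau_B \simeq \bar\tau\otimes I_{\text{aux}}$ by the same argument applied to the conjugate operator solution coming from Bob. Absorbing these unitaries into $V_A,V_B$ yields local isometries that conjugate $A_e^{(v)}$ and $B_e$ close to the ideal $\tau(e)\otimes I$ and $\bar\tau(e)\otimes I$ on $V\rho V^\dagger$, establishing \eqref{eq:robustness-condition-alice} and \eqref{eq:robustness-condition-bob}.

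Finally, the state condition \eqref{eq:robustness-condition-state} follows from Corollary \ref{lemma:stabilizer-state}: having established that the conjugated observables act essentially as $\tau(g)\otimes\bar{\tau(g)}$ on $V\rho V^\dagger$ for every $g\in\G$, the state is approximately stabilized by the twirl $\E{g}\tau(g)\otimes\bar{\tau(g)} = \proj{\epr}$ on the $d^n$-dimensional block singled out by $\tau$, so $V\rho V^\dagger$ is within $O(\Delta^{10}d^2\e)$ in trace distance of $\proj{\text{EPR}_{d^n}}\otimes\rho_{\text{aux}}$. Combining the three bounds and propagating the error factors through the chain (observable conversion $\leadsto$ van Kampen chaining $\leadsto$ Gowers--Hatami $\leadsto$ irrep isolation $\leadsto$ stabilizer-state corollary) gives the advertised $\d(\e) = O(d^2\Delta^{10}\e)$, completing the proof.
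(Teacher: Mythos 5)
Your plan follows essentially the same route as the paper's proof: extract approximate satisfaction, commutation and consistency from Lemma \ref{lemma:unitary-observable-strategy} (transferring Alice's commutation across different $v$'s through Bob via consistency, as in Lemma \ref{lemma:Bs-are-approximate-conjugate-operator-solution}), define $f_A,f_B$ by the canonical form, use the quantitative van Kampen lemma to make them approximate homomorphisms, apply the stability lemma \ref{lemma:vidick-gowers-hatami}, isolate the $J\mapsto\w_d I$ isotypic component using the group-testing gap (Lemma \ref{lemma:fixing-the-J}), invoke Lemma \ref{lem:unique-operator-solution}, and finish the state condition with Corollary \ref{lemma:stabilizer-state}. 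The only bookkeeping point worth noting is that $\eta,\mu\leq 1-p_{\text{con/sat}}\leq\e$ directly (no $d^2$ loss at that stage); the $d$-factors enter only later through the $J$-eigenvalue isolation, which is how the final bound stays at $O(d^2\Delta^{10}\e)$ rather than picking up extra powers of $d$.
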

The fact that the strategy wins the game with probability $1$ is proven as Proposition \ref{prop: perfect strategy}. The special case of $\e=0$ is the main result of \S \ref{subsection:exact-self-testing}. We remark that although the robustness bound does not seem to depend directly on $n$, typically $\Delta$ does.

The theorem is stated for finite solution groups. Using the stability lemma from \cite{de2017operator}, almost every part of the proof goes through for amenable\footnote{A countable group is \emph{amenable} if it admits a finitely-additive translation-invariant probability measure which is defined on every subset. For a finite group $\G$, we can take this measure as the familiar $\E{x\in \G}$. More exotic examples exist among infinite groups.} groups. However, we crucially use that the length of proofs of group equations is bounded by a constant, which is not true for infinite groups. It seems plausible that this barrier can be overcome; we leave this to future work. To avoid overloading notation, we stated Theorem \ref{thm:robust-self-testing} with sub-optimal bounds. A version of it with tighter, but more notation-involved, bounds is stated and proved in the Appendix as Theorem \ref{thm:robust-self-testing-appendix}.

We break the proof into several lemmas. In the statement of each lemma, we point out which of the assumptions of the main theorem we use. Before we proceed with the proof, we fix some useful notation.  We write $\prod_{i=1}^ng_i$ for the ordered product $g_1g_2\cdots g_n$. 
We write $r_v$ for the relation in $\G$ corresponding to equation $v\in V$, note that this is a word in the generators, say $r_v = s_1s_2\ldots s_n$. We write $\prod_{s\in r_v}f(r_v)$ for the ordered product $f(s_1)f(s_2)\cdots f(s_n)$. 
For $e,e' \in E$, we say $e\sim e'$ if they share an equation, i.e.\ there is some $v$ such that $H(v,e) \neq 0 \neq H(v,e')$. Furthermore, for each edge $e$, we fix a special vertex $v_e$ such that $H(v_e,e)\neq 0$.

\begin{lemma}[Assumption \eqref{assumption:bounded-degree}]\label{lemma:Bs-are-approximate-conjugate-operator-solution}
	$\set{B_e}$ is an ``approximate conjugate operator solution'' in the following sense:
	\begin{align}
	\label{eq:Bs-approximately-satisfy-constraints}
		\sum_v \drho{\rho}{
		\prod_{e\in r_v} I_A\otimes B_e}{\w_d^{-l(v)}I_{AB}} 
		&\leq 4\Delta^2\sqrt\e.
		\\
	\label{eq:Bs-approximately-commute}
		\sum_{e,e':e\sim e'} \drho{\rho}{I_A\otimes [B_e,B_{e'}]}{I_{AB}}
		&\leq 4\Delta^3\sqrt\e
		.\
	\end{align}

	Furthermore, we have similar inequalities for Alice, 
	\begin{align}
	\label{eq:As-approximately-satisfy-constraints}
		\sum_v\drho{\rho}{\prod_{e\in r_v}A_e^{(v_e)}\otimes I_B}{\w_d^{l(v)} I_{AB}}
		&\leq 8\Delta^2\sqrt{\e},
		\\
	\label{eq:As-approximately-commute}
		\sum_{e,e':e\sim e'} \drho{\rho}{\left[A_e^{(v_e)},A_{e'}^{(v_{e'})}\right]\otimes I_B}{I_{AB}}
		&\leq 4\Delta^3\sqrt\e
		.
	\end{align}
	Finally, these ``solutions'' are consistent in the sense that
	\begin{equation}
	\label{eq:As-and-Bs-approximately-consistent}
		\sum_e \drho{\rho}{A_e^{(v_e)}\otimes I_B}{I_A\otimes B_e^\dagger}
		 \leq 2 \Delta^2\sqrt\e.
	\end{equation}
\end{lemma}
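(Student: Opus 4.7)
The plan is to bootstrap everything from the averaged bounds produced by Lemma \ref{lemma:unitary-observable-strategy} once we observe that $p_{\text{win}} \ge 1-\eps$ together with $p_{\text{win}} \le \min(p_{\text{sat}},p_{\text{con}})$ forces $1-p_{\text{con}}\le\eps$ and $1-p_{\text{sat}}\le\eps$. Plugging in gives $\E{v,e} \drho{\rho}{A_e^{(v)}\otimes B_e}{I}^2 \le 4\eps$ and $\E{v} \drho{\rho}{\prod_e (A_e^{(v)})^{H(v,e)}\otimes I}{\w_d^{l(v)}I}^2 \le 4\eps$. Clearing expectations using $|V|,|E|\le\Delta$ (assumption~\eqref{assumption:bounded-degree}) and then applying Cauchy--Schwarz in the form $\sum x_i \le \sqrt{N\sum x_i^2}$ converts these into bounds of size $O(\Delta^2\sqrt\eps)$ and $O(\Delta\sqrt\eps)$ on the corresponding sums of un-squared distances. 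The consistency inequality \eqref{eq:As-and-Bs-approximately-consistent} is then immediate: it is the restriction of the first sum to the terms $(v_e,e)$, rewritten using Lemma \ref{lemma:state-dependent-distance}\eqref{item:state-dependent-distance-inverse} to move $B_e$ across the tensor product.

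For the Alice constraint-satisfaction bound \eqref{eq:As-approximately-satisfy-constraints}, the only work remaining after the previous step is to swap the superscript $v$ appearing in the $\mu$-estimate for the canonical superscript $v_e$. This can be done one letter at a time by the triangle inequality and two applications of consistency,
\begin{equation}
\drho{\rho}{A_e^{(v_e)}\otimes I}{A_e^{(v)}\otimes I} \le \drho{\rho}{A_e^{(v_e)}\otimes B_e}{I} + \drho{\rho}{A_e^{(v)}\otimes B_e}{I},
\end{equation}
with the accumulated error controlled by the at-most-$\Delta$ letters in each $r_v$ combined with the $\mu$ bound on $\prod_e (A_e^{(v)})^{H(v,e)}$.

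For the Bob constraint-satisfaction bound \eqref{eq:Bs-approximately-satisfy-constraints} we transfer constraint satisfaction across the tensor product via chaining. Consistency $A_e^{(v)}\otimes B_e\approx I$ rearranges to $A_e^{(v)\dagger}\otimes I\approx I\otimes B_e$, and Lemma \ref{lemma:state-dependent-distance}\eqref{item:state-dependent-distance-chaining} then yields $\prod_{e\in r_v} A_e^{(v)\dagger}\otimes I \approx I\otimes \prod_{e\in r_v} B_e$ letter by letter. Exact commutativity of the $A_e^{(v)}$'s at a fixed vertex lets us identify $\prod_{e\in r_v} A_e^{(v)\dagger}$ with $(\prod_{e\in r_v} A_e^{(v)})^\dagger$ \emph{without} reversing the order of the product, and the latter is close to $\w_d^{-l(v)}I$ by taking adjoints in the Alice bound; chaining the two approximations together delivers \eqref{eq:Bs-approximately-satisfy-constraints}.

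The commutativity bounds run on the same engine. Given $e\sim e'$, fix a witness vertex $v$ with $H(v,e), H(v,e')\ne 0$; exact commutativity gives $[A_e^{(v)\dagger}, A_{e'}^{(v)\dagger}] = I$. Four uses of the chaining lemma (one per factor of the commutator) produce
\begin{equation}
\drho{\rho}{I\otimes [B_e,B_{e'}]}{I} \le 2\drho{\rho}{A_e^{(v)}\otimes B_e}{I} + 2\drho{\rho}{A_{e'}^{(v)}\otimes B_{e'}}{I},
\end{equation}
and summing over the at most $|V|\Delta^2\le\Delta^3$ pairs $e\sim e'$ yields \eqref{eq:Bs-approximately-commute}. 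For \eqref{eq:As-approximately-commute} we first use the superscript-swap trick to move from $A_e^{(v_e)}, A_{e'}^{(v_{e'})}$ to $A_e^{(v)}, A_{e'}^{(v)}$, after which exact commutativity finishes the job. The main bookkeeping obstacle is tracking orderings of non-commuting operators through the chaining lemma; the observation that saves us is the exact commutativity of the $A_e^{(v)}$'s at a fixed vertex, which lets us push adjoints into products without order-reversal and avoids having to bootstrap an approximate commutation statement before proving \eqref{eq:Bs-approximately-satisfy-constraints}.
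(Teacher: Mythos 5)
Your proposal is correct and follows essentially the same route as the paper: Cauchy--Schwarz on the averaged criteria of Lemma \ref{lemma:unitary-observable-strategy}, transfer of Alice's exact commutativity and approximate constraint satisfaction to Bob via the chaining rule of Lemma \ref{lemma:state-dependent-distance}\eqref{item:state-dependent-distance-chaining}, and a return trip through Bob to fix Alice's superscripts (your ``superscript swap'' passes through the same intermediate $\prod_{e} I\otimes B_e^\dagger$ as the paper's explicit bounce off \eqref{eq:Bs-approximately-satisfy-constraints}). The only discrepancy is in constant bookkeeping --- your accounting for \eqref{eq:As-approximately-satisfy-constraints} naturally yields $O(\Delta^3\sqrt\e)$ rather than the stated $8\Delta^2\sqrt\e$, but this matches the paper's own appendix bound of $8l_0\abs E\abs V\sqrt\e$ and is absorbed by the $\Delta^{10}$ slack in Theorem \ref{thm:robust-self-testing}.
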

Equation \eqref{eq:Bs-approximately-satisfy-constraints} says that Bob's operators approximately satisfy the group equations induced by the constraints. Equation \eqref{eq:Bs-approximately-commute} says that Bob's operators approximately commute whenever they share an equation.
\begin{proof}
	Recalling the consistency criterion \eqref{eq:approximate-win-criterion-con}, we have 
	\begin{align}
		\frac14\E{e,v}\drho{\rho}{A_e^{(v)}\otimes B_e}{I}^2 \leq \e.
	\end{align}
	An application of Cauchy-Schwarz 
	and then Lemma \ref{lemma:state-dependent-distance}\eqref{item:state-dependent-distance-inverse} 
	gives
	\begin{align}
		\sum_{e,v:H(v,e)\neq 0} \drho{\rho}{A_e^{(v)}\otimes B_e}{I} 
		&\leq 2 \abs E \abs V\sqrt\e \nonumber
		\\
		\sum_{e,v:H(v,e)\neq 0} \drho{\rho}{A_e^{(v)}\otimes I_B}{I_A\otimes B_e^\dagger}
		 &\leq 2 \Delta^2\sqrt\e. \label{eq:Bs-are-approximate-operator-solution-1}
	\end{align}
	Inequality \eqref{eq:As-and-Bs-approximately-consistent} can be obtained from inequality \eqref{eq:Bs-are-approximate-operator-solution-1} by dropping some (nonnegative) terms from the left-hand side.
	Similarly, we extract the following from the constraint satisfaction criterion \eqref{eq:approximate-win-criterion-sat}.
	\begin{align}
	\label{eq:Bs-are-approximate-operator-solution-2}
		\sum_v \drho{\rho}{
		\prod_{e\in r_v} A_e^{(v)} \otimes I}{\w_d^{l(v)}I}
		\leq 2 \sqrt \e \abs V.
	\end{align}
	Applying a triangle inequality and taking inverses (see \ref{lemma:state-dependent-distance}\ref{item:state-dependent-distance-triangle},\ref{item:state-dependent-distance-chaining}) to the previous two equations yields
	\begin{align}
		\sum_v \drho{\rho}{
		\prod_{e\in r_v} I_A\otimes B_e}{\w_d^{-l(v)}I_{AB}} 
		&\leq 2\Delta^2\sqrt\e +2 \abs V\sqrt\e
		,
	\end{align}
	establishing Equation \eqref{eq:Bs-approximately-satisfy-constraints}. 
	We use the same strategy for the commutators. First, note that Alice's operators commute exactly, i.e.\
	\begin{equation}
	\label{eq:Bs-are-approximate-operator-solution-3}
	\drho{\rho}{\left[A_e^{(v)}, A_{e'}^{(v)}\right]\otimes I_B} {I_{AB}} 
	= 0\text{ if }e\sim e'.
	\end{equation}
	Then we can chain triangle inequalities to deduce a bound on the magnitude of Bob's commutators:
	\begin{equation}
	\drho{\rho}{I_A\otimes \left[B_e,B_{e'}\right]}{I}
	\leq 
	2\drho{\rho}{A_e^{(v)}\otimes B_e}{I_{AB}}
	+2\drho{\rho}{A_{e'}^{(v)}\otimes B_{e'}}{I_{AB}}.
	\end{equation}
	Since we know the right-hand-side to be small on average, we sum over all equations and then apply Equation \eqref{eq:Bs-are-approximate-operator-solution-1}:
	\begin{align}
	\sum_{\substack{e,e'\\e\sim e'}}\drho{\rho}{I_A\otimes \left[B_e,B_{e'}\right]}{I}
	&\leq
	2\Delta \sum_{\substack{e,v\\H(v,e)\neq 0}}
	\drho\rho{A_e^{(v)}\otimes B_e}{I_{AB}}	
	\\
	&\leq 4\Delta^3\sqrt \e
	,
	\end{align}
	establishing Equation \eqref{eq:Bs-approximately-commute}.
	In the first line, we used that each equation has at most $\Delta$ variables. In the second line, we applied inequality \eqref{eq:Bs-are-approximate-operator-solution-1}.

	By reasoning on the $B$ system, we proved 
	Equations \eqref{eq:Bs-approximately-satisfy-constraints}, \eqref{eq:Bs-approximately-commute}
	from Equations \eqref{eq:Bs-are-approximate-operator-solution-1}, 
	\eqref{eq:Bs-are-approximate-operator-solution-2},
	\eqref{eq:Bs-are-approximate-operator-solution-3}. The same arguments on the $A$ system prove Equations \eqref{eq:As-approximately-satisfy-constraints}, \eqref{eq:As-approximately-commute}
	from Equations
	\eqref{eq:Bs-approximately-satisfy-constraints},
	\eqref{eq:Bs-approximately-commute},
	\eqref{eq:Bs-are-approximate-operator-solution-1}.

\end{proof}

In order to apply the stability lemma of \S \ref{subsection:stability-lemma}, we need to construct a function from the solution group to the group of unitaries. 
We already have functions defined on the generators: those which send $e\mapsto A_e^{(v_e)}$ and $e\mapsto B_e$. We would like to say ``extend $f_A$ and $f_B$ to all of $\G$ by multiplication''. However, these functions are not quite homomorphisms, so different choices of how to ``extend by multiplication'' define different functions. We'll use our canonical form $\can$ to make that choice in a consistent way. 
\begin{definition}
\label{def:4.19}
	Define $f_A:\G\to U(\m H_A)$ and $f_B:\G\to U(\m H_B)$ by
	\begin{align}
		f_A(g) = \begin{cases}
			\w_d I, &\text{ if }g=J\\
			A_e^{(v_e)}, &\text{ if }\can(g)=e\\
			\prod_{s\in \can(g)} f_A(s), &\text{ otherwise,}
		\end{cases}
		\\
		f_B(g) = \begin{cases}
			\w_d\1 I, &\text{ if }g=J\\
			B_e, &\text{ if }\can(g)=e\\
			\prod_{s\in \can(g)} f_B(s), &\text{ otherwise.}
		\end{cases}
	\end{align}
\end{definition}
Notice that we may have generators $e$ of the group for which $f_B(e) \neq B_e$. For example, in our canonical form for the Magic Square game defined in section \ref{sec:specific-games}, we'll have $\can(e_3) = e_1\1e_2\1$. If the equation $B_1B_2B_3 = I$ does not hold exactly, we have that $f(e_3) = B_1\1B_2\1 \neq B_3$. However, we do want $f(e_3)$ to be close to $B_3$. This is the content of the first item of the next lemma.

\begin{lemma}[Assumption \eqref{assumption:small-pictures}]\label{lemma:canonical-form-implies-stability}
	Suppose that $\set{A_e^{(v_e)}}$ and $\set{B_e}$ $\eta$-satisfy the relations from $R$ in the sense that
	\begin{equation}
	\label{lemma:canonical-form-implies-stability-1}
		\sum_v\drho{\rho}{\prod_{e\in r_v}A_e^{(v_e)}\otimes I_B}{\w_d^{l(v)} I_{AB}} \eta\text{, and }
		\sum_{e,e':e\sim e'} \drho{\rho}{\left[A_e^{(v_e)},A_{e'}^{(v_{e'})}\right]\otimes I_B}{I_{AB}} \leq \eta
	\end{equation}
	And similarly for $B$ we have 
	\begin{equation}
	\label{lemma:canonical-form-implies-stability-1-B}
	\sum_v \drho{\rho}{
		\prod_{e\in r_v} I_A\otimes B_e}{\w_d^{-l(v)}I_{AB}} \leq \eta\text{, and }
		\sum_{e,e':e\sim e'} \drho{\rho}{I_A\otimes [B_e,B_{e'}]}{I_{AB}}
	\end{equation}
	 Furthermore, suppose that $\set{A_e^{(v_e)}}$ and $\set{B_e}$ are $\eta$-consistent, i.e.\ 
	 \begin{equation}
	 \label{lemma:canonical-form-implies-stability-2}
	 	\sum_e \drho{\rho}{A_e^{(v_e)}\otimes B_e}{I_{AB}} \leq \eta.
	 \end{equation}
	  Then 
	  \begin{enumerate}[(1)]
	  	\item 
	  for all $e\in E$, the operators $f_A(e)$ and $f_B(e)$ are close to the operators used by Alice and Bob, i.e.\ 
	  \begin{align}
	  \label{eq:canonical-form-implies-stability-conclusion-1}
	  	\drho{\rho}{f_A(e)\otimes I_B}{ A_e^{(v_e)}\otimes I_B } &\leq 8\Delta \eta,
	  	\\
	  	\drho{\rho}{I_A \otimes f_B(e)}{ I_A \otimes B_e} &\leq 8\Delta\eta. 
	  \end{align}
	\item \label{item:canonical-form-implies-stability}$f_A$ and $f_B$ are suitable for application of the stability lemma \ref{lemma:vidick-gowers-hatami}, i.e.\ for all $x,y\in \G$, 
	\begin{align}
	\label{eq:canonical-form-implies-stability}
		\drho\rho{f_A(x)f_A(yx)\1f_A(y)\otimes I_B}{I_{AB}} &\leq 64\Delta^2\eta,
		\\
		\drho\rho{I_A\otimes f_B(x)f_B(yx)\1f_B(y)}{I_{AB}}&\leq 64\Delta^2\eta.
	\end{align}
	\item \label{item:canonical-form-implies-consistency}$f_A$ and $f_B$ are consistent, i.e.\  for all $x\in \G$,
	\begin{equation}
	\label{eq:canonical-form-implies-consistency}
		\drho\rho{f_A(x)\otimes f_B(x)}{I_{AB}} \leq 4\Delta\eta.
	\end{equation}
  \end{enumerate}

\end{lemma}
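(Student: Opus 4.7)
The plan is to establish the three conclusions in the order (1), (3), (2), using Proposition \ref{lem:quant-van-kampen} to convert the group-theoretic equations witnessed by the pictures from assumption \ref{assumption:small-pictures} into bounded-length sequences of elementary operator manipulations, and handling the resulting conjugation steps via the consistency hypothesis \eqref{lemma:canonical-form-implies-stability-2}.

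For conclusion (1), I fix a generator $e\in E$ and note that $\can(e)\cdot e^{-1} = 1$ in $\G$. Assumption \ref{assumption:small-pictures} supplies a $\G$-picture for this equation using each variable at most $\Delta$ times and each relation at most $\Delta$ times, and Proposition \ref{lem:quant-van-kampen} converts it into a proof built from at most two conjugations per variable-appearance and exactly one relation-application per relation-appearance. Translating this proof to a bound on $\drho\rho{f_A(e)\otimes I_B}{A_e^{(v_e)}\otimes I_B}$, each relation-application contributes the $\rho$-distance between the relation's operator realization and its central target value; using that each relation is used at most $\Delta$ times and hypothesis \eqref{lemma:canonical-form-implies-stability-1}, these costs sum to $O(\Delta\eta)$. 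Each conjugation $w \mapsto sws^{-1}$ with $s = A_e^{(v_e)}$ is dealt with via the identity $(sws^{-1})\otimes I_B = (s\otimes B_e)(w\otimes I_B)(s^{-1}\otimes B_e^{-1})$, obtained by inserting $B_e B_e^{-1}$ on Bob's side and using that $B_e$ commutes with every Alice-side operator; two applications of Lemma \ref{lemma:state-dependent-distance}\eqref{item:state-dependent-distance-right-multiplication} then show the conjugation adds at most $2\drho\rho{A_e^{(v_e)}\otimes B_e}{I_{AB}}$ to the running error, so the total conjugation cost is bounded by $O(\Delta)\cdot\sum_e \drho\rho{A_e^{(v_e)}\otimes B_e}{I_{AB}} = O(\Delta\eta)$ by \eqref{lemma:canonical-form-implies-stability-2}. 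The argument for $f_B$ is symmetric, using \eqref{lemma:canonical-form-implies-stability-1-B} in place of \eqref{lemma:canonical-form-implies-stability-1}.

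For conclusion (3), I use $(\prod A_i)\otimes(\prod B_i) = \prod (A_i\otimes B_i)$ to write $f_A(x)\otimes f_B(x) = \prod_{s\in\can(x)}(f_A(s)\otimes f_B(s))$. By conclusion (1) each factor is $O(\Delta\eta)$-close to $A_s^{(v_s)}\otimes B_s$, which in turn is $\drho\rho{A_s^{(v_s)}\otimes B_s}{I_{AB}}$-close to identity. Applying assumption \ref{assumption:small-pictures} with $g = x$, $h = x^{-1}$ gives a picture for $\can(x)\can(x^{-1}) = 1$ using each variable at most $\Delta$ times, and in particular each generator appears in $\can(x)$ at most $\Delta$ times; telescoping via Lemma \ref{lemma:state-dependent-distance}\eqref{item:state-dependent-distance-right-multiplication} then bounds the sum over letters of $\can(x)$ by $\Delta$ times the total consistency error, yielding the claimed $O(\Delta\eta)$ bound.

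Conclusion (2) is proved by applying the same translation scheme to the picture supplied by assumption \ref{assumption:small-pictures} for the equation $\can(x)\can(yx)^{-1}\can(y) = 1$; Proposition \ref{lem:quant-van-kampen} again bounds the number of conjugation and relation-application steps, and the conjugation trick above bounds each conjugation's cost by the single-generator consistency error. The main obstacle throughout is precisely the handling of conjugation: conjugation by a unitary $U$ does not preserve the state-dependent distance, because it effectively replaces $\rho$ by $U^{-1}\rho U$. The $A$--$B$ consistency hypothesis provides the essential workaround, since it lets us replace each Alice-side conjugation by a bilateral conjugation $\bigl(s\otimes B_e, s^{-1}\otimes B_e^{-1}\bigr)$ whose $B$-side factor commutes trivially with the Alice-side word being conjugated. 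This reduces each conjugation's cost to the individual-generator consistency error and prevents any multiplicative blowup over the length of the van Kampen proof, which is what makes the final bound scale only polynomially in $\Delta$.
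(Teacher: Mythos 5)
Your proposal follows essentially the same route as the paper's proof: convert the pictures guaranteed by assumption \eqref{assumption:small-pictures} into bounded-length sequences of operator manipulations via Proposition \ref{lem:quant-van-kampen}, pay for each relation-application with the hypothesised relation error via Lemma \ref{lemma:state-dependent-distance}\eqref{item:state-dependent-distance-right-multiplication}, and pay for each conjugation with the $A$--$B$ consistency error. Your ``bilateral conjugation'' identity $(sws\1)\otimes I_B = (s\otimes B_e)(w\otimes I_B)(s\1\otimes B_e\1)$ is precisely the content (and the proof) of Lemma \ref{lemma:state-dependent-distance}\eqref{item:state-dependent-distance-conjugation}, which is what the paper invokes, so parts (1) and (2) match the paper's argument step for step.

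The one place where your accounting does not deliver the stated bound is conclusion (3). You factor $f_A(x)\otimes f_B(x) = \prod_{s\in\can(x)}(f_A(s)\otimes f_B(s))$ and then replace each factor by $A_s^{(v_s)}\otimes B_s$ at a cost of $O(\Delta\eta)$ per letter, citing conclusion (1). Since $\can(x)$ may contain up to $\Delta$ occurrences of each of the $\abs E$ generators, these replacement costs alone sum to $O(\Delta^2\abs E\,\eta)$, which swamps the claimed $O(\Delta\eta)$; only the consistency terms sum to $\Delta\eta$. The detour through conclusion (1) is unnecessary: for the letters $s$ that actually occur in canonical words one has $f_A(s) = A_s^{(v_s)}$ and $f_B(s) = B_s$ exactly (this is how Definition \ref{def:4.19} bottoms out), so
\begin{equation}
	\drho\rho{f_A(x)\otimes f_B(x)}{I_{AB}} = \drho\rho{\prod_{s\in\can(x)}A_s^{(v_s)}\otimes B_s}{I_{AB}} \leq \sum_{s\in\can(x)}\drho\rho{A_s^{(v_s)}\otimes B_s}{I_{AB}} \leq \Delta\sum_{e\in E}\drho\rho{A_e^{(v_e)}\otimes B_e}{I_{AB}} \leq \Delta\eta
\end{equation}
by Lemma \ref{lemma:state-dependent-distance}\eqref{item:state-dependent-distance-chaining} directly, which is how the paper argues. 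With that correction (and noting that conclusion (3) then does not depend on conclusion (1)), your argument is complete.
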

For our purposes, it would suffice to prove items \eqref{item:canonical-form-implies-stability} and \eqref{item:canonical-form-implies-consistency} on average over $g$ and $h$. This may make the upper bound smaller, but the authors presently know of no families of groups for which this improvement is better than a constant factor. 
\begin{proof}
	By the quantitative van Kampen lemma (Lemma \ref{lem:quant-van-kampen}), any identity of the form $\can(e)e\1 = 1$ has a proof using at most $2\Delta$ conjugations by each generator and at most $\Delta$ right-multiplications by each relation. In this proof, we replace each instance of a generator $e$ with the corresponding Bob operator $I\otimes B_{e}$, and replace the equality by a bound of the $\drho\rho\cdot\cdot$-distance between the two sides. By at most $2\Delta$ applications of Lemma \ref{lemma:state-dependent-distance}(\ref{item:state-dependent-distance-right-multiplication}) and at most $\Delta$ applications of Lemma \ref{lemma:state-dependent-distance}(\ref{item:state-dependent-distance-conjugation}), we get the bound on the $\drho\rho\cdot\cdot$-distance stated in Equation \eqref{eq:canonical-form-implies-stability-conclusion-1}. Repeating the same proof for identities of the form $\can(x)\can(yx)\1\can(y)$ but now starting from the $f_B(e)$ instead of the $B_e$ gives the second part of Equation \eqref{eq:canonical-form-implies-stability}. The same argument with the tensor factors reversed gives the first part.

	Finally, we obtain, using Lemma \ref{lemma:state-dependent-distance}\eqref{item:state-dependent-distance-chaining},
	\begin{align}
		\drho\rho{f_A(g)\otimes f_B(g)}{I_{AB}} 
		&= \drho{\rho}{\prod_{e\in \can(g)}{ A_e^{(v_e)}\otimes B_e}}{I_{AB}}
		\\&\leq \sum_{e\in \can(g)} 
		\drho\rho{A_e^{(v_e)}\otimes B_e}{I_{AB}}.
	\end{align}
	Each word in the canonical form must use at most $\Delta$ occurences of each generator, since all such occurences appear in a group picture with the word on the boundary. So we have
	\begin{align}
		\sum_{e\in \can(g)} 
		\drho\rho{A_e^{(v_e)}\otimes B_e}{I_{AB}}
		&\leq \Delta\sum_{e\in E} 
		\drho\rho{A_e^{(v_e)}\otimes B_e}{I_{AB}}
		\\&\leq \Delta \eta.
	\end{align}
\end{proof}

\begin{lemma}[Assumption \eqref{assumption:w-p-irreps}]
\label{lemma:fixing-the-J}
	Let $f:\G\to \m L(\m H)$ be such that $f(1)$ is a projection and $f(Jx) = \w_df(x)$ for all $x$. 
	Let $\s: \G \to U(\m H)$ be a representation. Let $\rho$ be a state on $\m H$. Finally, suppose $\E{x}\drho\rho{f(x)}{\s(x)} \leq \eta$. 

	Then there is a projection $P$ such 
	that $P$ commutes with $\s(x)$ for each $x$, $P\s(J)P = \w_dP$, and $\drho\rho PI \leq d \eta$. The same holds if we replace $\w_d$ by $\w_d\1$.\footnote{Indeed, we could replace $\w_d$ by $\w_d^k$ for any $k$ coprime to $d$.}
\end{lemma}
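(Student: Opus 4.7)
The plan is to take $P$ to be the spectral projection of $\s(J)$ onto its $\w_d$-eigenspace. Assumption~\eqref{assumption:w-p-irreps} is what makes this sensible: since every irreducible component of $\s$ sends $J$ to some $\w_d^j I$, we have a spectral decomposition $\s(J) = \sum_{j=0}^{d-1}\w_d^j Q_j$ into orthogonal projections, and we set $P := Q_1$. The equation $P\s(J)P = \w_d P$ then holds by construction, and since $J$ is central in $\G$, $\s(J)$ commutes with every $\s(x)$, so each $Q_j$ does too. The $\w_d^{-1}$ variant is handled identically by taking $P := Q_{d-1}$.

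For the distance bound, I set $g(x) := f(x) - \s(x)$, so that the hypothesis reads $\E_x \|g(x)\|_\rho \leq \eta$. Using $f(Jx) = \w_d f(x)$ together with the spectral decomposition of $\s(J)$, a direct computation produces the identity
\begin{equation*}
g(Jx) - \w_d\, g(x) \;=\; \sum_{j \neq 1}(\w_d - \w_d^j)\, Q_j\s(x).
\end{equation*}
The crux is the observation that, because every $Q_j$ commutes with the unitary $\s(x)$, the $\rho$-norm of the right-hand side is \emph{independent} of $x$ and equals $\sqrt{\sum_{j\neq 1}|\w_d - \w_d^j|^2 \Tr_\rho Q_j}$.

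Applying the triangle inequality to the identity and averaging over $x\in\G$ bounds this $x$-independent quantity by $2\eta$. Since $|\w_d - \w_d^j|^2 \geq 4\sin^2(\pi/d) \geq 16/d^2$ for every $j \neq 1$, we obtain
\begin{equation*}
\drho{\rho}{P}{I}^2 \;=\; \Tr_\rho(I - P) \;=\; \sum_{j \neq 1}\Tr_\rho Q_j \;\leq\; \frac{d^2\eta^2}{4},
\end{equation*}
hence $\drho{\rho}{P}{I} \leq d\eta/2 \leq d\eta$. The main conceptual hurdle is spotting the algebraic identity that makes $g(Jx) - \w_d\, g(x)$ a linear combination supported only on the ``bad'' eigenspaces of $\s(J)$, and recognizing that commuting $Q_j$ past $\s(x)$ makes its $\rho$-norm $x$-independent. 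Once that identity is in hand, only a triangle inequality and the elementary gap estimate $|\w_d - \w_d^j| \geq 2\sin(\pi/d)$ remain.
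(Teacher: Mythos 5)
Your proposal is correct and takes essentially the same approach as the paper: the same choice of $P$ as the $\w_d$-eigenprojection of $\s(J)$, the same key step of combining $f(Jx)=\w_d f(x)$ with the relabeling $x\mapsto Jx$ and a triangle inequality to bound an $x$-independent error term by $2\eta$ (your quantity $\sum_{j\neq 1}\abs{\w_d-\w_d^j}^2\Tr_\rho Q_j$ is exactly $\drho{\rho}{\s(J)}{\w_d I}^2$), and the same kind of elementary root-of-unity gap estimate, with your bound $\abs{\w_d-\w_d^j}\geq 2\sin(\pi/d)$ playing the role of the paper's Lemma \ref{lemma:convex-inequality-hard}. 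The differences are cosmetic, and your bookkeeping even yields the marginally sharper constant $d\eta/2$.
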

\begin{proof}
	Decompose $\s = \bigoplus_i \s_i$ as a sum of irreducibles. 
	For each $j \in \Z_d$, let $P_j$ be the projection onto the $\w_d^j$-eigenspace of $\s(J)$.
	Notice that these decompositions are compatible in the following sense: for each $i,j$, the map $x\mapsto P_j\s_i(x)P_j$ is either the all $0$-map or it is a representation on the range of $P_j$ sending $J$ to $\w_d^jI$. It follows that $x\mapsto P_1 \s(x)P_1$ is an operator solution. Now we compute, using inequality \ref{lemma:convex-inequality-hard},
	\begin{align}
	\frac12\drho\rho PI^2 
	&= 1 - \Re\Tr_\rho P
	\\
	\frac12\drho\rho PI^2
	&\leq \frac14d^2\left[
		1 - \Re\Tr_\rho\w_d\1\s(J)
	\right]
	\\
	\frac12\drho\rho PI^2
	&\leq \frac12\left(\frac d2\drho\rho{\s(J)}{\w_dI}\right)^2
	\\
	\drho\rho PI
	&\leq \frac d2\drho\rho{\s(J)}{\w_dI}.
	\end{align}
	The lemma will be established if we can show $\drho\rho{\s(J)}{\w_dI} \leq 2 \eta$. We first use the fact that expectation is invariant to multiplication by $J$.
	\begin{align}
		\E{x}\drho\rho{f(x)}{\s(x}
		&=\E{x}\drho\rho{f(Jx)}{\s(Jx)}
		\\&=
		\E{x}\drho\rho{\w_df(x)}{\s(J)\s(x)}
		\\&=
		\E{x}\drho\rho{f(x)}{w_d\1\s(J)\s(x)}
	\end{align}
	Next we use the triangle inequality and the unitarity of $\s(x)$. 
	\begin{align}
		\E{x}\drho\rho{\s(x)}{\w_d\1\s(J)\s(x)}
		&\leq 
		\E{x}\drho\rho{f(x)}{\s(x)}
		\E{x}\drho\rho{f(x)}{\w_d\1\s(J)\s(x)}
		\\
		\E{x}\drho\rho{\s(x)}{\w_d\1\s(J)\s(x)}
		&\leq2\eta. 
		\\
		\E{x}\drho\rho{\s(J)}{\w_dI}
		&\leq2\eta. 
	\end{align}
	Notice that the argument of the expectation on the left-hand side does not depend on $x$, so we have 
	\mbox{$\drho\rho{\s(J)}{\w_dI} \leq 2\eta$} unconditionally.
\end{proof}

\begin{proof}[Proof of Theorem \ref{thm:robust-self-testing}]
	By Lemma \ref{lemma:Bs-are-approximate-conjugate-operator-solution}, $f_A$ and $f_B$ each satisfy conditions \eqref{lemma:canonical-form-implies-stability-1}, \eqref{lemma:canonical-form-implies-stability-1-B} and \eqref{lemma:canonical-form-implies-stability-2} 
	with $\eta_1 = 2^4\Delta^3\sqrt \e$. By Lemma \ref{lemma:canonical-form-implies-stability}, $f_A$ and $f_B$ each satisfy the condition of the stability lemma \ref{lemma:vidick-gowers-hatami} with $\eta_2 =2^{10}\Delta^5\sqrt\e$. Applying the stability lemma, we get representations $\s_A,\s_B$ and isometries $W_A,W_B$ such that 
	\begin{align}
	\label{eq:fa-close-to-something}
		\E{x} \drho{\rho}{f_A(x)\otimes I_B}{ W_A^\dagger \s_A(x)W_A\otimes I_B} &\leq \eta_2\text{, and} \\
		\E{x} \drho{\rho}{I_A\otimes f_B(x)}{ I_A \otimes  W_B^\dagger \s_B(x)W_B} &\leq \eta_2. 
	\end{align}
	Recall that $f_A(J) = \w_d$ and $f_B(J) = \w_d\1$. Note that furthermore 
	\begin{equation}
		f_A(Jx) = \w_df_A(x)\text{ and } f_B(Jx) = \w_d\1f_B(x)\text{ for any }x\in\G.
	\end{equation}
	Now we apply Lemma \ref{lemma:fixing-the-J} with $\s=\s_A,\s_B$, $f(x) = W_Af_A(x)W_A^\dagger, W_Bf_B(x)W_B^\dagger$ on the states $(W_A\otimes I_B)\rho (W_A^\dagger \otimes I_B)$,$(I_A\otimes W_B)\rho (I_A\otimes W_B^\dagger)$, respectively. Let $P_A$ and $P_B$ be the resulting projectors. One can check that $x \mapsto P_A\s_A(x)P_A$ is an operator solution, while  $x \mapsto P_B\s_B(x)P_B$ is a conjugate operator solution. By assumption \eqref{assumption:w-p-irreps}, we can apply Lemma \ref{lem:unique-operator-solution} to get isometries ${\tilde W}_A, {\tilde W}_B$ such that
	\begin{align} 
	\label{eq:tau-construction-by-isometry}
	{\tilde W}_AP_A\s_A(x)P_A{\tilde W}_A^\dagger &= \tau(x)\otimes I\text{, and }
	\\
	{\tilde W}_BP_B\s_B(x)P_B{\tilde W}_B^\dagger &= \bar{\tau(x)}\otimes I.
	\end{align}

	Let $V_A = \tilde W_AW_A,V_B = \tilde W_BW_B, V = V_A\otimes V_B$. We compute:
	\begin{align}
		\eta_2 
		&\geq
		\E{x} \drho{(W_A\otimes I)\rho (W_A^\dagger\otimes I)}{W_Af_A(x)W_A^\dagger\s_A(x)^\dagger\otimes I_B}{I_{AB}}
		&& \text{derived from Equation \eqref{eq:fa-close-to-something}}
		\\
		(d+1)\eta_2
		&\geq
		\E{x} \drho{(W_A\otimes I)\rho (W_A^\dagger\otimes I)}{W_Af_A(x)W_A^\dagger\s_A(x)^\dagger P_A\otimes I_B}{I_{AB}}
		&& \text{right-multiply }P_A
		\\&=
		\E{x} \drho{(W_A\otimes I)\rho (W_A^\dagger\otimes I)}{W_Af(x)W_A^\dagger P_A\s_A(x)P_A\otimes I_B}{I_{AB}}
		&& \text{commute }P_A\text{ past }\s(x)
		\\&=
		\E{x} \drho{V\rho V^\dagger}{V_Af_A(x)W_A^\dagger(\tilde W_A^\dagger\tilde W_A) P_A\s_A(x)^\dagger P_A\tilde W_A^\dagger\otimes V_BV_B^\dagger}{\tilde W_A W_A \otimes V_BV_B\dagg}
		&& \text{conjugate by } \tilde W_A\otimes V_B
		\\&=
		\E{x} \drho{V\rho V^\dagger}{V_Af_A(x)W_A^\dagger(\tilde W_A^\dagger\tilde W_A) P_A\s_A(x)^\dagger P_A\tilde W_A^\dagger\otimes I_B}{I_{AB}}
		&& \text{apply Lemma \ref{lemma:state-dependent-distance}\eqref{item:state-dependent-distance-projection-is-identity} twice } 
		\\&=
		\E{x} \drho{V\rho V^\dagger}{V_Af_A(x)V_A^\dagger\otimes I_B}{\tilde W_A P_A\s_A(x) P_A\tilde W_A^\dagger\otimes I_B}
		&& \text{apply Lemma \ref{lemma:state-dependent-distance}\eqref{item:state-dependent-distance-inverse}}
		\\&=
		\E{x} \drho{V\rho V^\dagger}{V_Af_A(x)V_A^\dagger\otimes I_B}{(\tau(x) \otimes I) \otimes I_B}
		&& \text{apply Equation \eqref{eq:tau-construction-by-isometry}.}
	\end{align}
	The same proof works for the $B$ objects, yielding
	\begin{equation}
	\label{eq:robust-self-testing-proof-2}
		\E{x} \drho{V\rho V^\dagger}{I_A \otimes V_Bf_B(x)V_B^\dagger}{I_A \otimes (\bar{\tau(x)} \otimes I)}
		\leq (d+1)\eta_2.
	\end{equation}
	Recalling equation \eqref{eq:canonical-form-implies-consistency} and taking an expectation, we have
	\begin{align}
	\label{eq:robust-self-testing-proof-1}
		4\Delta\eta_1 
		&\geq
		\E{x}\drho{\rho}{f_A(x)\otimes f_B(x)}{I_{AB}}
		\\&=
		\E{x}\drho{V\rho V\dagg}{V_Af_A(x) V_A\dagg\otimes V_Bf_B(x) V_B\dagg}{I_{AB}}
		\\&=
		\E{x}\drho{V\rho V\dagg}{V_Af_A(x) V_A\dagg\otimes I_B}{I_A\otimes V_Bf_B(x)\dagg V_B\dagg}.
	\end{align}
	Weakening the previous inequality for convenience, we have
	\begin{equation}
		\E{x}\drho{V\rho V\dagg}{V_Af_A(x) V_A\dagg\otimes I_B}{I_A\otimes V_Bf_B(x)\dagg V_B\dagg}.\leq \eta_2.
	\end{equation}
	Applying three triangle inequalities gives us
	\begin{equation}
		\label{eq:robust-self-testing-proof-3}
	 	\E x \drho{V\rho V^\dagger}{(\tau(x)\otimes I)_A\otimes (\bar{\tau(x)}\otimes I)_B}{I_{AB}} 
		\leq 3d\eta_2
	 \end{equation}
	 which says that $(\tau(x)\otimes I)_A \otimes (\bar{\tau(x)}\otimes I)_B$ approximately stabilizes $V\rho V^\dagger$ on average. Now we see that a similar bound holds pointwise. We use a change of variable and the homomorphism property of $\tau$; this is the same technique used in the proof of Lemma \ref{lemma:fixing-the-J}.
	\begin{align}
		\E x \drho{V\rho V^\dagger}{(\tau(x)\otimes I)\otimes (\bar{\tau(x)}\otimes I)}{I} 
		&\leq 3d\eta_2
		\\
		\E x \drho{V\rho V^\dagger}{(\tau(yx)\otimes I)\otimes (\bar{\tau(yx)}\otimes I)}{I} 
		&\leq 3d\eta_2
		\\
		\label{eq:robust-self-testing-proof-4}
		\E x \drho{V\rho V^\dagger}{(\tau(y)\tau(x)\otimes I)\otimes (\bar{\tau(y)\tau(x)}\otimes I)}{I} 
		&\leq 3d\eta_2
		\\
		\label{eq:robust-self-testing-proof-5}
		\E x \drho{V\rho V^\dagger}{(\tau(y)\otimes I)\otimes (\bar{\tau(y)}\otimes I)}{I} 
		&\leq 6d\eta_2.
	\end{align}
	The final equation follows from right-multiplying the previous equation by the inverse of Equation \eqref{eq:robust-self-testing-proof-3}. Since the expression has no $x$ dependence, we can drop the average and draw the same conclusion pointwise.

	Now we use the finiteness of the group and apply Corollary \ref{lemma:stabilizer-state}.  We trace out irrelevant subsystems and then apply the conclusion of that lemma:
	\begin{align}
		\forall y\drho{V\rho V^\dagger}{(\tau(y)_{A_1}\otimes I_{A_2})\otimes (\bar{\tau(y)}_{B_1}\otimes I_{B_2})}{I} 
		&\leq 6d\eta_2,
		\\
	\Rightarrow	\,\,\,\,\forall y\drho{\Tr_{A_2B_2}V\rho V^\dagger}{\tau(y)_{A_1}\otimes \bar{\tau(y)}_{B_1}}{I} 
		&\leq 6d\eta_2,
		\\
		\Rightarrow \,\,\,\, \norm{\Tr_{A_2B_2}V\rho V^\dagger - \proj\epr^{\otimes n}\otimes \rho_\text{aux}}_1
		&\leq 6^3(d\eta_2)^2, \,\,\,\,\,\,\,\text{by Corollary \ref{lemma:stabilizer-state}}.
	\end{align}
	This establishes the robustness condition \eqref{eq:robustness-condition-state} with $\d(\e) = O(d^2\eta_2^2) = O(d^2\Delta^{10}\e)$.

	Next, we show the other robustness conditions. It'll suffice to find that $f$ is close to $\tau$ pointwise. 
	Equations \eqref{eq:robust-self-testing-proof-2}, \eqref{eq:robust-self-testing-proof-1} with a triangle inequality give
	\begin{align}
		\E{x}&\drho{V\rho V^\dagger} {V_Af_A(x)V_A^\dagger\otimes (\bar{\tau(x)} \otimes I)_B}{I}\leq 2d\eta_2\text{, and}
		\\
		\E{x}&\drho{V\rho V^\dagger} {(\tau(x) \otimes I)_A\otimes V_Bf_B(x)V_B^\dagger}{I} \leq 2d\eta_2.
	\end{align}
	From here, we argue only on the $A$ side. The argument for the $B$ side is analogous. Applying a change of variable and then multiplying gives
	\begin{align}
		\E{x}\drho{V\rho V^\dagger}{V_Af_A(ex)V_A^\dagger\otimes (\bar{\tau(ex)} \otimes I)}{I} &\leq 2d\eta_2,
		\\
		\E{x}\drho{V\rho V^\dagger}{V_Af_A(ex)f_A(x)^\dagger V_A^\dagger\otimes (\bar{\tau(e)} \otimes I)}{I} &\leq 4d\eta_2.
	\end{align}
	By Equation \eqref{eq:canonical-form-implies-stability},
	\begin{equation}
	\drho{V\rho V^\dagger}{V_Af_A(ex)f_A(x)^\dagger V_A^\dagger\otimes I_B}{V_Af_A(e) V_A^\dagger\otimes I_B} \leq \eta_2.
	\end{equation}
	Using this, Equation \eqref{eq:robust-self-testing-proof-5}, and two triangle inequalities gives
	\begin{equation}
		\drho{V\rho V^\dagger}{V_Af_A(e)V_A^\dagger\otimes I_A}{(\tau(e)\otimes I) \otimes I_B}
		\leq 11d\eta_2.
	\end{equation}
	From the conclusion of Lemma \ref{lemma:canonical-form-implies-stability}, we know that $f_A(e)$ is $\eta_2$-close to $A_e^{(v_e)}$. One more triangle inequality establishes robustness conditions \eqref{eq:robustness-condition-alice}, \eqref{eq:robustness-condition-bob} with $\d = O(d^2\eta_2^2) = O(d^2\Delta^{10}\e)$.

\end{proof}

\section{On the failure of Magic Square and Pentagram for $d\neq 2$}\label{sec:impossibility}
	One can generalize the magic square and magic pentagram games by taking the constraints and the answers in the game to be mod $d$ (instead of simply mod $2$). A previous version of this paper falsely claimed that these generalizations are pseudotelepathy games, and that moreover our self-testing theorem \ref{thm:robust-self-testing} applies to them. It is instead the case that for any $d \neq 2$, both the magic square and magic pentagram games are not pseudotelepathy games. The following theorem establishes this.

\begin{thm}
    Let $\G$ be the solution group of the magic square game or the magic pentagram game over $\Z_d$. Then $\G$ satisfies $J^2 = 1$. In particular, if $d$ is odd, then $J = 1$ and $\G$ is abelian. 
\end{thm}

\begin{proof}
    One can show that for any pair $\set{x,z}$ of generators which do not share a constraint,
    we have $[x,z] = J$. (See Lemmas \ref{prop:four-edges} and \ref{prop:six-edges}.)
    Applying the same observation with the role of $x$ and $z$ swapped shows that $[z,x] = J.$ For general group commutators we have that $[x,z] = [z,x]\1$. In particular $J = J\1$ or equivalently, $J^2 = 1 = J^d$. If $d$ is odd, then $J^{d+1} = (J^2)^{\frac{d+1}{2}} = 1 = J^d$. This implies $J=1$. Since the commutator subgroup of $\G$ is equal to the trivial subgroup $\braket{J}$ (see Lemmas \ref{lemma:commutator-subgroup-square} and \ref{lemma:commutator-subgroup-pentagram}), $\G$ is abelian.
\end{proof}

Note that a solution group with $J=1$ has no operator solution, even in the commuting operator model of entanglement. Separately, an abelian group has an operator solution iff it has a classical solution.

In a manuscript to appear shortly after this one, Joel Wallman \cite{wallman2019} shows that there is no pseudotelepathy LCS game whose ideal operators are tensor products of Paulis mod $d$ for $d\neq 2$. 
	
\section{Self-testing of specific games}\label{sec:specific-games}
We now apply the results of the previous section to conclude robustness for a specific family of games.
We must both understand the representation theory of their abstract solution groups and the combinatorics of the presentations for those groups. Even though our general robust self-testing theorem holds for LCS games mod $d$, we are currently only aware of applications of it to examples of LCS games mod $2$. In this section, we show applications of our theorem to the magic square and magic pentagram games mod $2$, and to certain parallel versions of these games.

\subsection{The qudit pauli group}
In this subsection, we formally introduce the Pauli group. We state definitions and prove properties for the Pauli group mod $d$. However, we will later only utilize such properties for the Pauli group mod $2$. As mentioned earlier, a manuscript by Joel Wallman, to appear shortly after this one, shows that there does not exist any pseudotelepathy LCS game mod $d$, for $d\neq 2$, whose ideal strategy consists of products of Pauli operators.
\begin{definition}\label{definition:n-qudit-pauli-group}
	The $n$-qudit Pauli group of local dimension $d$ is denoted $\m P_d^{\otimes n} := \braket{S:R}_{\Z_d}$ and presented with generators and relations
	\begin{align}
		S = \set{x_i,z_i\;i\leq n}
		&&
		R = \set{J\1[x_i,z_i], [x_i,x_j], [z_i,z_j], [x_i,z_j]\;i \neq j \leq n}
	\end{align}
\end{definition}

We aim to show that the Pauli group is suitable for applying the results from Section \ref{sec:self-testing}. 

\begin{definition}
\label{def:representations-pauli-group}
	We now define maps $\tau_l^{(n)}: \m P_d^{\otimes n} \to U(\C^{d})^{\otimes n}$ as 
	\begin{align}
	\tau_l^{(n)}(J) &= \w_d^l I
	,
	\\\tau_l^{(n)}(x_i) &= \underbrace{I\otimes \cdots I}_{i-1} \otimes X^l \otimes \underbrace{I \otimes\cdots \otimes I}_{n-i}
	,
	\\\tau_l^{(n)}(z_i) &= \underbrace{I\otimes \cdots I}_{i-1} \otimes Z \otimes \underbrace{I \otimes\cdots \otimes I}_{n-i}.
	\end{align}
\end{definition}
\begin{lemma}
\label{lemma:representations-pauli-group}
	The $\set{\tau_l^{(n)}\;l\in \Z_d\minus\set0}$ are $d-1$ inequivalent representations of dimension $d^n$. 
\end{lemma}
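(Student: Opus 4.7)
The plan is to verify three things in order: that each $\tau_l^{(n)}$ is a well-defined representation, that its dimension is $d^n$, and that distinct values of $l$ yield inequivalent representations.

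For well-definedness, I would check each relation in $R$ from Definition \ref{definition:n-qudit-pauli-group}. The cross-site relations $[x_i,x_j]$, $[z_i,z_j]$, $[x_i,z_j]$ for $i \neq j$ hold automatically, since the corresponding images act on disjoint tensor factors and thus commute. The order relations $J^d, x_i^d, z_i^d$ follow immediately from $\w_d^d = 1$ and the single-qudit identities $X^d = Z^d = I$. The twisted commutator relation $J\1[x_i, z_i] = 1$ reduces, on the $i$-th tensor factor, to checking $X^l Z X^{-l} Z\1 = \w_d^l I$; this follows from the Weyl relation $ZX = \w_d XZ$ by induction on $l$, and this convention is the one forced on us by Example \ref{example:P_d}.

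The dimension claim is immediate: the target of $\tau_l^{(n)}$ is $U(\C^d)^{\otimes n} \subseteq U(\C^{d^n})$ by construction.

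For inequivalence, I would invoke the fact that equivalent representations share characters. It suffices to exhibit a single group element on which the characters disagree; the central element $J$ is the natural choice, since
\begin{equation}
\chi_{\tau_l^{(n)}}(J) \;=\; \Tr\bigl(\w_d^l I_{d^n}\bigr) \;=\; d^n \w_d^l.
\end{equation}
For distinct $l, l' \in \Z_d$ we have $\w_d^l \neq \w_d^{l'}$, so the characters differ and the representations $\tau_l^{(n)}$ and $\tau_{l'}^{(n)}$ are inequivalent. Restricting to $l \in \Z_d \setminus \{0\}$ then gives the claimed $d-1$ pairwise inequivalent representations.

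I do not anticipate any real obstacle; the proof is a direct algebraic verification, with the only subtlety being careful bookkeeping of the Weyl convention $ZX = \w_d XZ$. Note that the lemma statement does not assert irreducibility, so no appeal to Fact \ref{fact:irreducibility-criterion} is needed here; that stronger claim (which will presumably be required to invoke Theorem \ref{thm:robust-self-testing}) would instead follow by comparing $\sum_g \abs{\chi_{\tau_l^{(n)}}(g)}^2$ against $\abs{\m P_d^{\otimes n}}$.
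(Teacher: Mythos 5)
Your proof is correct and follows the same route as the paper: the paper's own argument likewise just checks the (twisted) commutation relations generator by generator and distinguishes the representations by their characters at the central element $J$, where $\Tr\tau_l^{(n)}(J) = \w_d^l d^n$. Your additional remark that irreducibility is not part of this lemma and is established separately (via Fact \ref{fact:irreducibility-criterion} once $\abs{\pauli}$ is known) matches how the paper organizes the argument in Proposition \ref{prop:representations-pauli-group}.
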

\begin{proof}
	To see that they are representations, it suffices to check the commutation and anticommutation relations. To see that they are inequivalent, see that their characters differ at $J$, since $\Tr\tau_l^{(n)}(J) = \w_d^ld^n$. 
\end{proof}

\begin{prop}\label{prop:representations-pauli-group}
	$\m P_d^{\otimes n}$ group-tests $\tau_1^{(n)}$ in the sense of Definition \ref{definition:group-test}.
\end{prop}

To prove this, we first establish the following lemma, which will let us count the elements of $\pauli$.

\begin{lemma}
\label{lemma:canonical-form}
	There is a canonical form $\can: \m P_d^{\otimes n} \to \m F(S)$ which sends each element to a string of the form
	\begin{equation}
	 	J^{a_1}\prod_{i=1}^nx_i^{a_{2i}}z_i^{a_{2i+1}}, a_i\in \Z_d.
	 \end{equation}
\end{lemma}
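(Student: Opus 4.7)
The plan is to define \can{} via a rewriting procedure and then verify the two defining properties of a canonical form: that it is well-defined as a function on $\pauli$ and that it is injective.

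For existence of the normal form, I would argue that any word $w$ in $\m F(S)$ can be rewritten, using only the relations of $\pauli$, into a string of the stated shape $J^{a_1}\prod_{i=1}^n x_i^{a_{2i}}z_i^{a_{2i+1}}$ with $a_i \in \Z_d$. The rewriting proceeds in three passes. First, use the central relations $[x_i, J] = [z_i, J] = 1$ to slide every occurrence of $J^{\pm 1}$ to the far left of the word. Second, use the pairwise commutations $[x_i, x_j] = [z_i, z_j] = [x_i, z_j] = 1$ for $i \neq j$ to sort the remaining letters by their index $i$, so that all letters at index $i$ occur before all letters at index $i+1$. Third, within each index $i$, apply the twisted commutation $x_i z_i = J z_i x_i$ (equivalently, $z_i x_i = J^{-1} x_i z_i$) to move all $x_i$ letters to the left of the $z_i$ letters, each swap contributing an additional $J^{\pm 1}$ that can be sent to the far left by the first pass. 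Finally, the relations $x_i^d = z_i^d = J^d = 1$ reduce each exponent modulo $d$. This shows that every element of $\pauli$ admits at least one representative in the desired form; choose \can$(g)$ to be any one of them.

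For injectivity, I would use the explicit representation $\tau_1^{(n)}$ of Definition \ref{def:representations-pauli-group} and Lemma \ref{lemma:representations-pauli-group}. Since $\tau_1^{(n)}$ is a group homomorphism, its value on a canonical string is
\begin{equation}
\tau_1^{(n)}\!\left(J^{a_1}\prod_{i=1}^n x_i^{a_{2i}} z_i^{a_{2i+1}}\right) = \w_d^{a_1} \bigotimes_{i=1}^n X^{a_{2i}} Z^{a_{2i+1}}.
\end{equation}
The standard fact that the $d^{2n}$ operators $\bigotimes_i X^{b_i} Z^{c_i}$, for $(b_i, c_i) \in \Z_d^{2}$, are linearly independent in $\m L((\C^d)^{\otimes n})$ (they form an orthogonal basis with respect to the Hilbert--Schmidt inner product) implies that different exponent tuples $(a_1, \ldots, a_{2n+1}) \in \Z_d^{2n+1}$ give rise to distinct matrices, since the scalar $\w_d^{a_1}$ is also then uniquely determined. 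Hence any two canonical strings representing the same element of $\pauli$ must be identical, so the rewriting procedure above in fact produces a uniquely defined function \can{}.

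Combining the two parts, \can{} is injective and the induced map $\bar{\can}$ sends each element of $\pauli$ to some (and hence its unique) canonical-form word, so $\bar{\can}$ is a bijection, and it is a homomorphism tautologically since both sides are just group elements. The main technical care lies not in any single step but in the bookkeeping of the three rewriting passes; the non-commutation of $x_i$ and $z_i$ produces stray $J$-powers that must be tracked, but since $J$ is central these powers accumulate cleanly into the single prefactor $J^{a_1}$. As a corollary one obtains $\abs{\pauli} = d^{2n+1}$, which will be useful in the subsequent proof of Proposition \ref{prop:representations-pauli-group} via the dimension count of Fact \ref{fact:character-dimension}.
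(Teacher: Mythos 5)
Your proof is correct. The existence half (rewrite any word by pushing $J$'s to the front, sorting by qudit index, and using the twisted commutation $x_iz_i = Jz_ix_i$ within each index) is exactly the paper's argument. Where you diverge is in the injectivity half. The paper stays inside the group: it reduces the equality of two canonical strings to an identity $J^{c_1} = \prod_i x_i^{a_{2i}-b_{2i}}z_i^{a_{2i+1}-b_{2i+1}}$, observes that the right-hand side is central only when every exponent vanishes (a nonzero power of $z_1$, say, would fail to commute with $x_1$), and only then invokes the representation $\tau_1^{(n)}$ for the single remaining fact that $J^{c_1}=1$ forces $c_1=0$. You instead apply $\tau_1^{(n)}$ to the whole canonical string and invoke the Hilbert--Schmidt orthogonality of the $d^{2n}$ operators $\bigotimes_i X^{b_i}Z^{c_i}$ to separate all exponent tuples at once. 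Both are sound; your route outsources the combinatorics to a standard linear-algebra fact about generalized Pauli matrices and is arguably shorter, while the paper's route keeps the argument group-internal and minimizes what must be verified about the representation. Your closing remark that $\abs{\pauli}=d^{2n+1}$ follows is the same corollary the paper draws.
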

\begin{proof}
	First, we see that each element can be written this way. Start with an arbitrary word representing the element and apply the commutation and anticommutation relations to get the $x_i$ and $z_i$ in order. Finish by commuting all of the $J$s to the front and applying the relations $s^d = 1$ to get all of the exponents to lie in $\Z_d$. 

	Next, we see that different words represent different group elements. Suppose that 
 	\begin{equation}\label{eq:canonical-form}
 		J^{a_1}\prod_{i=1}^nx_i^{a_{2i}}z_i^{a_{2i+1}} = J^{b_1}\prod_{i=1}^nx_i^{b_{2i}}z_i^{b_{2i+1}}.
 	\end{equation}
 	Then by various applications of the (twisted) commutation relations, we have
 	\begin{equation}\label{eq:canonical-form-uniqueness}
 		J^{c_1} = \prod_{i=1}^nx_i^{a_{2i} - b_{2i}}z_i^{a_{2i+1}-b_{2i+1}}
 	\end{equation}
 	for some $c_1 \in \Z_d$. The left hand side is always central, but the right hand side is central only if $a_i = b_i$ for all $i\in [2,2n+1]$. 
 	(Suppose for example that $a_3 - b_3 \neq 0$, so that the power of $z_1$ is nonzero. Then the right hand side fails to commute with $x_1$.) 
 	In this case, we can see that in fact $c_1 = a_1 - b_1$, so equation \eqref{eq:canonical-form-uniqueness} holds only if $J^{c_1} = 1$ in the group. 
 	 But Proposition \ref{lemma:representations-pauli-group} gives us a representation in which $J$ and $1$ are represented by distinct matrices. Therefore, equation \eqref{eq:canonical-form-uniqueness} holds only when $a_i = b_i$ for all $i$.
\end{proof}

Thanks to the canonical form, we can easily compute the size of $\m P_d^{\otimes n}$.
\begin{cor}
	$\m P_d^{\otimes n}$ has $d^{2n+1}$ elements.
\end{cor}
\begin{proof}[Proof of Proposition \ref{prop:representations-pauli-group}]
	We'll check that $\m P_d^{\otimes n}$ has exactly $d-1$ irreducible representations of dimension $d^{n}$, each sending $J$ to a different nontrivial $d\th$ root of unity. All other irreducible representations are $1$-dimensional and send $J$ to $1$. 

	We'll complete the character table of $\pauli$. Now that we know the size of the group, we can check via Fact \ref{fact:irreducibility-criterion} that the representations of Lemma \ref{lemma:representations-pauli-group} are irreducible. 

	Next, we notice that the commutator subgroup $[\m P_d^{\otimes n},\m P_d^{\otimes n}]$ is equal to $\braket J$, the cyclic subgroup generated by $J$. This has order $d$, so by Fact \ref{fact:1-dim-irreps}, there are $d^{2n}$ irreps of dimension $1$ which send $J$ to $1$. Now we add the squares of the dimensions of our irreps and see that they saturate equation \eqref{eq:character-dimension}.
	\begin{equation}
		\abs\pauli = d^{2n+1} = (d-1)\cdot (d^n)^2 + (d^{2n})\cdot(1)^2 = \sum_\s (\dim \s)^2.
	\end{equation}
	Therefore, we've found all irreducible representations of $\pauli$. 
\end{proof}

\begin{lemma}\label{lemma:small-group-pictures-pauli-n}
	Let $\can$ be the canonical form from Lemma \ref{lemma:canonical-form}. Then each equation $\can(x)\can(yx)\1\can(y) = 1$ is witnessed by a $\pauli$-picture in which
	each generator and each relation appears at most $18d^2n$ times.
\end{lemma}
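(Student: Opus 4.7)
The plan is to construct the required $\pauli$-picture by a sorting argument. By cyclic rotation of the boundary, it suffices to witness the equivalent identity $\can(y)\can(x)=\can(yx)$: both sides represent the same group element, and we exhibit an explicit sequence of local moves (commutations and exponent reductions) turning the left word into the right word, which we then read off as a planar picture.

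Recall that $\can(g) = J^{a_1}\prod_i x_i^{a_{2i}} z_i^{a_{2i+1}}$ has at most $1 + 2n(d-1) \leq 3nd$ letters, with non-$J$ generators in the canonical order $x_1 < z_1 < x_2 < z_2 < \cdots < x_n < z_n$. Reading $\can(y)\can(x)$ left to right, each non-$J$ letter from $\can(x)$ must be slid leftward past precisely those letters of $\can(y)$ with strictly larger index in this ordering. Each such slide is a single use of a commutation relation $[s,t]=1$, or, when $\{s,t\}=\{x_i,z_i\}$, the twisted relation $J\1[x_i,z_i]\in R$; in either case this corresponds to a single $4$-edge interior vertex (with the twisted version carrying vertex label $J$, so that no extra $J$-edges are introduced into the interior). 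After sorting, like generators coalesce, and at most two uses of each $s^d=1$ relation reduce the combined exponents back into $\{0,\ldots,d-1\}$. The accumulated vertex labels of twisted commutations automatically match the $J$-power of $\can(yx)$ by the definition of the canonical form, making the picture valid.

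Counting is the heart of the argument. For any pair of distinct non-$J$ generators $s,t$, the (possibly twisted) commutation relation between them is used at most $(d-1)^2 \leq d^2$ times, since at most $d-1$ copies of $s$ in $\can(x)$ need to cross at most $d-1$ copies of $t$ in $\can(y)$. Each relation $s^d=1$ is used at most twice. A single fixed generator $s$ therefore contributes at most $3(d-1)$ boundary edges, at most $(2n-1)\cdot 2\cdot d^2 \leq 4nd^2$ edges from the at most $2n-1$ commutation-vertices involving $s$ (each such vertex carries two $s$-edges), and at most $2d$ edges from its $s^d$-vertices. Adding an $O(d)$ slack for the handling of boundary $J$-edges discussed below, the total is at most $4nd^2 + 5d + O(d) \leq 18 d^2 n$ for $n\geq 1, d\geq 2$. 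The per-relation count is at most $d^2$, comfortably below $18d^2 n$.

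The main technical wrinkle will be the treatment of the boundary $J$-edges. Because the canonical form literally begins with $J^{a_1}$, the boundary of the picture carries up to $3(d-1)$ edges labelled $J$ (one group from each of the three canonical forms in $\can(x)\can(yx)\1\can(y)$); each such edge must terminate at an interior vertex involving $J$, i.e.\ at a $[s,J]=1$- or $J^d=1$-vertex. These are collected into $O(1)$ such vertices, which adds $O(d)$ additional $s$-edges per generator (absorbed into the slack above) and uses each $J$-involving relation only $O(1)$ times. The only other thing to check is that the sorting can be laid out as a genuinely planar diagram; this is routine, since it is a braid-style picture where each letter of $\can(x)$ becomes a strand crossing the $\can(y)$-strands, with one interior vertex per crossing.
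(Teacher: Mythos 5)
Your construction is correct and is essentially the paper's own argument: the paper likewise builds a planar linking diagram joining the occurrences of each generator in $\can(x)$ and $\can(y)$ to those in $\can(yx)^{-1}$ (capping excess occurrences with $s^d$-vertices and placing a commutation or twisted-commutation vertex at each crossing of links), and then counts at most $3d$ links per generator, each subdivided into at most $6dn$ edges, yielding the same $18d^2n$ bound with each relation used at most $(3d)^2$ times. The only cosmetic differences are that the paper treats the boundary word ``up to $J$ terms'' instead of routing boundary $J$-edges to interior vertices as you do, and that your crossing count per relation ($\leq d^2$) is a bit tighter than needed.
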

\begin{proof}
	See Figure \ref{fig:small-group-pictures-pauli-n}. Starting from arbitrary $x, y$, we compute $\can(x), \can(y), \can(yx)\1$. Draw a group picture whose boundary is $\can(x)\can(yx)\1\can(y)$ up to $J$ terms. Link each positive term from $\can(x)$ and $\can(y)$ with an appropriate negative term from $\can(yx)\1$. In the case that there are more positive terms than negative terms, link them with each other using a relation of the form $s^d=1$. At each intersection of links, add a vertex with either a commutation relation or an anticommutation relation. This subdivides the links into edges, giving us a valid group picture. Now we compute its size.

	There are $2n$ generators and each generator has multiplicity at most $d$ in each of $\can(x), \can(xy)\1, \can(y)$. Therefore, we draw at most $3d\cdot 2n$ links in the above drawing process. Each link intersects each other link at most once, so each link is subdivided into at most $6dn$ edges. Each generator labels at most $3d$ links, so there are at most $18d^2n$ edges with a given label. We must also count the uses of the generators. Recall that each relation involves only two generators. Therefore, each relation is used at most $(3d)^2$ times---once for each pair of links labelled by the generators in the relation.
\end{proof}

\begin{figure}
	\resizebox{\textwidth}{!}{
		\begin{tabular}{
		m{0.3\textwidth}m{0.03\textwidth}m{0.3\textwidth}m{0.03\textwidth}m{0.3\textwidth}
		}
			\resizebox{0.3\textwidth}{!}{\includegraphics{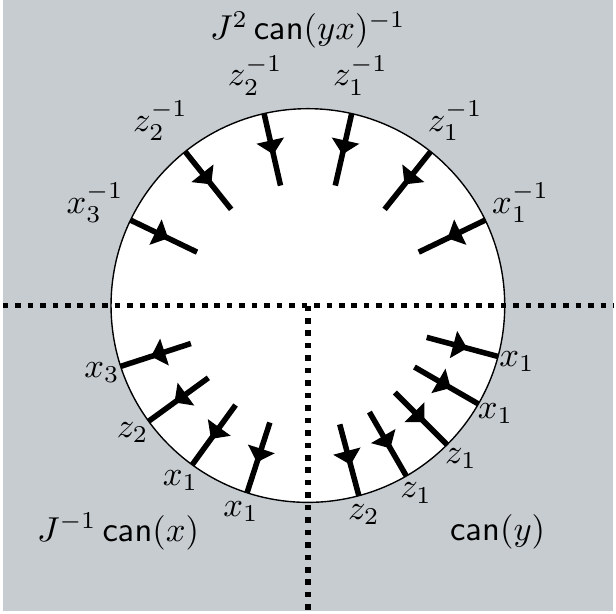}}
		    & $\longrightarrow$ &
			\resizebox{0.3\textwidth}{!}{\includegraphics{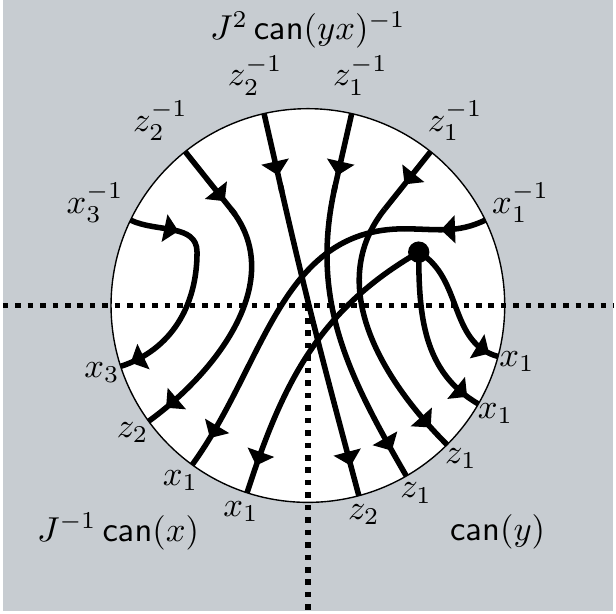}}
		    & $\longrightarrow$ &
			\resizebox{0.3\textwidth}{!}{\includegraphics{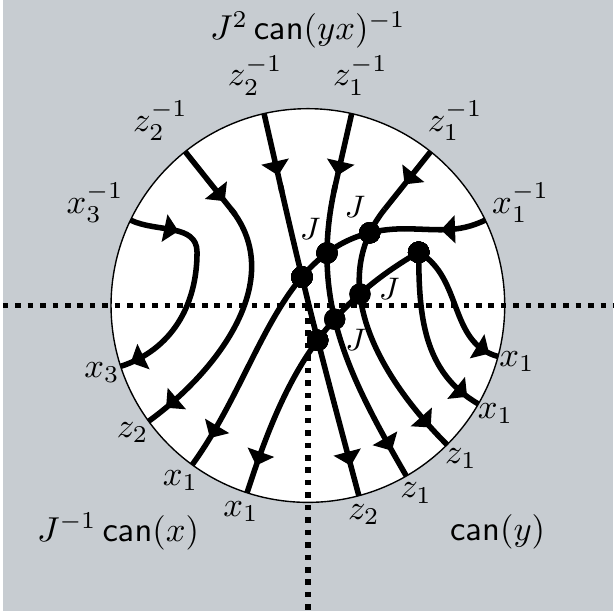}}
		\end{tabular}
	}
	\caption{Set $d = n = 3$ and $x = J\1x_1^2z_2x_3$, $y = x_1^2z_1^2z_2$. Then $\can(yx)\1 = J^{-2}x_3^{-1}z_2^{-2}z_1^{-2}x_1^{-1}$. The group picture witnesses that $J\1\can(x) J^2\can(yx)\1 \can(y) = J^4 = J$, from which it follows by scalar multiplication that $\can(x)\can(yx)\1\can(y) = 1$.}
	\label{fig:small-group-pictures-pauli-n}
\end{figure}

\subsection{Self-testing the Magic Square}
Recall the definition of the Magic Square game from Example \ref{example:magic-square}. 

\begin{definition}[Ideal strategy for Magic Square LCS game $\pmod 2$] 
	See Figure \ref{eq:mermin-peres-magic-square}. Let $A_e$ be the operator which appears on the right-hand side in the same spot as variable $e$ appears on the left-hand side. Set $A_e^{(v)}:= A_e$ for all $v$. Then set $B_e = \bar{A_e}$ (where any choice of basis works for the conjugation).  Set $\ket\psi = \ket{\text{EPR}}^{\otimes 2}$. We define $\{A_e^{(v)}\}, \{B_e \}, \ket\psi$ to be the \emph{ideal strategy} for the Magic Square game $\pmod 2$.
\end{definition}
Notice that the $B_e$ are defined only up to local isometry, because of the freedom in the choice of basis for conjugation.

\begin{figure}[!b]
	\begin{center}
	\begin{tabular}{m{0.4\textwidth}m{0.4\textwidth}}
		\resizebox{!}{0.22\textheight}{\includegraphics{MagicSquare-figure30.pdf}}
		&
		\resizebox{!}{0.22\textheight}{\includegraphics{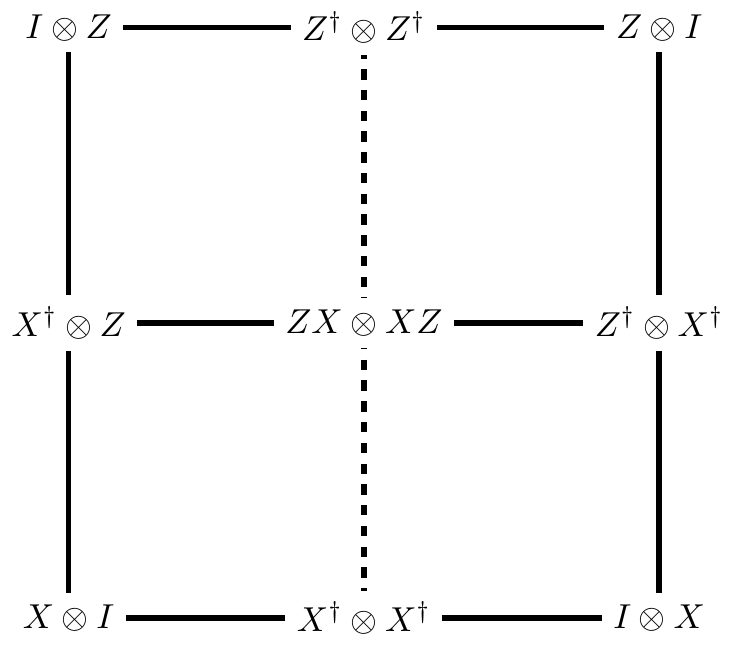}}
	\end{tabular}
	\end{center}
	\caption{The standard operator solution for the Magic Square.}
	\label{eq:mermin-peres-magic-square}
\end{figure}

The robust self-testing theorem for the Magic Square game is the following.
\begin{thm}\label{thm:robust-self-testing-square}
The Magic Square game mod $2$ self-tests the ideal strategy with perfect completeness and $O(\e)$-robustness.
\end{thm}
To prove this, we'll make a direct application of Theorem \ref{thm:robust-self-testing}. However, we'll use the tighter bounds stated in the appendix as Theorem \ref{thm:robust-self-testing-appendix}. We will check that all of its conditions are satisfied by a series of lemmas. 
Throughout, let $\G_2$ be the solution group for the Magic Square game over $\Z_2$. We'll start by identifying $\G_2$ as a group of Pauli operators.

\begin{prop}\label{prop:solution-group-square}
	$\G_2\cong \mathcal{P}_{2}^{\otimes 2}$.
\end{prop}

\begin{cor}\label{cor:G2-satisfies-conditions-4-5}
	$\G_2$ satisfies condition \eqref{assumption:group-test-appendix} of Theorem \ref{thm:robust-self-testing-appendix}. (This is the same as condition \eqref{assumption:w-p-irreps} of Theorem \ref{thm:robust-self-testing}.)
\end{cor}
\begin{proof}[Proof of corollary]
	Let $\tau = \tau_1^{(2)}$ as defined in Definition \ref{def:representations-pauli-group}. By Proposition \ref{prop:representations-pauli-group}, $\G_2$ group-tests $\tau$, giving \eqref{assumption:group-test-appendix}.
\end{proof}

We prove Proposition \ref{prop:solution-group-square} with two lemmas.

\begin{lemma}\label{lemma:commutator-subgroup-square}
	The commutator subgroup $[\G_2,\G_2]$ is $\Braket J$, 
	the cyclic subgroup generated by $J$.
\end{lemma}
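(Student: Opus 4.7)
The plan is to prove both inclusions. Since $J$ is central in $\G_2$, the inclusion $[\G_2, \G_2] \subseteq \braket J$ is equivalent to showing that the quotient $\G_2/\braket J$ is abelian, while the reverse inclusion $\braket J \subseteq [\G_2, \G_2]$ reduces to exhibiting $J$ as a single commutator. Both will follow from exploiting the fact that the ``magic'' constraint $e_2 e_5 e_8 = J$ is the unique $J$-twisted defining relation: comparing two expressions for $e_9$ (one via row $3$, one via column $3$) isolates a single $J$-discrepancy that controls the entire nonabelian behavior of $\G_2$.

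For $\braket J \subseteq [\G_2, \G_2]$, I will compute $[e_2, e_4]$ explicitly. Using the row and column relations together with $e_i^2 = 1$, I can express $e_9$ in two ways: via row $3$ as $e_9 = e_7 e_8 = (e_1 e_4)(J e_2 e_5) = J e_1 e_4 e_2 e_5$, where the factor of $J$ enters because the magic relation forces $e_8 = J e_2 e_5$; and via column $3$ as $e_9 = e_3 e_6 = (e_1 e_2)(e_4 e_5)$, which is untwisted. Equating the two expressions and then left-multiplying by $e_1$ and right-multiplying by $e_5$ (using $e_i^2 = 1$) yields $J e_4 e_2 = e_2 e_4$, from which $[e_2, e_4] = J$ follows by a short rearrangement.

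For the reverse inclusion, I work in $Q := \G_2/\braket J$, where all six constraints become untwisted. The row and column relations then express $e_3, e_6, e_7, e_8$ as products of pairs from $\{e_1, e_2, e_4, e_5\}$ and give two expressions for $e_9$, so $Q$ is generated by these four elements. The pairs $(e_1, e_2), (e_4, e_5), (e_1, e_4), (e_2, e_5)$ commute by the original row/column relations, and equating the two (now untwisted) expressions for $e_9$ gives $[e_2, e_4] = 1$ in $Q$ by the same cancellation as above. The remaining pair $(e_1, e_5)$ is the main obstacle: I will derive $[e_1, e_5] = 1$ by expanding the column-$3$ commutation $[e_3, e_6] = 1$ in terms of the four generators and simplifying using all previously established commutations (the manipulation amounts to $e_3 e_6 = e_1 e_2 e_4 e_5 = e_1 e_4 e_2 e_5$, while $e_6 e_3 = e_4 e_5 e_1 e_2$, and reconciling these via the four known commutations forces $e_1 e_5 = e_5 e_1$). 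Alternatively, the whole proof can be packaged using the group-picture framework of Section \ref{subsection:van-kampen-diagrams}, which would make the combinatorics more transparent.
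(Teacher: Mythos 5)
Your proof is correct, but it takes a somewhat different route from the paper's. The paper proves $[\G_2,\G_2]\subseteq\braket J$ by observing that \emph{every} pair of the nine generators has commutator a power of $J$ — trivially for pairs sharing a constraint, and via the twisted commutation relation $[x_1,z_1]=J$ (established by a group picture plus the $K_{3,3}$ symmetry, cf.\ Lemma \ref{prop:four-edges}) for the non-adjacent pairs — and then inducts on word length to get $w_1w_2=J^aw_2w_1$ for arbitrary words; the reverse inclusion is then read off from that same twisted relation. You instead (i) get $\braket J\subseteq[\G_2,\G_2]$ by an explicit word derivation of $[e_2,e_4]=J$, which is precisely the algebraic content of the paper's group picture (two expressions for $e_9$, one through row $3$ and one through column $3$, differing by the single twisted constraint $e_2e_5e_8=J$), and (ii) get the forward inclusion by passing to $\G_2/\braket J$, cutting the generating set down to $\{e_1,e_2,e_4,e_5\}$ using the six constraints, and then verifying only the two missing commutations $[e_2,e_4]=1$ and $[e_1,e_5]=1$ by elimination — both of your computations check out. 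The generating-set reduction buys a shorter and more self-contained argument: you never need the twisted commutation for all $18$ non-adjacent pairs, nor the induction on words. The paper's route is more symmetric (one picture plus a graph automorphism of $K_{3,3}$ handles every non-adjacent pair at once) and reuses the group-picture machinery it needs anyway for the quantitative robustness bounds, whereas your column-$3$ trick for $[e_1,e_5]$ is specific to this presentation but entirely elementary.
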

\begin{proof}
	First, note that $J$ commutes with everything by construction.
	Next, see that each pair of generators of $\G_2$ has a commutator which is a power of $J$, and that $J$ commutes with all generators. If $w_1,w_2$ are words in the generators, then it holds by induction on the lengths of the words that $w_1w_2 = J^aw_2w_1$ for some $a\in \Z_2$. This proves the inclusion $\G_2' \seq \Braket J$. The reverse inclusion is immediate.
\end{proof}

\begin{lemma}\label{prop:four-edges}
	For generators $s_1,s_2 \in \G_2$, say that the pair $\pair{s_1}{s_2}$ is \emph{intersecting} if the corresponding edges in the constraint graph are incident on a common vertex.
 	Let $x_1,x_2,z_1,z_2$ be any generators of $\G_2$ such that $\set{x_1,x_2},\set{z_1,z_2},\set{x_1,z_2},\set{z_1,x_2}$ are interesecting pairs, while $\set{x_1,z_1},\set{x_2,z_2}$ are not.
 	Then
 	\begin{enumerate}
 		\item\label{item:prop-four-edges-1} $[x_1,z_1] = J = [x_2,z_2]$, and
 	 	\item\label{item:prop-four-edges-2} $\set{x_1,x_2,z_1,z_2, J}$ generates $\G_2$.
 	\end{enumerate}
 \end{lemma}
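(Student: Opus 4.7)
My plan handles the two assertions of the lemma separately.

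For part \ref{item:prop-four-edges-1}, that $[x_1, z_1] = J = [x_2, z_2]$: Lemma \ref{lemma:commutator-subgroup-square} already shows that every commutator in $\G_2$ lies in $\Braket J$, and since $d = 2$ forces $J^2 = 1$, we get $[x_1, z_1] \in \{1, J\}$. To rule out the trivial value, I would compose with the operator solution $\tau: \G_2 \to U(\C^2 \otimes \C^2)$ from Figure \ref{eq:mermin-peres-magic-square}; this is a homomorphism sending $J$ to $-I$. The hypothesis that $\{x_1, z_1\}$ is not intersecting says that $\tau(x_1)$ and $\tau(z_1)$ sit in different rows and different columns of the Mermin--Peres square. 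By inspection of the figure, any two entries in different rows and different columns anticommute as Paulis, so $\tau([x_1, z_1]) = -I = \tau(J) \neq \tau(1)$, and therefore $[x_1, z_1] = J$ in $\G_2$. The same argument applied to $\{x_2, z_2\}$ yields $[x_2, z_2] = J$.

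For part \ref{item:prop-four-edges-2}, that $H := \langle x_1, x_2, z_1, z_2, J \rangle = \G_2$: I would express each of the nine edge generators of $\G_2$ as a word in the five given ones. Each of the four intersecting pairs $\{x_1, x_2\}$, $\{z_1, z_2\}$, $\{x_1, z_2\}$, $\{x_2, z_1\}$ lies in a common equation of the solution-group form $abc = J^{l}$ for some $l \in \{0, 1\}$, so solving for the third variable $c = J^{l}(ab)^{-1}$ places it in $H$. This recovers four additional generators, bringing the count to eight of the nine. The ninth generator corresponds to the ``opposite-corner'' edge of the 4-cycle in $K_{3,3}$ traced out by $\{x_1, x_2, z_1, z_2\}$; it lies in $H$ via one further row- or column-relation that involves it together with two of the third elements already placed in $H$. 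Together with $J \in H$, this shows that $H$ contains every generator of $\G_2$, so $H = \G_2$.

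The main nontrivial ingredient is the anticommutation check used in part \ref{item:prop-four-edges-1}: this is a feature of the specific operator choice in Figure \ref{eq:mermin-peres-magic-square} rather than a formal consequence of the solution-group relations alone, and it is precisely the property that makes the Mermin--Peres construction work. An alternative, purely algebraic proof of part \ref{item:prop-four-edges-1} via the van Kampen diagram machinery of \S \ref{subsection:van-kampen-diagrams} is possible but considerably more involved, and would in any case be subsumed by the subsequent identification $\G_2 \cong \mathcal P_2^{\otimes 2}$ in Proposition \ref{prop:solution-group-square}.
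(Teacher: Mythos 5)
Your proof is correct for $\G_2$ as stated, and your part (2) is essentially the paper's argument: the paper phrases the same bookkeeping diagrammatically (a vertex incident on only one ``unknown'' edge lets you solve for that edge in terms of the known ones, as in Figure \ref{fig:k33-generators}), but the content --- solve the four row/column equations containing the four intersecting pairs, then one more equation for the opposite-corner variable --- is identical. Part (1) is where you genuinely diverge. The paper stays inside the presentation: it exhibits an explicit $\G_2$-picture (van Kampen diagram) deriving $x_1z_1x_1^{-1}z_1^{-1} = J$ from the defining relations alone, and transports the conclusion to $\set{x_2,z_2}$ via an automorphism of $K_{3,3}$. You instead note $[x_1,z_1]\in\braket{J}=\set{1,J}$ by Lemma \ref{lemma:commutator-subgroup-square} and rule out the trivial value by evaluating in the concrete operator solution, where any two cells in distinct rows and distinct columns anticommute. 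Both are valid here, but they buy different things. Your route is shorter, at the cost of importing the (standard, but unproved in the paper) verification that the nine Pauli operators satisfy all the Magic Square relations; this is not circular, since Proposition \ref{prop:solution-group-square} relies on that same verification for injectivity anyway. The paper's syntactic proof, by contrast, is independent of $d=2$: the identical picture proves $[x_1,z_1]=J$ in the solution group mod $d$ for every $d$, which is precisely how the lemma is invoked in Section \ref{sec:impossibility} to show the mod-$d$ games are not pseudotelepathy. Your argument cannot serve there, since for $d\neq 2$ no operator solution exists (that is the content of the impossibility result) and $\braket{J}$ is no longer a two-element group.
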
 
 \begin{proof}

 	\ref{item:prop-four-edges-1}. If $x_1$ and $z_1$ are any pair of edges not sharing a vertex, then the group picture of Figure \ref{fig:k33-generators} establishes the twisted commutation relation. If $x_2$ and $z_2$ are any other pair of edges which do not share a vertex, then there is an automorphism of the graph $K_{3,3}$ sending $x_1\mapsto x_2$ and $z_1\mapsto z_2$. Therefore, we can draw the same group picture with a different labeling to prove that $x_2$ and $z_2$ share the same twisted commutation relation.
 	
 	\ref{item:prop-four-edges-2}. See Figure \ref{fig:k33-generators}. Suppose some vertex has only one black edge. Then the group element labeling the black edge is equal to some product of $J$ and the group elements labeling the blue edges at that vertex. So the group generated by the blue edges and $J$ contains the black edge. By the sequence of pictures in Figure \ref{fig:k33-generators}, we see that the four blue edges, together with $J$, generate all nine of the edges. Therefore, they generate all of $\G_2$.
 \end{proof}
From here on, we fix the identification $x_1 = e_7, x_2 = e_9, z_1 = e_3, z_2 = e_1$ (c.f.\ Figure \ref{eq:mermin-peres-magic-square}).

\begin{proof}[Proof of Proposition \ref{prop:solution-group-square}]
	We have the same set of generators for both groups. This gives a surjective function $\mathcal{P}_2^{\otimes 2}\to \G_2$. We've seen that the generators of $\G_2$ satisfy the relations defining $\mathcal{P}_2^{\otimes 2}$; this implies that the function is a group homomorphism. All that remains to check is that the map is injective, i.e.\ has trivial kernel. This holds if the relations of $\G_2$ hold for the preimages of the $e_i$ in $\mathcal{P}_2^{\otimes 2}$. This follows from the fact that the square of operators \eqref{eq:mermin-peres-magic-square} is a Mermin--Peres magic square in the usual sense, i.e.\ operators in the same row or column commute, the products across each row and down the first two columns are $I$, and the product down the last column is $-I$. *Notice that this step fails for the Magic Square game mod $d \neq 2$.*
	
\end{proof}

\begin{figure}[!b]
	\begin{tabular}{m{0.3\textwidth}m{0.05\textwidth}m{0.3\textwidth}m{0.05\textwidth}m{0.3\textwidth}}
	\resizebox{0.3\textwidth}{!}{\includegraphics{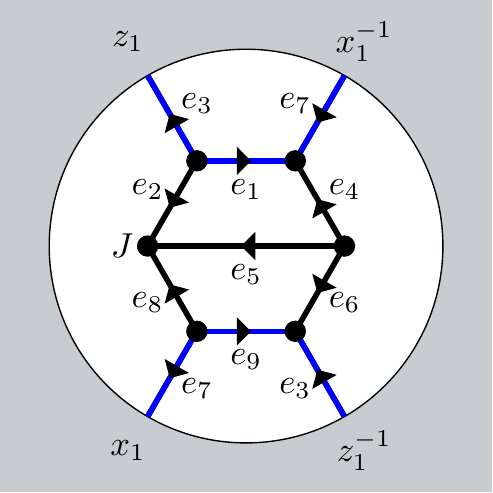}}
        &$\longrightarrow$
	&\resizebox{0.3\textwidth}{!}{\includegraphics{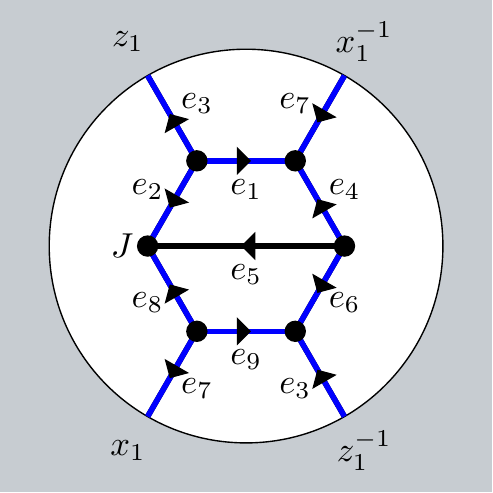}}
        &$\longrightarrow$
	&\resizebox{0.3\textwidth}{!}{\includegraphics{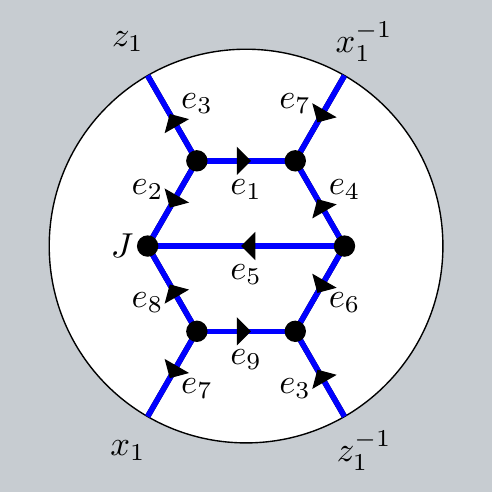}}
	\end{tabular}

    \caption{The group picture proves that $x_1z_1x_1\1z_1\1 = J$ in the solution group for the magic square with the identification $x_1 = e_7, x_2 = e_9, z_1 = e_3, z_2 = e_1$. (Compare Figure \ref{eq:mermin-peres-magic-square}.) The blue-colored edges illustrate that $\set{x_1, z_1, x_2, z_2, J}$ generates the solution group for the magic square.}
    \label{fig:k33-generators}
\end{figure}

\begin{lemma}\label{lemma:small-group-pictures-square}
	Suppose $\m P$ is a $\mathcal{P}_2^{\otimes 2}$-picture in which each generator and relation appears at most $m$ times. Then there is a $\G_2$-picture $\m P'$ witnessing the same equation in which each generator and relation appears at most $3m$ times.
\end{lemma}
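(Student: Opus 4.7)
The strategy is a substitution argument. Since $\G_2 \cong \mathcal{P}_2^{\otimes 2}$ by Proposition \ref{prop:solution-group-square}, every defining relation of $\mathcal{P}_2^{\otimes 2}$ is a valid equation in $\G_2$, and hence is witnessed by some $\G_2$-picture. The plan is to build $\m P'$ by relabeling the edges of $\m P$ via the identification $x_1 = e_7$, $x_2 = e_9$, $z_1 = e_3$, $z_2 = e_1$, and then expanding each interior vertex of $\m P$ (labeled with a Pauli relation) into a local $\G_2$-picture that witnesses the same relation. The interiors of these local pictures are pasted into the planar embedding at the site of the original vertex, and the (relabeled) edges of $\m P$ attach to the boundary of each local picture in the obvious way.

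The key subclaim is that there is a fixed constant $c$ (which we will verify satisfies $c \leq 2$) such that every defining relation of $\mathcal{P}_2^{\otimes 2}$ is witnessed by a $\G_2$-picture in which each $\G_2$-generator and each $\G_2$-relation appears at most $c$ times. Since there are only finitely many Pauli relations, this is a finite check. For the twisted commutator $J\1[x_i,z_i]$ with $x_i, z_i$ non-intersecting in $K_{3,3}$, we use the explicit picture of Figure \ref{fig:k33-generators} (drawn in the proof of Lemma \ref{prop:four-edges}). For a commutator $[x_i,x_j]$, $[z_i,z_j]$, or $[x_i,z_j]$ with $i\neq j$, the two generators $s, s'$ involved are either directly incident on a common vertex of $K_{3,3}$ (so $[s,s']$ is already a defining commutation relation of $\G_2$, giving a one-vertex picture) or can be connected by a short chain of two row/column relations; in either case one can check by direct inspection that one obtains a $\G_2$-picture of bounded size with the required constant bound.

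Given the subclaim, I would count as follows. Each edge in $\m P$ labeled $x_i$ or $z_j$ becomes a single edge labeled $e_k$ in $\m P'$, contributing at most $m$ to the count of that particular $\G_2$-generator. Additional contributions to generators $e_2, e_4, e_5, e_6, e_8$ (which do not appear in the relabeling) and extra contributions to the $e_k \in \{e_1,e_3,e_7,e_9\}$ come only from the substituted interiors, each contributing at most $c$ per substitution and with at most $m$ substitutions. Thus each $\G_2$-generator appears at most $m + cm \leq 3m$ times when $c \leq 2$, and each $\G_2$-relation appears at most $cm \leq 3m$ times.

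The main obstacle is verifying the constant bound in the subclaim: one must exhibit, for every one of the finitely many Pauli commutator relations, a concrete $\G_2$-picture and check that no $\G_2$-generator or $\G_2$-relation occurs more than twice in it. This is a finite case analysis, made tractable by the symmetries of $K_{3,3}$ (which act transitively on non-incident edge pairs and on incident edge pairs), so it reduces to drawing essentially two pictures: one for non-intersecting pairs (as in Figure \ref{fig:k33-generators}) and one for intersecting pairs (which is trivial). The rest of the argument is a routine counting bookkeeping using planarity of the substitution.
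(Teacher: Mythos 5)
Your approach is essentially the paper's: relabel the edges of $\m P$ via the identification $x_1 = e_7, x_2 = e_9, z_1 = e_3, z_2 = e_1$, observe that the commutators $[x_i,x_j]$, $[z_i,z_j]$, $[x_i,z_j]$ for $i\neq j$ correspond to intersecting edge pairs of $K_{3,3}$ and hence are already defining relations of $\G_2$ (so no chain of row/column relations is ever needed, and those vertices require no substitution at all), and replace only the twisted commutation relations $J\1[x_i,z_i]$ by the picture of Figure \ref{fig:k33-generators}. One bookkeeping correction: there can be up to $2m$ nontrivial substitutions, not $m$, since each of the two distinct twisted commutation relations $J\1[x_1,z_1]$ and $J\1[x_2,z_2]$ may occur $m$ times; the bound $3m$ then requires that the substituted picture use each generator and each relation at most \emph{once} (which Figure \ref{fig:k33-generators} does), rather than your allowance of $c\leq 2$ --- with $2m$ substitutions and $c=2$ your count would give $5m$. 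With that fix the argument goes through exactly as in the paper.
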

This allows us to control the size of group pictures for any relation in $\G_2$ which uses only the letters $x_1,x_2,z_1,z_2,J$. For relations using the other generators, we'll use Lemma \ref{lemma:G2-m0-equals-1}
\begin{proof}
	The generators labeling $\m P$ can be reinterpreted as generators of $\G_2$. $\m P$ has at most $2m$ twisted commutation relations, and the rest of the relations are already relations of $\G_2$. Form $\m P'$ by replacing each twisted commutation relation with a $\G_2$-group picture of the form of Figure \ref{fig:k33-generators}. Each subpicture replacement adds at most one use of each generator and relation.
\end{proof}

\begin{figure}
	\caption{
	The left-hand picture proves that $e_6(x_2z_1) = 1$. This is equivalent to proving $e_6 = \can(e_6) = z_1\1x_2\1$.
	}
	\label{fig:G2-m0-equals-1}
	\begin{center}
	\begin{tabular}{m{0.3\textwidth}m{0.3\textwidth}m{0.3\textwidth}}	
		\resizebox{0.3\textwidth}{!}{\includegraphics{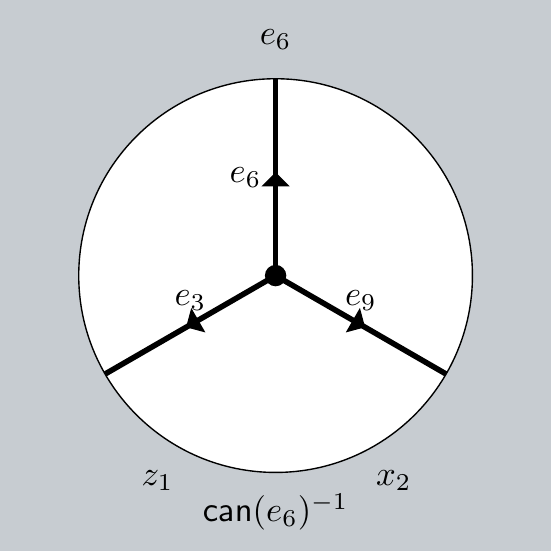}}
		&
		&\resizebox{0.3\textwidth}{!}{\includegraphics{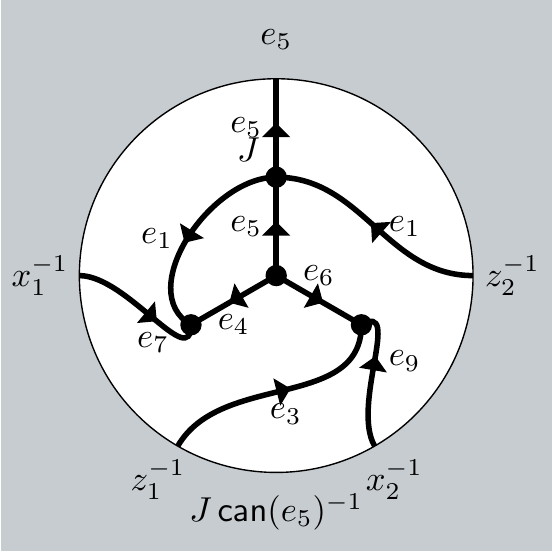}}
	\end{tabular}
	\end{center}
	\caption{
	The right-hand picture proves that $e_5(x_1z_1x_2z_2)\1 = J$. Multiplying both sides by $J\1$, we see that this is equivalent to proving $e_5 = \can(e_5) = Jx_1z_1x_2z_2$. The picture has been drawn with a twisted commutation relation between $e_1$ and $e_5$. To get a valid $\G_2$-picture, this relation must be expanded to a subpicture of the form of Figure \ref{fig:k33-generators}, just as in the proof of Lemma \ref{lemma:small-group-pictures-square}. 
	The new picture thus formed uses each generator and relation at most $3$ times.
	}
	\label{fig:G2-m0-equals-1-b}
\end{figure}

Let $\can$ be the canonical form from \ref{lemma:canonical-form} composed with the isomorphism $\G_2 \cong \mathcal{P}_2^{\otimes 2}$.
\begin{lemma}\label{lemma:G2-m0-equals-1}
	For each generator $e\in E$, the equation $\can(e)e\1 = 1$ has a group picture in which each generator and relation appear at most $3$ times. 
\end{lemma}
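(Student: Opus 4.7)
The plan is to treat the nine generators in three groups according to their position in the magic square, using stronger structure than the bound requires for the easy cases, and using the twisted commutation expansion (Lemma \ref{lemma:small-group-pictures-square}) only in one case.

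First, for the four base generators $e_1, e_3, e_7, e_9$, which under the identification $x_1 = e_7, x_2 = e_9, z_1 = e_3, z_2 = e_1$ are precisely the non-$J$ Pauli generators in the canonical form of Lemma \ref{lemma:canonical-form}, we have $\can(e) = e$ as strings, so the equation $\can(e)e^{-1} = 1$ reduces to the empty word and is witnessed by the empty $\G_2$-picture. No generator or relation is used at all.

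Next, for each of the four ``edge'' generators $e_2, e_4, e_6, e_8$, there is exactly one row or column constraint in which $e$ appears together with two of the base generators; these constraints are $e_1 e_2 e_3 = 1$, $e_1^{-1}e_4^{-1}e_7^{-1} = 1$, $e_3^{-1}e_6^{-1}e_9^{-1} = 1$, and $e_7 e_8 e_9 = 1$ respectively. Each gives a direct expression of $e$ as a product of two base generators in $\{x_1, z_1, x_2, z_2\}$, which (after possibly commuting two generators that share that same row or column, a free operation in $\G_2$) coincides with $\can(e)$. The resulting picture has a single interior vertex labeled by that constraint, uses each generator at most once, and uses each relation at most once; this is exactly the pattern of Figure \ref{fig:G2-m0-equals-1}.

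The only nontrivial case is $e = e_5$. Here I would follow Figure \ref{fig:G2-m0-equals-1-b} in two stages. First, produce a ``shortcut'' picture whose interior relations are the row $2$ constraint $e_4 e_5 e_6 = 1$ (linking $e_5$ to $e_4$ and $e_6$), the column $1$ and column $3$ constraints (which further decompose $e_4$ into $x_1, z_2$ and $e_6$ into $z_1, x_2$), and a single twisted commutation relation of the form $[z_2, e_5] = J$ needed to reorder the resulting word into canonical form $J\,x_1 z_1 x_2 z_2$. In this shortcut picture each generator and each relation from this list appears at most once. Second, the twisted commutation is not a primitive relation of $\G_2$, but Lemma \ref{prop:four-edges} together with an automorphism of $K_{3,3}$ gives a $\G_2$-subpicture of the form of Figure \ref{fig:k33-generators} witnessing it. Substituting this subpicture in place of the twisted commutation vertex, exactly as in Lemma \ref{lemma:small-group-pictures-square}, multiplies the counts by at most a factor of $3$, yielding the bound of $3$.

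The main obstacle is the bookkeeping for $e_5$: one has to verify that the shortcut picture really achieves count $1$ per generator and relation, so that after the factor-of-$3$ blow-up from the twisted commutation expansion the final $\G_2$-picture still meets the bound of $3$. In particular one must make sure that only a single twisted commutation is needed to put the derived word in canonical order; an ad hoc manipulation using several anticommutations would inflate the count past $3$.
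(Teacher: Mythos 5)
Your proof is correct and follows essentially the same route as the paper's: the base generators are handled trivially since $\can(e)=e$, the remaining off-diagonal generators each come from a single constraint picture as in Figure \ref{fig:G2-m0-equals-1}, and $e_5$ requires one twisted commutation with $e_1 = z_2$ that is expanded into a $K_{3,3}$-subpicture as in Figure \ref{fig:G2-m0-equals-1-b}, yielding the count of $3$. The only nitpick is that the expansion step is additive (the substituted subpicture adds at most one use of each generator and relation) rather than a multiplicative factor of $3$, but this does not affect the bound.
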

\begin{proof}

	Either $\can(e) = e$ as words already, or there is a picture similar to one of the pictures in Figures \ref{fig:G2-m0-equals-1},\ref{fig:G2-m0-equals-1-b}. (Here by ``similar'' we mean ``identical up to relabeling of edges''.) 
\end{proof}

\begin{proof}[Proof of Theorem \ref{thm:robust-self-testing-square}]
We want to apply Theorem \ref{thm:robust-self-testing-appendix}, so we check each of its conditions. The magic square has at most $3$ variables in each equation, so we can take $l_0 = 3$ in condition \eqref{assumption:bounded-degree-appendix}. By Lemma \ref{lemma:G2-m0-equals-1}, we can take $m_0 = 3$ in condition \eqref{assumption:small-pictures-appendix}. By Lemmas \ref{lemma:small-group-pictures-square} and \ref{lemma:small-group-pictures-pauli-n}, we can take $m = 108\cdot 2^2$  in condition \eqref{assumption:small-pictures-w-appendix}. The final two conditions were shown to hold in Corollary \ref{cor:G2-satisfies-conditions-4-5}. We hence apply Theorem \ref{thm:robust-self-testing-appendix} to get the desired conclusion.
\end{proof}

\subsection{Self-testing the Magic Pentagram}
Recall the definition of the Magic Pentagram game from Example \ref{example:magic-pentagram}. 

\begin{definition}[Ideal strategy for Magic pentagram $\pmod 2$] 
	In Figure \ref{fig:magic-pentagram}, associate each operator in the left-hand pentagram with the corresponding variable in the right-hand pentagram. Set $A_e^{(v)}$ to the operator corresponding to $e$, and denote the latter by $A_e$, so that we have $A_e^{(v)} = A_e$ for all $v$. Then set $B_e = \bar{A_e}$ (where any choice of basis works for the conjugation). 
\begin{figure}[h]
	\begin{tabular}{m{0.4\textwidth}m{0.1\textwidth}m{0.4\textwidth}}
		\resizebox{0.4\textwidth}{!}{
			\includegraphics{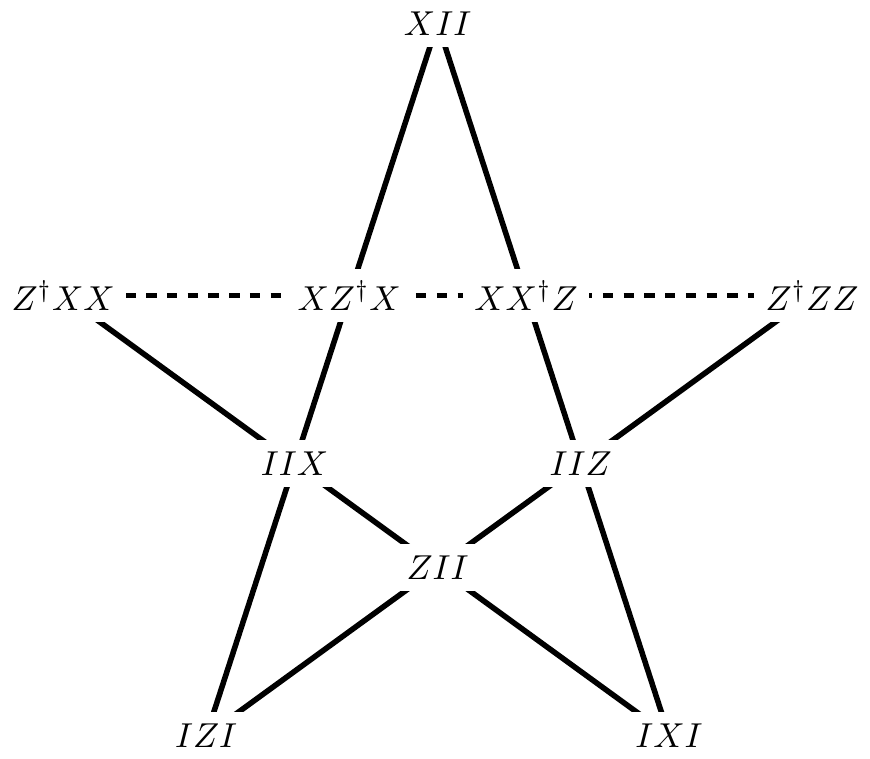}
	   	}
	   	&&
		\resizebox{0.4\textwidth}{!}{
			\includegraphics{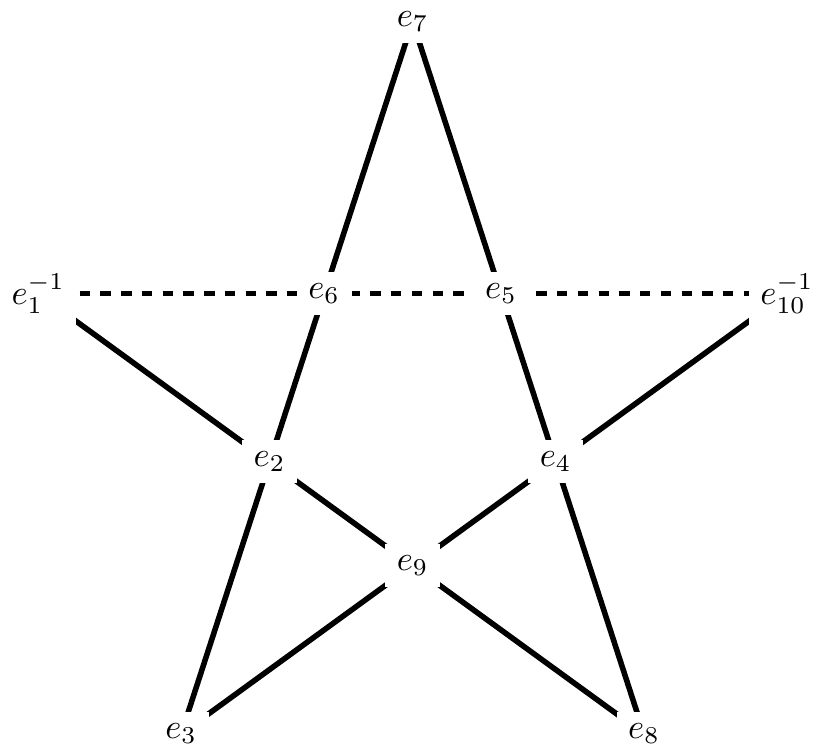}
	   	}
	\end{tabular}
	\caption{The standard operator solution for the Magic Pentagram.}
	\label{fig:magic-pentagram}
\end{figure}

	Set $\ket\psi = \ket{\text{EPR}_2}^{\otimes 3}$. We define $\{A_e^{(v)}\}, \{B_e \}, \ket\psi$ to be the \emph{ideal strategy} for the Magic Pentagram game.
\end{definition}

The robust self-testing theorem for the Magic Pentagram game is the following.
\begin{thm}\label{thm:robust-self-testing-pentagram}
The Magic Pentagram game mod $2$ self-tests the ideal strategy with perfect completeness and $O(\e)$-robustness.
\end{thm}
Again, to prove this, we will make a direct application of Theorem \ref{thm:robust-self-testing}, but we will use the tighter bounds stated in the appendix as Theorem \ref{thm:robust-self-testing-appendix}.

Let $\G_3$ be the solution group for the Magic Pentagram. We give the proof details only where they differ from the Magic Square case.

\begin{prop}\label{prop:solution-group-pentagram}
	$\G_3\cong \mathcal{P}_2^{\otimes 3}$. 
\end{prop}
\begin{cor}\label{cor:G3-satisfies-conditions-4-5}
	$\G_2$ satisfies condition \eqref{assumption:group-test-appendix} of Theorem \ref{thm:robust-self-testing-appendix} with $\tau = \tau_1^{(3)}$ from Definition \ref{def:representations-pauli-group}.
\end{cor}

\begin{lemma}\label{lemma:commutator-subgroup-pentagram}
	The commutator subgroup $[\G_3,\G_3]$ is $\Braket J$, 
	the cyclic subgroup generated by $J$.
\end{lemma}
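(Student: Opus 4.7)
The proof would mirror exactly that of Lemma \ref{lemma:commutator-subgroup-square}. I would first observe that $J$ is central in $\G_3$ by construction (the relations $[s,J]$ are built into every presentation over $\Z_2$), so $\braket J$ is a central subgroup of $\G_3$.

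The key step is to verify that for each pair of generators $e,e'$ of $\G_3$ the commutator $[e,e']$ lies in $\braket J$. If $e$ and $e'$ appear together on some line of the pentagram, they commute exactly by the relations in $R_c$, so $[e,e'] = 1$. If they do not share a line, I would construct a $\G_3$-picture analogous to Figure \ref{fig:k33-generators} witnessing $[e,e'] = J$; the picture would use the four lines of the pentagram incident to $e$ or $e'$ (but not both), arranged in the disk so that the interior-vertex labels multiply to $J$ (from the unique twisted constraint) while the boundary word reads $[e,e']$. This is the content of the pentagram analog, Lemma \ref{prop:six-edges}, referenced in Section \ref{sec:impossibility}. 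The automorphism group of $K_5$ acts transitively on non-adjacent edge pairs (modulo relabeling which constraint is the twisted one), so a single picture together with a symmetry argument handles all such pairs.

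Given these per-generator identities, a simple induction on the combined length of words $w_1,w_2$ in the generators shows that $w_1 w_2 = J^a w_2 w_1$ for some $a \in \Z_2$, with centrality of $J$ used to collect the accumulated $J^{\pm 1}$ factors. This gives the inclusion $[\G_3,\G_3] \seq \braket J$. The reverse inclusion $\braket J \seq [\G_3,\G_3]$ is then immediate: any non-adjacent pair yields $J = [e,e']$ explicitly as a commutator, so $J \in [\G_3,\G_3]$.

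The main obstacle is drawing the specific $\G_3$-picture witnessing $[e,e'] = J$ for non-adjacent generators. Compared to the Magic Square case, the pentagram constraints take the alternating-product form $ab^{-1}cd^{-1}$ rather than a plain product, so one must track edge orientations carefully when building the picture; moreover, the unique twisted constraint (carrying a $J$-label at an interior vertex) must be woven into the picture in a way that is compatible with the chosen non-adjacent pair. Once the picture is laid out, reading off the boundary word and the product of interior-vertex labels via Lemma \ref{lemma:van-kampen} delivers the desired twisted commutation relation.
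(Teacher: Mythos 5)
Your proposal is correct and follows essentially the same route as the paper: the paper omits the proof of this lemma precisely because it is the Magic Square argument verbatim (reduce to commutators of generator pairs, which are $1$ for pairs sharing a line by $R_c$ and $J$ for non-adjacent pairs by the group picture of Lemma \ref{prop:six-edges}, then induct on word length and note the reverse inclusion). Your attention to the alternating-product form of the pentagram constraints and to the placement of the twisted vertex matches the care the paper takes in Figure \ref{fig:k5-generators} and the automorphism argument of Lemma \ref{prop:six-edges}.
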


\begin{lemma}\label{prop:six-edges}
 	Let $x_1,x_2,x_3,z_1,z_2,z_3$ be any generators of $\G_3$ such that in the linear constraint graph, the edge pairs $\set{x_i,x_j},\set{z_i,z_j},\set{x_i,z_j},i\neq j$ are \emph{intersecting} (see Lemma \ref{prop:four-edges}), while the edge pairs $\set{x_i,z_i}$ are not. 
 	Then
 	\begin{enumerate}
 		\item\label{item:prop-six-edges-1} $[x_i,z_i] = J $, and
 	 	\item\label{item:prop-six-edges-2} $\set{x_i,z_i, J\;i\leq 3}$ generates $\G_3$.
 	\end{enumerate}
 \end{lemma}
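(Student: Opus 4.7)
The plan is to follow the same template as Lemma \ref{prop:four-edges}, substituting the combinatorics of $K_5$ for that of $K_{3,3}$.

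For part \ref{item:prop-six-edges-1}, I would first note that the automorphism group of $K_5$ acts transitively on ordered pairs of vertex-disjoint edges; hence it suffices to construct one $\G_3$-picture witnessing $[x_1, z_1] = J$ for a single specific disjoint pair, and then recover the rest by relabeling under the induced symmetry of the pentagram. The picture will combine the five pentagram constraint relations with the commutation relations between variables sharing a constraint. Since exactly one of those five constraints (the dashed line of Figure \ref{fig:magic-pentagram-equations}) has right-hand side $1$, a picture whose interior uses this constraint once and the other relations a $J$-even number of times produces a boundary word equal to $J$ by the van Kampen Lemma \ref{lemma:van-kampen}. The combinatorial task is then to arrange the cells so that every variable other than $x_1$ and $z_1$ cancels.

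For part \ref{item:prop-six-edges-2}, I would choose the six generators so that their underlying $K_5$-edges form a $K_4$ subgraph on four of the five vertices. Labeling the $K_5$-vertices $\set{1,2,3,4,5}$, one valid assignment is given by the three perfect matchings of the $K_4$ on $\set{1,2,3,4}$, namely $\set{12,34}$, $\set{13,24}$, and $\set{14,23}$; a direct enumeration of the fifteen pairs confirms the intersection-pattern hypothesis. The four $K_5$-edges missing from this selection are exactly the four edges incident to vertex $5$. Each such edge $i5$ can be solved for from the constraint relation at vertex $i$ of $K_5$ for $i \in \set{1,2,3,4}$, since that constraint relates $i5$ to the three chosen $K_4$-edges incident to vertex $i$, up to a power of $J$. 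Since the ten edge-generators span $\G_3$, so does $\set{x_1, z_1, x_2, z_2, x_3, z_3, J}$.

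The main obstacle is the explicit construction in part \ref{item:prop-six-edges-1}. Because the pentagram constraints are alternating (each variable appears in its two constraints with opposite signs), the orientations of internal edges in the picture must be handled carefully so that every variable other than $x_1, z_1$ either remains interior or is canceled along the boundary; this is slightly more delicate than in the Magic Square case, where the constraints are unsigned. I expect the correct picture arises from a planar drawing of the constraint incidence, but verifying correctness requires the same sort of explicit check as was performed for the Magic Square in Figure \ref{fig:k33-generators}.
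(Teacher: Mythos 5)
Your part \ref{item:prop-six-edges-2} is essentially correct and in substance matches the paper's argument: once the six chosen edges form the edge set of a $K_4$ inside $K_5$, the four remaining variables are exactly the edges incident to the fifth vertex, and each of them is expressed, up to a power of $J$, from the constraint at its other endpoint; this is the same ``solve for the one remaining edge at a constraint'' step that the paper reads off from the colored edges in Figure \ref{fig:k5-generators}. (It would be worth adding the one-line observation that the intersection hypothesis actually \emph{forces} the six edges to be the three perfect matchings of a $K_4$, so your specific choice is without loss of generality rather than a restriction of the lemma.)

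The genuine gap is in part \ref{item:prop-six-edges-1}: you never produce the group picture (equivalently, an explicit derivation from the pentagram relations) establishing $[x_1,z_1]=J$ for even one disjoint pair, and you say so yourself (``the main obstacle is the explicit construction''). That construction is the entire content of the paper's proof of this item: Figure \ref{fig:k5-generators} exhibits a concrete $\G_3$-picture with boundary word $x_1z_1x_1\1z_1\1$ (for $x_1=e_7$, $z_1=e_9$) whose interior combines the five constraint relations with local commutations; without such a picture nothing rules out $[e_7,e_9]=1$. Relatedly, your transport-by-automorphism step needs more care than ``use the dashed constraint once'': the labels $l(v)$ are not symmetric (only one constraint has right-hand side $1$), so relabeling a picture by a $K_5$-automorphism $\pi$ changes the total power of $J$ contributed by its interior vertices to $\sum_v m_v\,l(\pi(v))$, where $m_v$ is the multiplicity with which constraint $v$ is used. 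Your stated condition only guarantees the base pair; to get $J$ (and not $1$) for every disjoint pair you need the multiplicities to be insensitive to $\pi$, e.g.\ a picture using each of the five constraints exactly once, so that the total is $\sum_v l(v)=1$ for every relabeling---which is precisely a property of the explicit picture you have not constructed. So the proposal identifies the right strategy but leaves its crucial step, the explicit picture with the right constraint-usage pattern, unverified.
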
 
 \begin{proof}

 	\ref{item:prop-six-edges-1}.
 	If $x_1$ and $z_1$ are any pair of edges not sharing a vertex, then the group picture of Figure \ref{fig:k5-generators} establishes the twisted commutation relation. If $x_i$ and $z_i$ are any other pair of edges which do not share a vertex, then there is an automorphism of the graph $K_{5}$ sending $x_1\mapsto x_i$ and $z_1\mapsto z_i$. Therefore, we can draw the same group picture with a different labeling to prove that $x_i$ and $z_i$ share the same twisted commutation relation.
 	
 	\ref{item:prop-six-edges-2}.
 	See Figure \ref{fig:k5-generators}, which is interpreted the same way as Figure \ref{fig:k33-generators} from the Magic Square case.
 \end{proof}
We fix the identification $x_1 = e_7, z_1 = e_9, x_2 = e_8, z_2 = e_3, x_3 = e_2, z_3 = e_4$ (c.f.\ Figure \ref{fig:magic-pentagram}.)

\begin{figure}
    \caption{
    The leftmost group picture proves that $x_1z_1x_1\1z_1\1 = J$ in $\G_3$, with $x_1 = e_7, z_1 = e_9$. Identifying further $x_2 = e_8, z_2 = e_3, x_3 = e_2, z_3 = e_4$ and following the color of the edges shows that $\set{x_i,z_i, J\;i\leq 3}$ generates $\G_3$.
    }
    \label{fig:k5-generators}
	\begin{center}
	\begin{tabular}{
	m{0.3\textwidth}
	m{0.03\textwidth}
	m{0.3\textwidth}
	m{0.3\textwidth}
	}
		\resizebox{0.3\textwidth}{!}{
			\includegraphics{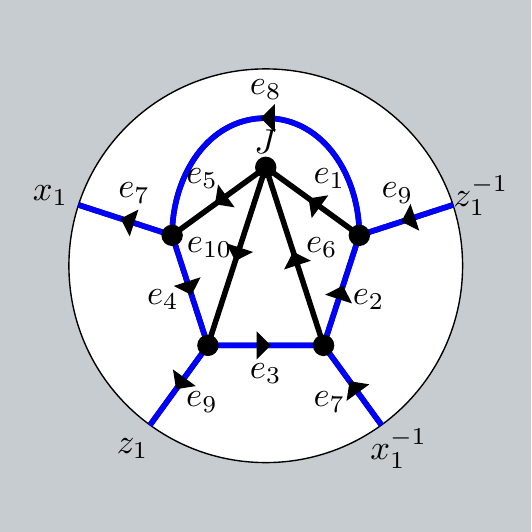}
	    	}
	     &$\longrightarrow$
	 	 &\resizebox{0.3\textwidth}{!}{
			\includegraphics{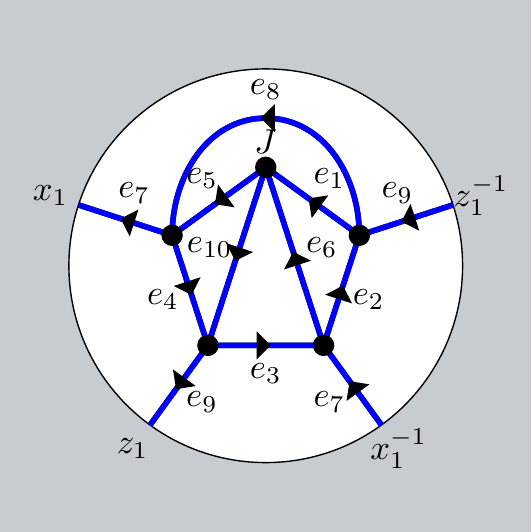}
	        }
		& \resizebox{0.3\textwidth}{!}{
			\includegraphics{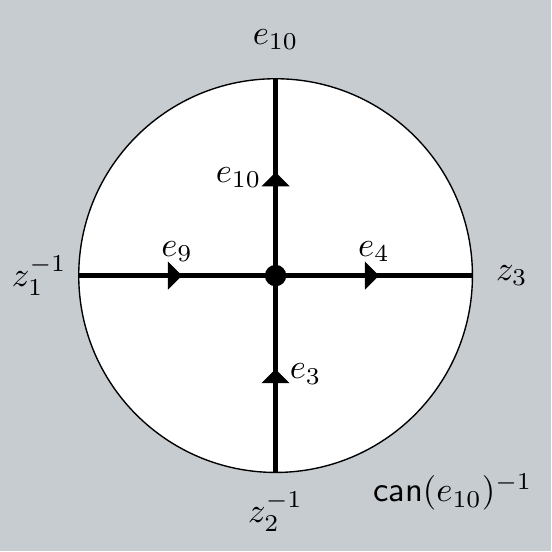}
	        }
	\end{tabular}
	\end{center}
	\caption{
		The rightmost figure is a $\G_3$-picture showing $\can(e_{10}) = z_1z_2z_3\1$.
	}
	\label{fig:G3-m0-equals-1}
\end{figure}

\begin{proof}[Proof of Proposition \ref{prop:solution-group-pentagram}]
	As in the Magic Square case, all that remains to check is that the generators of $\mathcal{P}_2^{\otimes 3}$ satisfy the relations of $\G_3$. This amounts to checking that the pentagram of operators in Figure \ref{fig:magic-pentagram} is a $2$-dimensional Mermin Magic Pentagram in the usual sense, i.e.\ operators on the same line commute, the alternating products across the four solid lines are each $I$, and the alternating product across the dashed line is $-I$.
\end{proof}

\begin{lemma}\label{lemma:small-group-pictures-pentagram}
	Suppose $\m P$ is a $\mathcal{P}_2^{\otimes 3}$-picture in which each generator and each relation appears at most $m$ times. Then there is a $\G_3$-picture $\m P'$ witnessing the same equation in which each generator and each relation appears at most $4m$ times.
\end{lemma}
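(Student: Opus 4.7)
The proof will follow the same template as Lemma \ref{lemma:small-group-pictures-square}, but with the pentagram's three twisted-commutation pairs instead of the square's two. The plan is: reinterpret $\m P$ as a $\G_3$-picture via the identification $x_1 = e_7, z_1 = e_9, x_2 = e_8, z_2 = e_3, x_3 = e_2, z_3 = e_4$ fixed after Lemma \ref{prop:six-edges}, identify which relations of $\m P$ are already valid relations of $\G_3$, and then expand each remaining relation into a $\G_3$-subpicture while tracking how this expansion inflates the per-label multiplicities.

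First I would observe that the defining relations of $\mathcal{P}_2^{\otimes 3}$ (Definition \ref{definition:n-qudit-pauli-group}) split into two kinds: the plain commutators $[x_i,x_j], [z_i,z_j], [x_i,z_j]$ for $i\neq j$, and the three twisted commutators $J\1[x_i,z_i]$ for $i=1,2,3$. Under our identification, each generator pair of the first kind lies on a common line of the pentagram, so these commutation relations are already witnessed by single vertices in $\G_3$. Hence every occurrence of such a relation in $\m P$ can be transplanted into $\m P'$ unchanged, contributing at most $m$ uses of each generator and of each such relation.

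The only thing left to handle is the twisted commutation relations; there are three pairs, so $\m P$ contains at most $3m$ such relation-vertices in total. For each, I would splice in a copy of the $\G_3$-subpicture of Figure \ref{fig:k5-generators} (or its image under the appropriate graph automorphism of $K_5$, which by Lemma \ref{prop:six-edges} part \ref{item:prop-six-edges-1} witnesses $[x_i,z_i] = J$ for the corresponding $i$). Just as in the Magic Square case, one checks from the figure that each such subpicture uses each generator of $\G_3$ at most once and each relation of $\G_3$ at most once. Thus the $3m$ replacements together contribute at most $3m$ additional uses of each generator and of each relation of $\G_3$.

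Adding the two contributions yields a $\G_3$-picture $\m P'$ witnessing the same boundary word as $\m P$ (the splice preserves the outer cycle), with each generator and each relation appearing at most $m + 3m = 4m$ times. There is no real obstacle here beyond bookkeeping: the essential content is the existence of small $\G_3$-subpictures witnessing the three twisted commutations, which is exactly what Lemma \ref{prop:six-edges}\eqref{item:prop-six-edges-1} and Figure \ref{fig:k5-generators} provide. The factor $4$ (vs.\ the $3$ of Lemma \ref{lemma:small-group-pictures-square}) is precisely accounted for by the fact that the Magic Pentagram has three twisted-commutation pairs rather than two.
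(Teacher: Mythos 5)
Your proof is correct and follows the same route as the paper's: reinterpret the $\mathcal{P}_2^{\otimes 3}$-picture as a $\G_3$-picture, note that the untwisted commutators are already relations of $\G_3$, and replace each of the at most $3m$ twisted-commutation vertices with a copy of the subpicture of Figure \ref{fig:k5-generators}, each adding at most one use of every generator and relation. The bookkeeping $m + 3m = 4m$ matches the paper exactly.
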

\begin{proof}
	We use the same proof as in the Magic Square case, but now there are up to $3m$ twisted commutation relations in $\m P$.
\end{proof}

Let $\can$ be the canonical form from Lemma \ref{lemma:canonical-form} composed with the isomorphism $\G_3 \cong \mathcal{P}_2^{\otimes 3}$.
\begin{lemma}\label{lemma:G3-m0-equals-1}
	For each generator $e\in E$, the equation $\can(e)e\1 = 1$ has a group picture in which each generator and relation appear at most once. 
\end{lemma}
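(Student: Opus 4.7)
The proof will proceed by explicit case analysis on the ten edges of the pentagram, treating the six generators chosen as the canonical basis separately from the other four.

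First I would dispatch the six basis generators. Under the identification fixed immediately after Lemma \ref{prop:six-edges}, the elements $e_7,e_9,e_8,e_3,e_2,e_4$ are literally the generators $x_1,z_1,x_2,z_2,x_3,z_3$ appearing as single letters in Lemma \ref{lemma:canonical-form}. For these, $\can(e) = e$ as words in $\m F(S)$, so the equation $\can(e)e\1 = 1$ is witnessed by the empty picture, which uses every generator and every relation zero times.

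For the four remaining generators $e_1,e_5,e_6,e_{10}$, the plan is to use a single relation-vertex from one of the lines of the pentagram on which the generator lies. The model case is the rightmost picture in Figure \ref{fig:G3-m0-equals-1}: $e_{10}$ sits on line $3$ together with $e_3,e_4,e_9$, i.e.\ with $z_2,z_3,z_1$, and the relation $r_{v_3} = e_3 e_4\1 e_9 e_{10}\1$ corresponds to a single interior vertex with four incident edges. Because we have complete freedom in how to draw the cyclic ordering of these four edges around the vertex in a planar embedding, we choose the embedding so that the boundary word, read clockwise starting just past $e_{10}$, spells $z_1 z_2 z_3\1 e_{10}\1$; this is exactly $\can(e_{10})e_{10}\1 = 1$, and uses one relation-vertex once together with four distinct generator-edges, each once. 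The analogous single-vertex constructions work for $e_1$ using line $1$ (which yields $\can(e_1) = x_3 x_2 z_1$ up to rearrangement within commuting generators), for $e_5$ using line $4$, and for $e_6$ using line $2$; each of these lines carries no label, consistent with the absence of a $J$-prefix in the canonical form of those four elements.

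The one subtlety worth checking is that the cyclic order forced by planar embedding does not conflict with the canonical order of Lemma \ref{lemma:canonical-form}. This is automatic: all four generators meeting at a common line-vertex pairwise commute via commutation relations already witnessed at that same vertex (no extra commutation vertices required), so the boundary word is determined only up to reordering of pairwise commuting letters, which is precisely the freedom we need to write it in the canonical form $J^{a_1}\prod_i x_i^{a_{2i}} z_i^{a_{2i+1}}$. Since the chosen single-vertex picture uses its unique relation once and each of the four incident generators exactly once, the bound in the lemma follows. The hardest part of writing this carefully is keeping track of the orientation conventions on the pentagram edges (which signs in the equations of Example \ref{example:magic-pentagram} translate to which incoming/outgoing edge at the vertex), but mod $2$ all these signs collapse, leaving a routine verification.
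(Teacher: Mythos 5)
Your proposal is correct and takes essentially the same approach as the paper, which likewise splits into the six generators with $\can(e)=e$ as words (empty picture) and the four remaining generators $e_1,e_5,e_6,e_{10}$, each witnessed by a single constraint vertex as in Figure \ref{fig:G3-m0-equals-1}. The only imprecision is your appeal to ``commutation relations already witnessed at that same vertex'' to reorder the boundary word --- an interior vertex witnesses exactly one relation, and only up to cyclic permutation --- but since the paper leaves the order of the product defining $R_{eq}$ unspecified (its letters pairwise commute), one simply chooses that order to agree with the canonical form, and the single-vertex picture then has exactly the required boundary word.
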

\begin{proof}

	Either $\can(e) = e$ as words already, or there is a picture similar to the picture in Figure \ref{fig:G3-m0-equals-1}.
\end{proof}

\begin{proof}[Proof of Theorem \ref{thm:robust-self-testing-pentagram}]
Again we wish to apply Theorem \ref{thm:robust-self-testing-appendix}, so we check each of its conditions. The Magic Pentagram constraint system has at most $4$ variables in each equation, so we can take $l_0 = 4$. Let $\can$ be the canonical form for $\mathcal{P}_2^{\otimes 3}$ from Lemma \ref{lemma:canonical-form} composed with the group isomorphism $\mathcal{P}_2^{\otimes 3} \cong \G_3$. By Lemma \ref{lemma:G3-m0-equals-1}, we can take $m_0 = 1$. By Lemmas \ref{lemma:small-group-pictures-square} and \ref{lemma:small-group-pictures-pauli-n}, we can take $m = 162\cdot 2^2$. That the final two conditions are satisfied is Corollary \ref{cor:G3-satisfies-conditions-4-5}. Applying Theorem \ref{thm:robust-self-testing-appendix} gives the desired statement.
\end{proof}

\subsection{Self-testing $n$ pairs of maximally entangled qubits and $n$-qubit Paulis}

Self-testing in parallel has gained recent interest both as a potential tool in cryptographic protocols, and as a simple way to witness high-dimensionality of a quantum system. Various recent results have shown self-testing of $n$ maximally entangled pairs of qubits, and associated $n$-qubit Pauli measurements, in particular using copies of the Magic Square game \cite{coladan17, CN16}. In this section, we show how our general self-testing result of Theorem \ref{thm:robust-self-testing} allows to produce a similar result.

We will introduce first a notion of product of LCS games. Informally, we form the product of LCS games by adding new equations to enforce commutativity between each variable in one game with each variable in the others. 
The main motivation for this definition is that we can express the solution group of the product as an appropriate product of the solution groups. First, we introduce our notion of product for groups over $\Z_d$ (we will only make use of it mod $2$ though).

\begin{definition}
	Let $G_i = \braket{S_i:R_i}_{\Z_d}$ be a family of groups presented over $\Z_d$. Define their \emph{product over $\Z_d$} as
	\begin{align}
		\prod^{\Z_d}_i G_i :=
		\Braket{
		\bigsqcup_i S_i:
		\bigsqcup_i R_i \sqcup R_\text{prod}
		}_{\Z_d},
		&&
		R_\text{prod} = \set{[s,s']\;s\in S_i, s'\in S_j, i\neq j}.
	\end{align}
	Here the symbol $\sqcup$ denotes the disjoint union.
\end{definition}
One can check that this product is the categorical product in the category which has objects the groups presented over $\Z_d$ and has maps the group homomorphisms that send $J\mapsto J$. Therefore it obeys the usual properties one expects from a product. In particular it has an equivalent definition as a repeated application of an associative binary product $\overset{\Z_d}\times$. The following is easy to check from Definition \ref{definition:n-qudit-pauli-group}.

\begin{lemma}
	The $n$-qudit Pauli group is the $n$-fold product over $\Z_d$ of the $1$-qudit Pauli group, i.e.\
	\begin{equation}
		\m P_d^{\otimes n} = \prod^{\Z_d}_{i\in[n]}\m P_d^{\otimes 1}.
	\end{equation}
	 As a corollary, 
	 $\m P_d^{\otimes n_1} \overset{\Z_d}\times \m P_d^{\otimes n_2} = \m P_d^{\otimes(n_1+n_2)}$.
\end{lemma}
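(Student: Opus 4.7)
The plan is to observe that the two presentations in question are literally identical once unpacked, so the proof reduces to a syntactic check with no group-theoretic content. First I would recall the presentation of $\m P_d^{\otimes 1} = \braket{x,z:J[x,z]}_{\Z_d}$ from Example \ref{example:P_d}, and note that the $n$-fold product over $\Z_d$ has generators $\bigsqcup_{i\in[n]}\{x_i,z_i\}$, relations $\bigsqcup_{i\in[n]}\{J[x_i,z_i]\}$ coming from the factors, and additional relations $R_{\text{prod}} = \{[s,s']\,:\,s\in S_i,\,s'\in S_j,\,i\neq j\}$, alongside the suppressed standard $\Z_d$-relations $s^d$, $[s,J]$ for every generator $s$.

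Next I would match this, term by term, against Definition \ref{definition:n-qudit-pauli-group}. The generating set $\{x_i,z_i\,:\,i\leq n\}$ is identical. For the relations: the twisted commutators $J^{-1}[x_i,z_i]$ within each factor are exactly the intra-qudit relations $J^{-1}[x_i,z_i]$; and $R_{\text{prod}}$, when expanded for pairs of generators drawn from different factors $i\neq j$, gives exactly $[x_i,x_j]$, $[z_i,z_j]$, $[x_i,z_j]$, and $[z_i,x_j]$ (the last being the inverse of $[x_j,z_i]$, hence a consequence of it, or equivalently a member of the same normally-generated set). The standard relations $s^d$ and $[s,J]$ enforced by the subscript $\Z_d$ are common to both presentations. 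Thus both presentations have the same generators and the same set of relators, so they present the same group, and indeed are \emph{equal} as presentations in the sense of Definition \ref{definition:canonical-form}'s preceding remark on equality of group presentations.

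For the corollary, I would invoke the categorical-product remark made in the text: since $\prod^{\Z_d}$ is characterized as the categorical product in the category of groups over $\Z_d$ (with morphisms fixing $J$), it is associative up to unique isomorphism, and in fact the explicit description above shows that the binary product $\overset{\Z_d}{\times}$ concatenates generator sets and relator sets. Applying this with the first equality of the lemma yields
\begin{equation}
\m P_d^{\otimes n_1}\overset{\Z_d}{\times}\m P_d^{\otimes n_2}
=\Bigl(\prod^{\Z_d}_{i\in[n_1]}\m P_d^{\otimes 1}\Bigr)\overset{\Z_d}{\times}\Bigl(\prod^{\Z_d}_{j\in[n_2]}\m P_d^{\otimes 1}\Bigr)
=\prod^{\Z_d}_{k\in[n_1+n_2]}\m P_d^{\otimes 1}
=\m P_d^{\otimes(n_1+n_2)}.
\end{equation}

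There is no genuine obstacle here; the only thing worth being careful about is not double-counting the suppressed $\Z_d$-relations and making sure that the commutation relations $[s,J]$ between a generator $s$ of one factor and the $J$ of another factor are identified with the $[s,J]$ already present in each factor (which is what the notation $\braket{\,\cdot\,}_{\Z_d}$ makes automatic, since both factors share the same distinguished central element $J$). This bookkeeping is the only point where one might slip.
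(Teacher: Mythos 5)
Your proof is correct and is exactly the syntactic check the paper intends when it asserts the lemma ``is easy to check from Definition \ref{definition:n-qudit-pauli-group}'': both presentations have the same generator set, and the relator sets agree up to replacing some cross-factor commutators by their inverses, which leaves the normal closure (and hence the presented group) unchanged, as you note. The only wrinkle worth recording is one the paper itself introduces: Example \ref{example:P_d} presents $\m P_d^{\otimes 1}$ with relator $J[x,z]$ while Definition \ref{definition:n-qudit-pauli-group} uses $J^{-1}[x_i,z_i]$, and for $d\neq 2$ these are different relations (the resulting groups are isomorphic via $x\leftrightarrow z$, but the presentations are not literally equal), so the claimed equality ``$=$'' should be read after normalizing the one-qudit presentation to match the definition, a point your write-up glides over when it silently switches from $J[x,z]$ to $J^{-1}[x_i,z_i]$.
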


\begin{definition}[LCS game product]
\label{def:lcs-game-product}
	Let $G_i=\LCS(\mathbf H_i, l_i, \Z_d)$, $i\in [n]$ be LCS games over $\Z_d$ with $\mathbf H_i=(H_i,V_i,E_i)$. We define their \emph{product LCS game} as $\prod_iG_i:= \LCS(\mathbf H, l, \Z_d)$, where $\mathbf H = (H,V,E)$ andbbb
	\begin{align}
	V &= \left(\bigsqcup_i V_i\right)\sqcup V_\text{prod}
	&V_\text{prod} &= \set{v_{xy}\; x\in E_i, y\in E_j, i\neq j}
	\\E &= \left(\bigsqcup_i E_i\right)\sqcup E_\text{prod}
	&E_\text{prod} &= \set{e_{xy}\; x\in E_i, y\in E_j, i\neq j}
	\end{align}
	In words, we add one equation and one variable for each pair of variables $x,y$ living in distinct factor games. We call the new variable $e_{xy}$, and the equation is
	\begin{equation}
		x + y - e_{xy} = 0.
	\end{equation}
	The definition of the equations can be formalized as follows.
	\begin{align}
	l(v) &= \begin{cases}
		l_i(v), &\text{ if }v\in V_i\\
		0, 		&\text{ if }v\in V_\text{prod}\\
		\end{cases}
	&
	H(v,e) &= \begin{cases}
		H_i(v,e), &\text{ if }v \in V_i\text{ and }e\in E_i\\
		1, &\text{ if }v = v_{xy}\text{ and }e \in \pair xy\\
		-1, &\text{ if }v = v_{xy}\text{ and }e = e_{xy}\\
		0, &\text{ otherwise }
		\end{cases}
	\end{align}
\end{definition}

\begin{lemma}
\label{lemma:lcs-game-product}
Let $G_0 = \prod_{i=1}^nG_i = \LCS(\mathbf H_0, l_0, \Z_d)$.
~\begin{enumerate}
	\item If each of the $G_i$ satisfy 
	$\forall v:\sum_e\abs{H_i(v,e)} \leq \Delta_i$ then $G_0$ satisfies the same with
	$\Delta_0 = \max\left(\set{\Delta_i}\cup\set{3}\right)$.
	\item $\abs{E_0} \leq (\sum_i\abs{E_i})^2$
	\item $\abs{V_0} = \abs {E_0} + \sum_i \abs {V_i}$
	\item\label{item:LCS-product-is-functorial} Let $\G_i = \G(\mathbf H_i,l_i,\Z_d)$. Then $\G_0 = \prod\limits^{\Z_d}_{i>0}\G_i$. 
\end{enumerate}
\end{lemma}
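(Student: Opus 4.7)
The plan is to dispatch the four claims in order; the first three are direct bookkeeping from Definition \ref{def:lcs-game-product}, while claim \eqref{item:LCS-product-is-functorial} requires constructing explicit group homomorphisms in both directions.

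For claim (1), I would just read off the definition: the equations inherited from each $G_i$ satisfy $\sum_e |H(v,e)| = \sum_e |H_i(v,e)| \le \Delta_i$, while every equation $v_{xy}\in V_\text{prod}$ has the form $x+y-e_{xy}=0$, contributing exactly $3$ to $\sum_e |H(v,e)|$. Hence every equation has weight at most $\max(\set{\Delta_i}\cup\set 3)$. For (2), let $N = \sum_i |E_i|$; then $|E_\text{prod}| = \sum_{i\ne j}|E_i||E_j| = N^2 - \sum_i |E_i|^2$, so $|E_0| = N + N^2 - \sum_i|E_i|^2 \le N^2$ as soon as $N \le \sum_i |E_i|^2$, which is immediate when each $|E_i|\ge 1$. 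For (3), the map $v_{xy}\mapsto e_{xy}$ is a bijection $V_\text{prod}\to E_\text{prod}$, so $|V_0| = \sum_i|V_i| + |V_\text{prod}| = \sum_i|V_i| + |E_\text{prod}| \le \sum_i|V_i| + |E_0|$.

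For the main claim (4), I plan to exhibit mutually inverse homomorphisms $\phi:\G_0\to\prod_i^{\Z_d}\G_i$ and $\psi:\prod_i^{\Z_d}\G_i\to \G_0$. Define $\phi$ on generators by sending each $e\in E_i$ to the copy of $e$ in the $i$-th factor, by sending $J\mapsto J$, and by sending $e_{xy}\mapsto xy$. To check $\phi$ is well-defined, I must verify that every relation in $R_c\cup R_{eq}$ of $\G_0$ holds in $\prod_i^{\Z_d}\G_i$: the relations inherited from each $G_i$ hold inside the $i$-th factor; the equation relation $J^0\cdot x\cdot y\cdot e_{xy}^{-1}$ maps to $xy(xy)^{-1}=1$; the commutativity relations $[x,e_{xy}],[y,e_{xy}]$ map to $[x,xy]=[x,y]=1$ and $[y,xy]=[y,x]=1$, which hold because $x$ and $y$ lie in distinct factors of the product group. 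Define $\psi$ on generators by sending each $s\in S_i$ to the corresponding $e\in E_i\subset \G_0$. The only relations to verify are the commutators $[s,s']$ for $s\in S_i,\ s'\in S_j,\ i\ne j$ from $R_\text{prod}$; these hold in $\G_0$ because the equation $v_{ss'}$ puts $s,s',e_{ss'}$ together in a single constraint, so $R_c$ forces $[s,s']=1$.

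Finally, I would check $\psi\circ\phi=\id$ and $\phi\circ\psi = \id$ on generators: $\phi\circ\psi$ is the identity on each $S_i$ by construction; $\psi\circ\phi$ is the identity on $E_i$ by construction, and on $e_{xy}$ it sends $e_{xy}\mapsto xy$, which equals $e_{xy}$ in $\G_0$ by the $R_{eq}$ relation for $v_{xy}$. The one subtlety to flag is that the paper reserves equality for identical presentations (Definition \ref{definition:canonical-form} discussion), so ``$=$'' in the lemma statement should be read as the canonical isomorphism just constructed; the main work is exactly the verification that commutativity of inter-factor generators in the product group is equivalent, via the auxiliary variables $e_{xy}$, to the LCS-product relations, and this equivalence is the conceptual content of the definition.
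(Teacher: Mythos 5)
Your proof is correct and follows essentially the same route as the paper's: the paper proves only item (4), via a single map $\iota:\prod^{\Z_d}_{i}\G_i\to\G_0$ shown to be a homomorphism (relations of the product group correspond to $R_c$-relations of $\G_0$), surjective (since $e_{xy}=xy$ in $\G_0$), and injective (since the relations of $\G_0$ already hold for the preimages), and your two mutually inverse maps $\psi,\phi$ package exactly these same three verifications. Your explicit treatment of items (1)--(3), which the paper declines to write out, is also fine --- and your replacement of the stated equality in item (3) by an inequality is the right reading, since $\abs{V_{\text{prod}}}=\abs{E_{\text{prod}}}<\abs{E_0}$.
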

\begin{proof}
	We prove only \eqref{item:LCS-product-is-functorial}, which is less straightforward than the rest.

	Let $\G = \prod\limits_{i>0}^{\Z_d}\G_i$. There's a clear inclusion map $\iota:\G\into \G_0$, since the generators of the former are a subset of the generators of the latter.  To see that $\iota$ is a homomorphism, we show that the relations of $\G$ are a subset of the relations of $\G_0$. For each commutation relation introduced by the group product, there is an equation in the product containing the same variables. So the solution group $\G_0$ has a corresponding commutation relation.
	To see that this map is surjective, notice that $e_{xy} = xy$ in $\G_0$, so all of the generators of $\G_0$ lie in the image of $\iota$. To see that $\iota$ is injective, check that every relation in the presentation $\G_0$ is already true of the pre-image elements in $\G$. 
\end{proof}

\begin{definition}
	Let $G_2$ and $G_3$ be the Magic Square and Magic Pentagram LCS games over $\Z_2$, respectively.
	For $n\geq 4$, construct $G_n$ as an LCS game product as follows:
	\begin{align}
		G_{2k} := \prod_{i\in [k]}G_2,
		&&
		G_{2k+1} :=\prod(\underbrace{G_2,\ldots, G_2}_{k-1}, G_3),
	\end{align}
	where $\prod$ is the LCS game product from Definition \ref{def:lcs-game-product}. 
\end{definition}
From Lemma \ref{lemma:lcs-game-product} and properties of $G_2$ and $G_3$, we can deduce basic properties of $G_n$.
\begin{lemma}~
\label{lemma:basic-Gn-properties}
	\begin{enumerate}
		\item $G_n$ satisfies $\forall v:\sum_e \abs{H_n(v,e)} \leq 4$.
		\item $\abs{E_n} \leq \abs{V_n} \leq 25n^2$.
		\item $G_n$ has solution group $\G_n \cong \m P_2^{\otimes n}$
	\end{enumerate}
\end{lemma}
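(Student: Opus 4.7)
My approach is to reduce each of the three claims to the general LCS-product machinery already developed in Lemma \ref{lemma:lcs-game-product}, together with the explicit counts for the base games $G_2$ (Magic Square: $9$ variables, $6$ equations, constraints of length $\leq 3$) and $G_3$ (Magic Pentagram: $10$ variables, $5$ equations, constraints of length $\leq 4$).

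For (1), I will apply the first part of Lemma \ref{lemma:lcs-game-product} inductively. Each $G_2$ factor satisfies $\sum_e |H(v,e)| \leq 3$, each $G_3$ factor satisfies $\sum_e |H(v,e)| \leq 4$, and the product construction introduces equations of the form $x+y-e_{xy}=0$, each of which has $\sum_e|H(v,e)| = 3$. The recurrence therefore gives the global bound $\max(3,4,3) = 4$.

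For (2), I count directly. Let $k$ be the number of factor games used to form $G_n$: $k = n/2$ when $n$ is even and $k = (n-1)/2$ when $n$ is odd, so $k \leq n/2$ in both cases. The factor games contribute at most $10$ variables and at most $6$ equations apiece. The product construction contributes one new variable and one new equation for each pair of variables drawn from distinct factor games, and there are at most $\binom{k}{2}\cdot 10\cdot 10$ such pairs. Adding these up yields $|E_n|, |V_n| \leq 10k + 100\binom{k}{2} \leq 25 n^2$ for every $n \geq 2$.

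For (3), I apply part (\ref{item:LCS-product-is-functorial}) of Lemma \ref{lemma:lcs-game-product}, which identifies the solution group of an LCS product with the corresponding product over $\Z_2$ of the solution groups of the factors. By Propositions \ref{prop:solution-group-square} and \ref{prop:solution-group-pentagram}, $\G(G_2) \cong \m P_2^{\otimes 2}$ and $\G(G_3) \cong \m P_2^{\otimes 3}$. Combining these with the identity $\m P_2^{\otimes a} \overset{\Z_2}{\times} \m P_2^{\otimes b} \cong \m P_2^{\otimes(a+b)}$ from the preceding lemma on the Pauli group, a short case analysis on the parity of $n$ produces $\G_n \cong \m P_2^{\otimes n}$. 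The only nontrivial aspect of the whole proof is careful bookkeeping across the even and odd cases; no new algebraic, combinatorial, or topological content is needed beyond what has already been established.
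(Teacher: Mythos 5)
Your proposal is correct and follows essentially the same route as the paper, which dispatches this lemma in a single sentence by citing Lemma \ref{lemma:lcs-game-product} together with the known properties of $G_2$ and $G_3$; your write-up just makes the bookkeeping explicit. One caveat: the middle clause $\abs{E_n}\le\abs{V_n}$ of item (2) is not established by your argument (you only bound each quantity separately by $25n^2$), and in fact it cannot be: each factor game has strictly more variables than equations ($9$ vs.\ $6$ for the Square, $10$ vs.\ $5$ for the Pentagram) while the product adds variables and equations in equal numbers, so $\abs{E_n}>\abs{V_n}$. This is a defect of the lemma as stated rather than of your proof, and it is harmless downstream since Theorem \ref{thm:self-testing-pauli-LCS} only needs $\abs{E_n},\abs{V_n}=O(n^2)$.
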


Next, we describe the winning strategy for the game $G_n$. First, we give an abstract description, and then we unpack it into a concrete description. Understanding either description should suffice to appreciate Theorem \ref{thm:self-testing-pauli-LCS}.

\begin{definition}[Ideal strategy for the game $G_n$, abstract]
	Let $\tau_n^{(1)}: \mathcal{P}_2^{\otimes n}\to U(\C^2)^{\otimes n}$ be as in Definition \ref{def:representations-pauli-group}. Then let $\tau$ be composition of that map with the isomorphism $\mathcal{P}_2^{\otimes n}\cong \G_n$. The ideal strategy is that which follows from applying the construction of Proposition \ref{prop: perfect strategy} to the operator solution $\tau$.
\end{definition}

As a structural hint for what follows, notice that each observable measured by the provers is a Pauli operator of weight at most $5$. (It is either a magic square operator, a magic pentagram operator, a tensor product of magic square operators, or a tensor product of a Magic Square operator with a Magic Pentagram operator.) 

Recall by the definition of product game, that, for $n = 2k$, $V_n = \bigsqcup_{i \in [k]} V_2^{(i)} \sqcup V_\text{prod}$ and $E_n = \bigsqcup_{i \in [k]} E_2^{(i)} \sqcup E_\text{prod}$ where $V_2^{(i)}$ and $E_2^{(i)}$ are the vertex and edge sets for the $i$th copy of the magic square game, and $V_\text{prod} = \set{v_{xy}, x\in E_i, y\in E_j, i\neq j}$ and $E_\text{prod} = \set{e_{xy}, x\in E_i, y\in E_j, i\neq j}$. Similarly, for $n=2k+1$, $V_n = \bigsqcup_{i \in [k-1]} V_2^{(i)} \sqcup V_3 \sqcup V_\text{prod}$ and $E_n = \bigsqcup_{i \in [k-1]} E_2^{(i)} \sqcup E_3 \sqcup V_\text{prod}$ where $V_3$ and $E_3$ are the vertex and edge sets corresponding to the magic pentagram game. 

\begin{definition}[Ideal strategy for the game $G_n$]
Let $\{A^{\text{sq}}_e\}$, $\{B^{\text{sq}}_e\}$ and $\{A^{\text{pt}}_e\}$, $\{B^{\text{pt}}_e\}$  be observables from the ideal strategies of the magic square and magic pentagram games mod $d$ respectively.
\begin{itemize}
\item For $n = 2k$, the $A_e^{(v)}$ are observables on $\C^{2^n}$ (which we think of as $k$ copies of $\C^{2^2}$). If $v \in V_2^{(i)}$ and $e \in E_2^{(i)}$, let $A_e^{(v)} = (A^{\text{sq}}_e)_i \otimes I := A_e$, where the subscript indicates that the observable acts on the $i$th of the $k$ copies, and the identity is on everything else; if $v = v_{ey}$ or $v = v_{xe} \in V_\text{prod}$ and $e \in E_2^{(i)}$, let $A_e^{(v)} =  (A^{\text{sq}}_e)_i \otimes I := A_e$; if $v = v_{xy} \in V_\text{prod}$ and $e = e_{xy}$ for $x \in E_2^{(i)}$ and $y \in E_2^{(j)}$, then let $A_e^{(v)}$ = $(A^{\text{sq}}_x)_i \otimes (A^{\text{sq}}_y)_j \otimes I := A_e$. Finally, let $B_e = \bar{A_e}$.
\item For $n = 2k+1$, the observables are on $\C^{2^n}$ (which we think of as $k-1$ copies of $\C^{2^2}$ and one copy of $\C^{2^3}$). The only changes from the even case are the following: if $v \in V_3$ and $e \in E_3$, let $A_e^{(v)} = (A^{\text{pt}}_e)_k \otimes I := A_e$, where the $k$ subscript denotes the last $\C^3$ register; if $v = v_{ey}$ or $v = v_{xe} \in V_\text{prod}$ and $e \in E_3^{(i)}$, let $A_e^{(v)} =  (A^{\text{pt}}_e)_k \otimes I := A_e$; if $v = v_{xy} \in V_\text{prod}$ and $e = e_{xy}$ for $x \in E_2^{(i)}$ and $y \in E_3$, then let $A_e^{(v)}$ = $(A^{\text{sq}}_x)_i \otimes (A^{\text{pt}})_k \otimes I := A_e$, and similarly for the symmetric case. As before, let $B_e = \bar{A_e}$.
\end{itemize}
Set $\ket\psi = \ket{\text{EPR}_2}^{\otimes n}$. Define $\set{A_e^{(v)}}$, $\set{B_e}$, $\ket\psi$ to be the ideal strategy.
\end{definition}

\begin{thm}\label{thm:self-testing-pauli-LCS}

The product game $G_n$ mod $2$ self-tests the ideal strategy with perfect completeness and $O(n^{10}\e)$-robustness.
\end{thm}

\begin{lemma}\label{lemma:small-group-pictures-pauli-LCS}
	Suppose $\m P$ is a $\mathcal{P}_2^{\otimes n}$-picture in which each relation and each generator appears at most $m$ times. Then there is a $\G_n$-picture $\m P'$ witnessing the same equation in which each relation and each generator appears at most $4m$ times. 
\end{lemma}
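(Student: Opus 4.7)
The plan is to mimic the proofs of Lemmas \ref{lemma:small-group-pictures-square} and \ref{lemma:small-group-pictures-pentagram}, lifted to the product setting. By combining Propositions \ref{prop:solution-group-square} and \ref{prop:solution-group-pentagram} with the functorial property of the LCS game product (Lemma \ref{lemma:lcs-game-product}), the isomorphism $\G_n \cong \m P_2^{\otimes n}$ decomposes factor-by-factor: each generator $x_i, z_i$ of $\m P_2^{\otimes n}$ is identified with a specific edge generator of $\G_n$ lying inside either a Magic Square block (a $\G_2$-summand) or the Pentagram block (a $\G_3$-summand). First I will relabel every edge of $\m P$ according to this identification, obtaining a candidate $\G_n$-labeling of the underlying planar graph.

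Next I need to verify that every relation in the $\m P_2^{\otimes n}$ presentation either is already a relation of $\G_n$ or admits a bounded-cost substitution. The structural relations $s^2 = 1$ and $[s,J]=1$ are built into the $\langle\,\cdot:\cdot\,\rangle_{\Z_2}$ notation of both groups and need no attention. A cross-block commutation $[s_i, s_j] = 1$ with $s_i, s_j$ in distinct factor blocks translates, via the product-game construction (Definition \ref{def:lcs-game-product}), to a pair of edges that jointly appear in a product vertex $v_{s_is_j}\in V_\text{prod}$, and so lies in $R_c$ of $\G_n$. A within-block commutation $[s_i, s_j] = 1$ with $i\neq j$ corresponds, under the explicit identifications from Lemmas \ref{prop:four-edges} and \ref{prop:six-edges}, to a pair of edges sharing an equation inside the factor game: the four Magic Square pairs each share a row or column, and all twelve within-block Pentagram pairs each share one of the five linear constraints. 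These commutations therefore also lie in $R_c$ of $\G_n$. The only remaining relations are the $n$ anticommutations $J\1[x_i, z_i]$: for each occurrence in $\m P$ I will splice in an instance of Figure \ref{fig:k33-generators} if qubit $i$ lies in a Magic Square block, or Figure \ref{fig:k5-generators} if it lies in the Pentagram block, drawn using edge generators of that block; each such subpicture uses each generator and each relation of the block at most once.

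Finally I will tally the usage. A generator of $\G_n$ lies in exactly one factor block, and any block contains at most three anticommutation relations (attained only by the Pentagram block; Magic Square blocks contribute only two). Since each such relation appears at most $m$ times in $\m P$, this generator is touched by at most $3m$ subpicture substitutions, each contributing at most one additional occurrence on top of the at most $m$ original occurrences, yielding the bound $4m$. An identical count bounds the multiplicity of each relation of $\G_n$: relations in $R_{eq}$ of the factor games are used only through substitutions (at most $3m$ times), $R_c$ relations that are not cross-block are used at most $m + 3m = 4m$ times, cross-block $R_c$ relations are used at most $m$ times, and the product-equation relations are not used at all. The main, if routine, piece of work is the case-check verifying that every within-block cross-qubit commutation really does land in $R_c$ of the factor game; once that is in hand, the bookkeeping reduces to a direct extension of the $\G_2$ and $\G_3$ arguments.
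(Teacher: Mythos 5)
Your proposal is correct and follows essentially the same route as the paper's proof: reinterpret the $\m P_2^{\otimes n}$-picture as a $\G_n$-labelled picture, splice in a Figure \ref{fig:k33-generators} or \ref{fig:k5-generators} subpicture for each of the at most $3m$ occurrences of twisted commutation relations per factor block, and observe that each replacement adds at most one instance of each generator and relation, giving $m+3m=4m$. The only difference is that you spell out the (correct) case-check that the remaining Pauli relations—cross-block and within-block commutations—are already relations of $\G_n$, which the paper leaves implicit.
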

\begin{proof}
	As in Lemmas \ref{lemma:small-group-pictures-square} and \ref{lemma:small-group-pictures-pentagram}, we take $\m P$ and replace the twisted commutation relations with small subpictures. There are at most $3m$ twisted commutation relations from each factor game $G_2$ or $G_3$, and each one is replaced by a $\G_2$- or $\G_3$-picture. Each of these replacements adds at most one instance of each generator and relation.
\end{proof}

Let $\can$ be the canonical form from \ref{lemma:canonical-form} composed with the isomorphism $\G_n \cong \mathcal{P}_2^{\otimes n}$.
\begin{lemma}
\label{lemma:small-group-pictures-w-pauli-LCS}
	For each generator $e\in E_n$, the equation $\can(e)e\1 = 1$ has a group picture in which each generator and relation appear at most three times. 
\end{lemma}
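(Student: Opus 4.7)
The plan is to handle the two kinds of generators in $E_n = \bigsqcup_i E_i \sqcup E_\text{prod}$ separately, reducing the second case to the first.

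When $e$ belongs to a single factor's edge set ($e\in E_2^{(i)}$ or $e\in E_3$), I will observe that the canonical form $\can(e)$ inside $\G_n$ uses only the Pauli letters of that factor: the isomorphism $\G_n \cong \m P_2^{\otimes n}$ respects the factor decomposition, so $\can(e)$ agrees (under the identification of Pauli letters with specific edge labels) with the canonical form in the single-factor solution group. Consequently the $\G_2$- or $\G_3$-picture produced by Lemma~\ref{lemma:G2-m0-equals-1} or Lemma~\ref{lemma:G3-m0-equals-1} is automatically a valid $\G_n$-picture witnessing $\can(e)e^{-1}=1$, with the required bound on generator and relation usage.

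When $e = e_{xy} \in E_\text{prod}$, with $x\in E_i$ and $y\in E_j$ for $i\neq j$, I will construct a composite picture. The constraint $v_{xy}$ supplies the relation $xy\, e_{xy}^{-1}=1$ in $\G_n$, so $\can(e_{xy}) = \can(xy)$; moreover the Pauli supports of $\can(x)$ and $\can(y)$ are disjoint and $J$ is central, so $\can(xy) = \can(x)\can(y)$ as a group element. The picture will consist of a single interior vertex carrying the relation $r_{v_{xy}}$, with three half-edges labeled $x$, $y$, $e_{xy}$: I will route the $e_{xy}$-edge to the outer boundary and attach the factor-game subpictures $\Pi_x$ and $\Pi_y$ provided by the first case along the $x$- and $y$-edges, identifying each subpicture's $x^{-1}$ or $y^{-1}$ boundary edge with the corresponding half-edge of the central vertex. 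The outer boundary then reads $\can(x)\can(y)\, e_{xy}^{-1}$, which witnesses the required equation as a group element.

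The main thing to verify will be the generator and relation counts. Because $x$ and $y$ lie in different factors, the two subpictures $\Pi_x$ and $\Pi_y$ draw on disjoint sets of Pauli-letter generators and of factor-specific relations, so each such label inherits its bound of three directly from the relevant factor-game lemma. The boundary-edge identifications at the gluing sites guarantee that no new labeled edges for $x$ or $y$ are created when the central vertex is added, and the edge $e_{xy}$ appears exactly once, on the outer boundary, while the relation $r_{v_{xy}}$ is used once. The only delicate case will be the central generator $J$, which can appear in both subpictures; I expect to control it by noting that $\can$ of any factor-game generator contributes at most one power of $J$ (since exponents lie in $\Z_2$) and that central $J$-edges can be freely consolidated using the standard relations $[s,J]=1$ already present in the presentation.
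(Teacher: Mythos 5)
Your proposal is correct and follows the same route as the paper's proof: apply the single-factor lemmas (Lemmas \ref{lemma:G2-m0-equals-1} and \ref{lemma:G3-m0-equals-1}) when $e$ lies in a factor game, and for $e=e_{xy}$ glue the pictures for $x$ and $y$ along the relation of the vertex $v_{xy}$. Your extra bookkeeping (disjointness of the factors' generators and relations, and the treatment of $J$, which in this formalism lives on vertex labels and is not counted as a variable) only makes explicit what the paper leaves implicit.
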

\begin{proof}
	If $e$ comes from a Magic Square factor or a Magic Pentagram factor, then we apply Lemma \ref{lemma:G2-m0-equals-1} or \ref{lemma:G3-m0-equals-1}, respectively. If $e = e_{xy}$ is an auxiliary variable, then we glue the pictures for $x$ and $y$ together.
\end{proof}

\begin{proof}[Proof of Theorem \ref{thm:self-testing-pauli-LCS}]
	We again seek to apply theorem \ref{thm:robust-self-testing-appendix}, so we check each of its conditions. From Lemma \ref{lemma:basic-Gn-properties}, we have 
	$l_0 = 4$. 
	From Lemma \ref{lemma:small-group-pictures-w-pauli-LCS}, we have $m_0 = 3$.  
	From Lemmas \ref{lemma:small-group-pictures-pauli-n} and \ref{lemma:small-group-pictures-pauli-LCS}, we have $m = 72\cdot 2^2n$.
	Finally, it follows from item $3$ of Proposition \ref{lemma:basic-Gn-properties} that $\G_n$ group-tests $\tau_1^{(n)}$, whose image contains an isomorphic copy of $\mathcal{P}_2^{\otimes n}$.  Then, applying Theorem \ref{thm:robust-self-testing-appendix} gives the desired bound.
	
\end{proof}

\section{Concluding remarks}
We have presented a general robust self-testing theorem for a certain class of linear constraint system games, using the group-theoretic approach of Cleve, Liu and Slofstra. We specialized this theorem to the cases of the Magic Square and Magic Pentagram games mod $2$, obtaining robust self-testing theorems for respectively two and three pairs of maximally entangled qubits and the associated ideal measurements. Furthermore, we applied our theorem to a certain $n$-fold product of these games to obtain a robust self-testing theorem for a tensor product of $n$ maximally entangled qubits and the associated $n$-qubit Pauli measurements. The following are some remaining open questions.

\begin{question}
	For which $d$ is there a non-local game self-testing the maximally entangled state of local dimension $d$?
\end{question}
In \cite{coladangelo2018generalization}, one of us gave a family of Bell inequalities whose maximal violation self-tests the maximally entangled game of local dimension $d$ for any $d > 2$. However, this translates into non-local games which are not pseudotelepathy (i.e. they do not have a perfect strategy). 
If one considers just pseudotelepathy games, it is known that there are no non-local games which can be won perfectly with dimension $2$ but not with dimension $1$ (i.e.\ with a classical strategy) \cite{brassard2004minimum}; this answers the above question in the negative for $d=2$. For $d=3$, there is a pseudotelepathy game based on Kochen-Specker sets for which there is no classical perfect strategy and there is a perfect quantum strategy using the maximally entangled state of qutrits. \cite{cleve2004consequences} To the best of the authors' knowledge, the question of rigidity is open for this game.

Restricting our attention to just to LCS games, we can reduce the above question to the following:
\begin{question}
	For any $d$, is there a solution group $\G$ presented over $\Z_d$ such that every irreducible representation of $\G$ is either $1$-dimensional or $d$-dimensional, sending $J$ to a nontrivial $d\th$ root of unity? 
\end{question}
$\m P_d^{\otimes 1}$ has this property, but it can be shown that it is not a solution group. 
Classifying the groups with the above property seems like a nontrivial question in group theory.

The following question represents, in the authors' opinion, the main barrier to understanding the full power of general LCS games.
\begin{question}
	Which groups are solution groups? In particular, are there any nonabelian finite solution groups other than the groups of qubit paulis?
\end{question}
William Slofstra gave a partial answer to this question by proving \emph{embedding theorems.} First, he proved \cite{slofstra2016tsirelson} that all finitely presented groups embed into some solution group over $\Z_2$. Next, he proved \cite{slofstra2017set} that a somewhat broad class of groups, including all finite symmetric groups, embed into binary solution groups such that the embedding preserves the finite-dimensional approximate representation structure. These embeddings fail to preserve finiteness of the group, however.


\bibliography{ms}

\begin{appendices}

\section{Some Inequalities}
We record here inequalities which we need and whose proofs are not particularly enlightening. We'll use $\Re z:= \frac{z + \bar z}2$ denote the real part of $z\in \C$.
\begin{lemma}
\label{lemma:convex-inequality-easy}
	Let $d\geq 2$ be an integer and $\a = \sum_i \a_i \w_d^i$, where $\a_i$ are nonnegative reals with $\sum_i \a_i = 1$. Suppose $\a_0 \geq 1-\e$. Then $1 - \Re\a \leq \abs{1-\a} \leq 2\e$.
\end{lemma}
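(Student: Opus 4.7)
The plan is to prove the two inequalities separately, both by elementary manipulations.

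For the first inequality $1-\Re\alpha \leq |1-\alpha|$, I would observe that since $\alpha$ is a convex combination of $d$-th roots of unity, it lies in the convex hull of the unit circle, so $\Re\alpha \leq 1$ and thus $1-\Re\alpha$ is nonnegative. Then I would write $|1-\alpha|^2 = (1-\Re\alpha)^2 + (\Im\alpha)^2 \geq (1-\Re\alpha)^2$ and take square roots.

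For the second inequality, the key observation is that since $\alpha_0 \geq 1-\epsilon$, the remaining weight $\sum_{i\neq 0}\alpha_i = 1-\alpha_0 \leq \epsilon$ is small. I would write
\begin{equation}
1-\alpha = (1-\alpha_0) - \sum_{i\neq 0}\alpha_i\omega_d^i,
\end{equation}
apply the triangle inequality together with $|\omega_d^i|=1$, and note that both surviving quantities equal $1-\alpha_0$, yielding the bound $2(1-\alpha_0) \leq 2\epsilon$.

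Neither step presents a real obstacle; the main thing is just to make sure the triangle inequality is applied in the right direction to get the factor of $2$. The bound is essentially tight, since for $d=2$ and $\alpha_0 = 1-\epsilon$, $\alpha_1 = \epsilon$, we have $|1-\alpha| = 2\epsilon$.
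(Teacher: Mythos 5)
Your proposal is correct and matches the paper's argument: the paper proves the bound $\abs{1-\a}\leq 2\e$ by exactly the same decomposition $1-\a = (1-\a_0) - \sum_{i\neq 0}\a_i\w_d^i$ and triangle inequality, while treating $1-\Re\a \leq \abs{1-\a}$ as immediate (which your first step simply spells out). No substantive difference.
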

\begin{proof}
	Make repeated applications of the inequality $\abs{a+b} \leq \abs a + \abs b$. 
	\begin{align}
		\abs{\a - 1}
		& \leq (1-\a_0) + \sum_i\abs{\a_i\w_d^i}
		\\& = 2(1-\a_0) 
		\\&\leq 2\e.
	\end{align}
\end{proof}
Now we prove a converse which is slightly more technical.
\begin{lemma}
\label{lemma:convex-inequality-hard}
	Let $d\geq 2$ be an integer and $\a = \sum_i \a_i \w_d^i$, where $\a_i$ are nonnegative reals with $\sum_i \a_i = 1$. 
	Suppose $1- \Re\a \leq \e$. Then $1-\a_0 \leq \frac12d^2\e$.

\end{lemma}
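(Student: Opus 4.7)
My plan is to bound $1 - \Re\alpha$ from below by $(1 - \alpha_0)$ times the minimum of $1 - \cos(2\pi i/d)$ over the nonzero residues $i \in \mathbb{Z}_d$, and then to bound this minimum from below by something of order $d^{-2}$.

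Writing $\omega_d^i = \cos(2\pi i/d) + \mathrm{i}\sin(2\pi i/d)$ and using $\sum_i \alpha_i = 1$, I would expand
\begin{equation}
1 - \Re\alpha \;=\; \sum_{i=0}^{d-1} \alpha_i\bigl(1-\cos(2\pi i/d)\bigr) \;=\; \sum_{i=1}^{d-1}\alpha_i\bigl(1-\cos(2\pi i/d)\bigr).
\end{equation}
Since the map $i \mapsto 1 - \cos(2\pi i/d)$ on $\mathbb{Z}_d\setminus\{0\}$ attains its minimum at $i=1$ and $i=d-1$, each term on the right is at least $\alpha_i\bigl(1-\cos(2\pi/d)\bigr)$, so
\begin{equation}
1 - \Re\alpha \;\geq\; (1-\alpha_0)\bigl(1-\cos(2\pi/d)\bigr) \;=\; 2(1-\alpha_0)\sin^2(\pi/d).
\end{equation}

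The remaining step is a quantitative lower bound on $\sin(\pi/d)$. I would invoke the elementary inequality $\sin x \geq 2x/\pi$ on $x \in [0,\pi/2]$ (valid since $\pi/d \leq \pi/2$ for $d \geq 2$), which yields $\sin(\pi/d) \geq 2/d$ and hence $2\sin^2(\pi/d) \geq 8/d^2 \geq 2/d^2$. Combining with the previous display gives
\begin{equation}
\epsilon \;\geq\; 1 - \Re\alpha \;\geq\; \tfrac{2}{d^2}(1-\alpha_0),
\end{equation}
which rearranges to the claimed $1-\alpha_0 \leq \tfrac12 d^2 \epsilon$.

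No step here looks like a real obstacle; the only subtlety is making sure the minimum of $1-\cos(2\pi i/d)$ is actually attained at $i\in\{1,d-1\}$ (which is immediate from the shape of the cosine) and that the sine inequality is applied in the correct range. The slack in the final bound (one could instead prove $\tfrac18 d^2\epsilon$) suggests the stated form is chosen for convenience rather than tightness.
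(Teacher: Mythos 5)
Your proof is correct and follows essentially the same route as the paper's: both arguments reduce to the extremal case in which all the nonzero-index weight sits at $\omega_d^{\pm1}$ (i.e.\ use $\Re\,\omega_d^i \le \Re\,\omega_d$ for $i\neq 0$) and then lower-bound $1-\cos(2\pi/d)$ by a constant times $d^{-2}$. The only difference is in that last elementary estimate — you use the half-angle identity together with Jordan's inequality $\sin x \ge 2x/\pi$ on $[0,\pi/2]$, giving $1-\cos(2\pi/d)\ge 8/d^2$, while the paper uses a truncated Taylor expansion of cosine to get $2/d^2$; both suffice, and as you note your route even yields the slightly stronger constant $\tfrac18 d^2\e$.
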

The quadratic dependence on $d$ is optimal. Take for example $\a = \a_0 + (1-\a_0) \frac{\w_d+\w_d\1}2$. 
\begin{proof}
Recalling that $\Re\w_d = \cos\frac{2\pi}d$, we establish the following inequality for all integers $d\geq 2$.
\begin{equation}
	1 - \cos \frac{2\pi}p \geq \frac{2}{d^2}.
\end{equation}
It suffices to use the fourth order Taylor series for cosine and the inequality $\pi^2 - \frac{\pi^4}{3d^2} \geq 1$, true for $d\geq 1.92$.
\begin{equation}
	1 - \cos \frac{2\pi}d \geq \frac{2^2\pi^2}{2!d^2} - \frac{2^4\pi^4}{4!d^4} 
	\geq \frac{2}{d^2}\left[\pi^2 - \frac{\pi^4}{3d^2}\right]
	\geq \frac{2}{d^2}.
\end{equation}

\noindent
From this we conclude that $\frac{1}{1 - \Re\w_d} \leq \frac12d^2$.
 We'll also use that the primitive $d\th$ root of unity has maximal real part among the $d\th$ roots of unity, i.e.\ $\Re \w_d^i \leq\Re \w_d$ for all $i\neq 0$. (We write in two columns to save space. Read the left column first.)
	\begin{align}
		1-\e& \leq \Re \a
		& \a_0(1-\Re\w_d) &\geq 1 - \Re\w_d - \e
		\\	& \leq \a_0 + \sum_{i\neq 0} \a_i \Re (\w_d^i)
		& \a_0 &\geq 1 - \frac{\e}{1 - \Re\w_d}
		\\	& \leq \a_0 + (1-\a_0)(\Re \w_d)
		& \a_0 &\geq 1 - \frac{1}{1 - \Re\w_d} \e
		\\	& \leq \a_0(1- \Re\w_d) + \Re \w_d
		& \a_0 &\geq 1 - \frac12d^2 \e.
	\end{align}
\end{proof}

\begin{lemma}\label{lemma:entanglement-monogamy}
	Let $\m H_A, \m H_B, \m H_C$ Hilbert spaces. Let $\rho_{ABC}$ be a state on $\m H_A\otimes \m H_B\otimes \m H_C$. Let $\rho_{AB} = \Tr_C \rho_{ABC}$. Let $\ket\psi_{AB}$ be a pure state on $\m H_A\otimes \m H_B$. Suppose that 
	\begin{equation}
	\label{eq:assumption-entanglement-monogamy}
		\drho{\rho_{AB}}{\proj \psi_{AB}}{I_{AB}}^2 \leq \e. 
	\end{equation}
	Then there is some state $\rho_{\text{aux}}$ on $\m H_C$ such that 
	\begin{equation}
		\norm{\rho_{ABC} - \proj\psi_{AB}\otimes \rho_{\text{aux}}}_1 \leq 6\e.
	\end{equation}
\end{lemma}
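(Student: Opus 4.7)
The natural approach is a gentle-measurement-style block decomposition. First I would unpack the hypothesis: since $\proj\psi$ is a projection, the general form of Lemma~\ref{lemma:state-dependent-distance}(\ref{item:state-dependent-distance-square}) gives $\drho{\rho_{AB}}{\proj\psi}{I}^2 = 1 - \braket{\psi|\rho_{AB}|\psi}$, so the hypothesis is equivalent to $p := \braket{\psi|\rho_{AB}|\psi} \geq 1-\e$. Equivalently, the projector $P := \proj\psi\otimes I_C$ has Born probability $\Tr(P\rho_{ABC}) = p \geq 1-\e$ on the tripartite state.

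The candidate auxiliary state is forced by this setup: $\rho_{\text{aux}} := \frac{1}{p}(\bra\psi\otimes I_C)\,\rho_{ABC}\,(\ket\psi\otimes I_C)$, manifestly a density operator on $\m H_C$. Writing $Q := I - P$ and decomposing $\rho_{ABC} = P\rho P + P\rho Q + Q\rho P + Q\rho Q$, the identity $P\rho_{ABC}P = \proj\psi\otimes (p\,\rho_{\text{aux}})$, immediate from the definitions, rearranges to
\[
\rho_{ABC} - \proj\psi\otimes\rho_{\text{aux}} = -(1-p)\,\proj\psi\otimes\rho_{\text{aux}} + P\rho Q + Q\rho P + Q\rho Q.
\]
The triangle inequality reduces the lemma to bounding three trace norms. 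Two are elementary: $\norm{(1-p)\proj\psi\otimes\rho_{\text{aux}}}_1 = 1-p \leq \e$, and $\norm{Q\rho Q}_1 = \Tr(Q\rho) = 1-p \leq \e$ by positivity of $\rho$. The third, the off-diagonal cross term, I would handle by Cauchy--Schwarz for Schatten norms applied to the factorization $P\rho Q = (P\sqrt\rho)(\sqrt\rho Q)$: $\norm{P\rho Q}_1 \leq \norm{P\sqrt\rho}_2 \norm{\sqrt\rho Q}_2 = \sqrt{\Tr(P\rho)\Tr(Q\rho)} = \sqrt{p(1-p)} \leq \sqrt\e$.

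The hard part is this cross term, which through the approach above scales as $\sqrt{1-p}$ rather than $1-p$. Summing the three pieces yields $\norm{\rho_{ABC} - \proj\psi\otimes\rho_{\text{aux}}}_1 \leq 2\e + 2\sqrt\e$, which is $O(\sqrt\e)$. The $\sqrt\e$ scaling appears tight under the stated hypothesis, since already the marginal satisfies $\norm{\rho_{AB}-\proj\psi}_1 = 2\sqrt{1-p}$ when $\rho_{AB}$ is a pure state near $\proj\psi$ (e.g.\ take $\rho_{AB} = \proj\phi$ for $\ket\phi = \sqrt{1-\e}\ket\psi + \sqrt\e\ket{\psi^\perp}$); I suspect the $6\e$ in the statement should read $6\sqrt\e$, and the three-line computation above then completes the proof.
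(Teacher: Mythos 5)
Your proof is correct, and it takes a genuinely different route from the paper's. The paper purifies $\rho_{ABC}$ to $\ket\phi_{ABCC'}$, Schmidt-decomposes along the $AB/CC'$ cut as $\sum_i\sqrt{\lambda_i}\ket{i_{AB}}\otimes\ket{i_{CC'}}$, observes that $\rho_{ABC}$ is within trace distance $2(1-\lambda_1)$ of $\proj{1_{AB}}\otimes\rho^{(1)}_{\text{aux}}$ and that $\lambda_1\geq\braket{\psi|\rho_{AB}|\psi}\geq 1-\e$, and then closes the gap between $\proj{1_{AB}}$ and $\proj{\psi}_{AB}$ by invoking the inequality $\frac12\norm{\rho-\proj\psi}_1\leq 1-\braket{\psi|\rho|\psi}$. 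That inequality is false: taking the measurement operator $I-\proj\psi$ shows the correct direction is $\frac12\norm{\rho-\proj\psi}_1\geq 1-\braket{\psi|\rho|\psi}$, and the valid upper bound is the Fuchs--van de Graaf bound $\sqrt{1-\braket{\psi|\rho|\psi}}$. Your block decomposition with respect to $P=\proj\psi\otimes I_C$ avoids the purification entirely, and the only $\sqrt\e$-sized contribution is the off-diagonal block $P\rho Q+Q\rho P$, which you bound correctly via $\norm{P\rho Q}_1\leq\norm{P\sqrt\rho}_2\norm{\sqrt\rho Q}_2=\sqrt{p(1-p)}$.

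Your diagnosis of the statement is also right: the conclusion cannot be $6\e$ under hypothesis \eqref{eq:assumption-entanglement-monogamy}. Your pure-state example satisfies the hypothesis with equality yet has $\norm{\rho_{AB}-\proj{\psi}_{AB}}_1=2\sqrt\e$, and monotonicity of the trace norm under $\Tr_C$ then forces $\norm{\rho_{ABC}-\proj\psi_{AB}\otimes\rho_{\text{aux}}}_1\geq2\sqrt\e$ for every choice of $\rho_{\text{aux}}$, contradicting $6\e$ whenever $\e<1/9$. So the bound should read $O(\sqrt\e)$ (your computation gives $2\e+2\sqrt\e\leq4\sqrt\e$), and the uses of this lemma downstream---Corollary \ref{lemma:stabilizer-state} and the state-closeness bound \eqref{eq:robustness-condition-state} in Theorem \ref{thm:robust-self-testing}---inherit a square root accordingly.
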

\begin{proof}

	Let $\ket\phi_{ABCC'}$ be a purification of $\rho_{ABC}$, i.e.\ suppose that $\Tr_{C'} \proj\phi = \rho_{ABC}$. We examine a Schmidt decomposition of $\ket\phi$, cutting along subystems $AB/CC'$. Let
	\begin{equation}
		\ket{\phi}_{ABCC'} = \sum_i \sqrt{\l_i} \ket{i_{AB}}\otimes\ket{i_{CC'}}.
	\end{equation}
	where $\l_i >0$ for all $i$.
	Tracing out $C'$, we have
	\begin{equation}
		\rho_{ABC} = \sum_i \l_i \proj{i_{AB}}\otimes\Tr_{C'}\proj{i_{CC'}}.
	\end{equation}
	Now let $\rho^{(i)}_{\text{aux}} = \Tr_C\proj{i_{CC'}}$. One can compute the distance between $\rho_{ABC}$ and $\proj{i_{AB}}\otimes\rho^{(i)}_{\text{aux}}$ as
	\begin{equation}
	\label{eq:entanglement-monogamy-1}
		\frac12\norm{\rho_{ABC} - \proj{i_{AB}}\otimes\rho^{(i)}_{\text{aux}}}_1 = 1- \l_i.
	\end{equation}
	By the same computation,
	\begin{equation}
	\label{eq:entanglement-monogamy-2}
		\frac 12\norm{\rho_{AB} - \proj{i_{AB}}}_1 
		= 1- \l_i.
	\end{equation}
	\noindent
	The $\l_i$ are the eigenvalues of $\rho_{AB}$; let $\l_1$ be the greatest. Then we have
	\begin{align}
		\l_1 
		&\geq \braket{\psi|_{AB}\rho_{AB}|\psi}_{AB} 
		\\&= 1 - \drho\rho{\proj\psi}{I}^2. 
		\\ &\geq 1-\epsilon
	\end{align}
where we applied assumption \eqref{eq:assumption-entanglement-monogamy} to get the last line. We use the following inequality, valid for arbitrary $\rho$ and $\ket\psi$,
	\begin{align*}
		\frac12 \norm{\rho - \proj \psi}_1 
		& = \frac12 + \Tr\rho^2 - \sandwich\psi\rho
		\\& \leq 1 - \sandwich\psi\rho,
	\end{align*}
	to conclude that 
	\begin{equation}
	\label{eq:entanglement-monogamy-3}
		\frac12\norm{\rho_{AB} - \proj\psi_{AB}}_1 \leq \e.
	\end{equation}
	Finally, we apply several triangle inequalities, to obtain the following. (To clarify any confusion, here $\proj {1_{AB}}$ is $\proj {i_{AB}}$ when $i = 1$):
	\begin{align}
		\frac12 \norm{\rho_{AB} - \proj\psi_{AB}}_1
		&\leq \e
		& \text{Equation \eqref{eq:entanglement-monogamy-3}}
		\\
		\frac12\norm{\proj {1_{AB}} - \proj\psi_{AB}}_1
		&\leq 2\e
		& \text{Triangle inequality with Equation \eqref{eq:entanglement-monogamy-2}}
		\\
		\frac12 \norm{\proj {1_{AB}}\otimes \rho^{(1)}_{\text{aux}} - \proj\psi_{AB}\otimes \rho^{(1)}_{\text{aux}}}_1
		&\leq 2\e
		& \text{Tensoring }\rho^{(1)}_{\text{aux}}
		\\
		\frac12 \norm{\rho_{ABC} - \proj\psi_{AB}\otimes \rho^{(1)}_{\text{aux}}}_1
		&\leq 3\e
		& \text{Triangle inequality with Equation \eqref{eq:entanglement-monogamy-1}}.
	\end{align}
This concludes the proof.
\end{proof}


\section{Tighter bounds via more parameters}
In Section \ref{sec:self-testing}, we introduced one complexity parameter for LCS games and gave a robustness bound in terms of that parameter. Here, we give a tighter robustness bound at the expense of cumbersome bookeeping of parameters. We give new statements of the lemmas from \S \ref{subsection:robust-self-testing}. The proofs are essentially the same, and are omitted. We give the subscript $0$ to parameters which are typically constant.

\begin{theorem}
\label{thm:robust-self-testing-appendix}
	Let $G$ be a linear constraint game over $\Z_d$ with vertex set $V$, edge set $E$, and constraints given by  $H:V\times E \to \Z_d$ and $l: V\to \Z_d$. Let $\G$ be the solution group of $G$. Suppose that:
	\begin{enumerate}[(i)]
		\item 
		\label{assumption:bounded-degree-appendix}
		 each equation has at most $l_0$ variables with multiplicity, i.e.\ $\forall v:\sum_e\abs{H(v,e)} \leq l_0$,
		\item 
		
		\label{assumption:small-pictures-w-appendix}
		there is a canonical form $\can$ such that every equation of the form  $\can(e)e\1 = 1$ for $e\in E$ is witnessed by a $\G$-picture in which each generator and relation appears at most $m_0$ times,
		\item 
		\label{assumption:small-pictures-appendix}
		 every equation of the form $\can(g)\can(gh)\1\can(h) = 1$ $g,h\in \G$ is witnessed by a $\G$-picture 
		 proving in which 
		 each generator and 
		 each relation is used at most $m$ times,
		\item \label{assumption:group-test-appendix}
		 $\G$ group-tests $\tau: \G \to U(\C^{d^n}) $ in the sense of Definition \ref{definition:group-test}.
		 \item \label{assumption:pauli-in-image-appendix}The image of $\tau$ contains an isomorphic copy of the Pauli group $\m P_d^{\otimes n}$.
	\end{enumerate}	
Then $G$ self-tests the strategy $\tilde A_e^{(v)} = \tau(e), \tilde B_e = \bar{\tau(e)}, \ket\psi = \ket{\r{EPR}_{d^n}}$
	with perfect completeness and $O\left((m_0l_0dm\abs E \abs V)^2 \e\right)$-robustness.
\end{theorem}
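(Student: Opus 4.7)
The plan is to follow the proof of Theorem \ref{thm:robust-self-testing} step by step, but to keep track of the individual parameters $l_0$, $m_0$, $m$, $|E|$, $|V|$, $d$ rather than folding them into a single $\Delta$. The overall strategy decomposes into four stages: (1) extract approximate operator solutions on the generators from the winning strategy, (2) extend them to functions $f_A, f_B : \G \to U(\m H)$ that are approximately homomorphisms by using the canonical form and the quantitative van Kampen lemma, (3) apply the stability lemma to promote these approximate representations to exact ones, and (4) apply uniqueness of the operator solution (Lemma \ref{lem:unique-operator-solution}) and Corollary \ref{lemma:stabilizer-state} to identify both observables and state with the ideal strategy up to local isometry.

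First I would sharpen Lemma \ref{lemma:Bs-are-approximate-conjugate-operator-solution} by tracking parameters. The consistency and constraint-satisfaction bounds \eqref{eq:approximate-win-criterion-con}--\eqref{eq:approximate-win-criterion-sat}, combined with Cauchy--Schwarz, yield $\sum_e \drho\rho{A_e^{(v_e)}\otimes B_e}{I_{AB}} \leq 2|E||V|\sqrt\e$ and $\sum_v \drho\rho{\prod_{e\in r_v}A_e^{(v)}\otimes I}{\w_d^{l(v)}I} \leq 2|V|\sqrt\e$. Chaining triangle inequalities through the at-most-$l_0$ variables per relation (Lemma \ref{lemma:state-dependent-distance}\eqref{item:state-dependent-distance-chaining}) transfers these into analogous bounds on $B$, all of order $\eta_1 := O(l_0 |E||V| \sqrt\e)$. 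Next, I would define $f_A, f_B$ as in Definition \ref{def:4.19}. Assumption \eqref{assumption:small-pictures-w-appendix} and Proposition \ref{lem:quant-van-kampen} applied to each picture for $\can(e)e^{-1}=1$ give $\drho\rho{f_A(e)\otimes I}{A_e^{(v_e)}\otimes I} \leq O(m_0 \eta_1)$ (and similarly for $B$); assumption \eqref{assumption:small-pictures-appendix} applied to pictures for $\can(g)\can(gh)^{-1}\can(h) = 1$ gives the stability-lemma hypothesis $\drho\rho{f_A(x)f_A(yx)^{-1}f_A(y)\otimes I}{I} \leq O(m\eta_1) =: \eta_2$ for every $x,y$, and the same for $B$. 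Bounding the word length by $m$ and summing the pointwise consistency bounds gives $\drho\rho{f_A(x)\otimes f_B(x)}{I_{AB}} \leq O(m\eta_1)$ via Lemma \ref{lemma:state-dependent-distance}\eqref{item:state-dependent-distance-chaining}.

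Then I would invoke Lemma \ref{lemma:vidick-gowers-hatami} to obtain representations $\s_A, \s_B$ and isometries $W_A, W_B$ with $\E{x}\drho\rho{f_A(x)\otimes I}{W_A^\dagger \s_A(x) W_A \otimes I} \leq \eta_2$, and similarly for $B$. Lemma \ref{lemma:fixing-the-J}, combined with assumption \eqref{assumption:group-test-appendix}, produces projections $P_A, P_B$ onto the subspaces where $\s_{A/B}(J)$ acts as $\w_d^{\pm 1}$, with $\drho\rho{P_{A/B}}{I} \leq O(d\eta_2)$; restricted to these subspaces $\s_A, \s_B$ become an operator solution and a conjugate operator solution respectively. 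Since $\G$ group-tests $\tau$, Lemma \ref{lem:unique-operator-solution} yields additional isometries $\tilde W_A, \tilde W_B$ such that, setting $V_A = \tilde W_AW_A$ and $V_B = \tilde W_B W_B$, conjugation sends $\s_A, \s_B$ to $\tau\otimes I$ and $\bar\tau\otimes I$. A standard change-of-variable argument, using that $\tau$ is a homomorphism, promotes the resulting averaged bound on $\drho{V\rho V^\dagger}{(\tau(x)\otimes I)_A \otimes (\bar{\tau(x)}\otimes I)_B}{I}$ to a pointwise one. Assumption \eqref{assumption:pauli-in-image-appendix} then lets me restrict attention to the Pauli subgroup of the image of $\tau$, which is irreducible and acts on $\C^{d^n}$, so Corollary \ref{lemma:stabilizer-state} delivers $\|V\rho V^\dagger - \proj{\r{EPR}_{d^n}}\otimes \rho_{\text{aux}}\|_1 = O((d\eta_2)^2)$, establishing \eqref{eq:robustness-condition-state}; the observable conditions \eqref{eq:robustness-condition-alice}--\eqref{eq:robustness-condition-bob} follow by chaining triangle inequalities $A_e^{(v_e)} \leftrightarrow f_A(e) \leftrightarrow V_A^\dagger(\tau(e)\otimes I) V_A$. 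All errors compose to $O((m_0 l_0 d m |E||V|)^2 \e)$ as claimed. The main obstacle is bookkeeping: each triangle inequality, each application of Lemma \ref{lemma:state-dependent-distance}\eqref{item:state-dependent-distance-right-multiplication}/\eqref{item:state-dependent-distance-conjugation}, and each change of variable contributes multiplicatively to the constants, so care is needed to avoid over- or under-counting factors of $m_0$, $l_0$, $d$, $m$, $|E|$, $|V|$ as they propagate from generator-level bounds on Alice's operators all the way to the final trace-norm bound on the state.
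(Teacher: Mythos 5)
Your proposal follows exactly the paper's route: the paper proves this theorem by rerunning the proof of Theorem \ref{thm:robust-self-testing} with the parameter-tracked Lemmas \ref{lemma:Bs-are-approximate-conjugate-operator-solution-appendix} and \ref{lemma:canonical-form-implies-stability-appendix}, obtaining $\eta_1 = 2^4 l_0\abs E\abs V\sqrt\e$ and $\eta_2 = 2^{10} m_0 l_0 m \abs E\abs V\sqrt\e$, after which the stability lemma, Lemma \ref{lemma:fixing-the-J}, Lemma \ref{lem:unique-operator-solution}, and Corollary \ref{lemma:stabilizer-state} are applied unchanged. One bookkeeping point: your intermediate $\eta_2 = O(m\eta_1)$ drops a factor of $m_0$ --- the approximate-homomorphism bound for $f_A,f_B$ has to be derived from the operators $f(e)$, each of which is only $O(m_0\eta_1)$-close to the corresponding game observable (since $\can(e)$ need not equal $e$), so the correct input to the stability lemma is $O(m m_0\eta_1)$ as in Lemma \ref{lemma:canonical-form-implies-stability-appendix}; this does not affect your final bound, which already carries the $m_0$ factor.
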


\begin{proof}
	The proof is the same as the proof of Theorem \ref{thm:robust-self-testing}, but with different parameters.
	Using Lemmas \ref{lemma:Bs-are-approximate-conjugate-operator-solution-appendix} and \ref{lemma:canonical-form-implies-stability-appendix}, we can get $\eta_1 = 2^4l\abs E \abs V \sqrt \e, \eta_2 = 2^{10}m_0l_0m\abs E \abs V \sqrt \e$. The rest of the argument goes through unmodified.
\end{proof}

\begin{lemma}[c.f.\ Lemma \ref{lemma:Bs-are-approximate-conjugate-operator-solution}]
\label{lemma:Bs-are-approximate-conjugate-operator-solution-appendix}
	$\set{B_e}$ is an ``approximate conjugate operator solution'' in the following sense:
	\begin{align}
	\label{eq:Bs-approximately-satisfy-constraints-appendix-appendix}
		\sum_v \drho{\rho}{
		\prod_{e\in r_v} I_A\otimes B_e}{\w_d^{-l(v)}I} 
		&\leq 4l_0\abs E \abs V\sqrt\e.
		\\
	\label{eq:Bs-approximately-commute-appendix-appendix}
		\sum_{\substack{e,e'\\e\sim e'}} \drho{\rho}{I_A\otimes [B_e,B_{e'}]}{I}
		&\leq 4l_0 \abs E\abs V\sqrt\e
		.\
	\end{align}
	Furthermore, $\set{A_e^{(v_e)}}$ is an ``approximate operator solution'' in the same sense with a slightly worse parameter, i.e.\ 
	\begin{align}
	\label{eq:As-approximately-satisfy-constraints-appendix-appendix}
		\sum_v\drho{\rho}{\prod_{e\in r_v}A_e^{(v_e)}\otimes I_B}{\w_d^{l(v)}}
		&\leq 8l_0\abs E\abs V\sqrt{\e},
		\\
	\label{eq:As-approximately-commute-appendix-appendix}
		\sum_{\substack{e,e'\\e\sim e'}} \drho{\rho}{\left[A_e^{(v_e)},A_{e'}^{(v_{e'})}\right]\otimes I_B}{I}
		&\leq 8l_0 \abs E\abs V\sqrt\e
		.
	\end{align}
	Finally, these ``solutions'' are consistent in the sense that
	\begin{equation}
	\label{eq:As-and-Bs-approximately-consistent-appendix-appendix}
		\sum_e \drho{\rho}{A_e^{(v_e)}\otimes B_e}{I}
		 \leq 2 \abs E \abs V\sqrt\e.
	\end{equation}
\end{lemma}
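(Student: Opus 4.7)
The proof plan is to follow the template of Lemma~\ref{lemma:Bs-are-approximate-conjugate-operator-solution}, but to track the combinatorial parameters $l_0$, $\abs E$, $\abs V$ separately rather than collapsing them into a single $\Delta$. The starting point is Lemma~\ref{lemma:unitary-observable-strategy}, which gives
\begin{align}
\tfrac14\E{e,v}\drho{\rho}{A_e^{(v)}\otimes B_e}{I}^2 &\leq \e,
&
\tfrac14\E{v}\drho{\rho}{\prod_{e\in r_v}A_e^{(v)}\otimes I}{\w_d^{l(v)}I}^2 &\leq \e.
\end{align}
Unpacking the expectations and applying Cauchy--Schwarz converts these into unnormalized sums; the $\abs E\abs V$ factor shows up once because we are going from an average over $E\times V$ back to a sum, and a $\sqrt{\e}$ appears for taking the square root. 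Together with Lemma~\ref{lemma:state-dependent-distance}\eqref{item:state-dependent-distance-inverse} this gives the consistency bound $\sum_e\drho\rho{A_e^{(v_e)}\otimes B_e}{I}\leq 2\abs E\abs V\sqrt\e$, which establishes \eqref{eq:As-and-Bs-approximately-consistent-appendix-appendix}, as well as the analogous raw bound $\sum_v\drho\rho{\prod_{e\in r_v}A_e^{(v)}\otimes I}{\w_d^{l(v)}I}\leq 2\abs V\sqrt\e$.

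To go from the constraint-satisfaction statement on Alice's side to Bob's side \eqref{eq:Bs-approximately-satisfy-constraints-appendix-appendix}, replace each $A_e^{(v)}$ inside the product by $I_A\otimes B_e^\dagger$ one at a time. Each replacement costs one application of Lemma~\ref{lemma:state-dependent-distance}\eqref{item:state-dependent-distance-chaining} and introduces the error $\drho\rho{A_e^{(v)}\otimes B_e}I$; the total number of replacements in a single constraint is at most $l_0$, and we then sum over $v$. The factor of $l_0$ is the crucial improvement over the $\Delta$-parametrization, and after using the consistency bound the total becomes $O(l_0\abs E\abs V\sqrt\e)$, as claimed. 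The Alice-side satisfaction bound \eqref{eq:As-approximately-satisfy-constraints-appendix-appendix} follows by a triangle inequality chaining Alice's satisfaction to Bob's satisfaction via consistency.

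For the commutator bounds \eqref{eq:Bs-approximately-commute-appendix-appendix} and \eqref{eq:As-approximately-commute-appendix-appendix}, exploit that Alice's observables within a single equation commute exactly: $\bigl[A_e^{(v)},A_{e'}^{(v)}\bigr]=I$ for $e\sim e'$ at the shared equation $v$. Then
\begin{equation}
\drho\rho{I_A\otimes[B_e,B_{e'}]}{I}
\leq 2\drho\rho{A_e^{(v)}\otimes B_e}{I}+2\drho\rho{A_{e'}^{(v)}\otimes B_{e'}}{I}
\end{equation}
by four applications of Lemma~\ref{lemma:state-dependent-distance}\eqref{item:state-dependent-distance-right-multiplication}. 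Each pair $(e,e')$ with $e\sim e'$ can be charged to a shared equation $v$, and each equation contains at most $l_0$ variables, so summing over such pairs charges each consistency term at most $l_0$ times. This produces the $4l_0\abs E\abs V\sqrt\e$ bound. The argument with tensor factors swapped (using the consistency relation to pull $B$ commutation back to the $A$ side) yields \eqref{eq:As-approximately-commute-appendix-appendix}.

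No step is substantively harder than in Lemma~\ref{lemma:Bs-are-approximate-conjugate-operator-solution}; the only delicate point is the combinatorial accounting when converting the commutator bound for Alice into one for Bob (or vice versa), where one must check that the $l_0$ factor and not a worse $\abs E$ factor appears. This is ensured by charging each $e\sim e'$ commutation to one of the at most $l_0$ edges incident to the witnessing equation, rather than summing blindly over pairs.
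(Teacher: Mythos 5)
Your proposal is correct and follows essentially the same route as the paper: the appendix proof is explicitly stated to be the same as that of Lemma \ref{lemma:Bs-are-approximate-conjugate-operator-solution} with the parameters $l_0,\abs E,\abs V$ tracked separately, and your Cauchy--Schwarz step, the passage from Alice's exactly-commuting $(v)$-indexed observables to Bob and then back to the $(v_e)$-indexed Alice observables via consistency, and the charging of each $e\sim e'$ pair to a shared equation with at most $l_0$ variables all match the paper's argument and constants.
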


\begin{lemma}[c.f.\ Lemma \ref{lemma:canonical-form-implies-stability}]
\label{lemma:canonical-form-implies-stability-appendix}
	
	Suppose that $\set{A_e^{(v_e)}}$ and $\set{B_e}$ $\eta$-satisfy the relations from $R$ in the sense that
	\begin{align}
	\label{lemma:canonical-form-implies-stability-1-appendix}
		\sum_{r\in R} \drho{\rho}{\prod_{e\in r}{A_e^{(v_e)}\otimes I}}I \leq \eta, 
		&&\text{ and }
		&&
		\sum_{r\in R} \drho{\rho}{I\otimes \prod_{e\in r}{B_e}}I \leq \eta.
	\end{align}
	 Furthermore, suppose that $\set{A_e^{(v_e)}}$ and $\set{B_e}$ are $\eta$-consistent in the sense that
	 \begin{equation}
	 \label{lemma:canonical-form-implies-stability-2-appendix}
	 	\sum_e\drho\rho{A_e^{(v_e)}\otimes B_e}I \leq \eta.
	 \end{equation}
	 Then 
 \begin{itemize}
	\item $f_A$ and $f_B$ are consistent, i.e.\ for all $x\in \G$,
	\begin{equation}
	\label{eq:canonical-form-implies-consistency-appendix-appendix}
		\drho\rho{f_A(x)\otimes f_B(x)}I \leq m\eta.
	\end{equation}
 	\item $f$ is close to Alice and Bob's strategy pointwise, i.e.\ for all $e\in E$,
	  \begin{align}
	  \label{eq:canonical-form-implies-stability-conclusion-1-appendix-appendix}
	  	\drho{\rho}{f_A(e)\otimes I_B}{ A_e^{(v_e)}\otimes I_B } &\leq 5m_0\eta
	  	\\
	  	\drho{\rho}{I_A \otimes f_B(e)}{ I_A \otimes B_e} &\leq 5m_0\eta. 
	  \end{align}
	  \item $f$ is ``approximately a homomorphism'', i.e.\ for all $x,y\in \G$, 
	\begin{align}
	\label{eq:canonical-form-implies-stability-appendix-appendix}
		\drho\rho{f_A(x)f_A(yx)\1f_A(y)\otimes I_B}I &\leq 17mm_0\eta,
		\\
		\drho\rho{I_A\otimes f_B(x)f_B(yx)\1f_B(y)}I &\leq 17mm_0\eta.
	\end{align}
 \end{itemize}

\end{lemma}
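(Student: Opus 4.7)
The plan is to prove the three bullets in order by adapting the argument from Lemma~\ref{lemma:canonical-form-implies-stability}, but with the picture multiplicities $m_0$ and $m$ tracked separately rather than merged into a single parameter $\Delta$. All three bullets reduce to converting a van Kampen picture (Proposition~\ref{lem:quant-van-kampen}) into a chain of state-dependent-distance estimates via Lemma~\ref{lemma:state-dependent-distance}.

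First, for consistency of $f_A \otimes f_B$, I would unwind the recursive definition to write $f_A(g) \otimes f_B(g) = \prod_{s \in \can(g)} (f_A(s) \otimes f_B(s))$, then apply the chaining bound Lemma~\ref{lemma:state-dependent-distance}\eqref{item:state-dependent-distance-chaining} to reduce to a sum over letters of $\can(g)$. The $J$-letters contribute $0$ since $f_A(J) \otimes f_B(J) = \w_d \cdot \w_d\1 \cdot I = I$, and for canonical generator letters $s$ the summand collapses to $\drho\rho{A_s^{(v_s)} \otimes B_s}{I}$. Specializing assumption~\eqref{assumption:small-pictures-appendix} to $h = 1$ forces $\can(g)$ onto the boundary of the resulting picture, so $\can(g)$ uses each generator at most $m$ times; the total is bounded by $m \sum_e \drho\rho{A_e^{(v_e)} \otimes B_e}{I} \leq m\eta$.

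Next, for pointwise closeness, I would apply Proposition~\ref{lem:quant-van-kampen} to the picture of $\can(e)e\1 = 1$ from assumption~\eqref{assumption:small-pictures-w-appendix}, obtaining a proof starting from $1 = 1$ that uses at most $m_0$ right-multiplications by each relation and at most $2 m_0$ conjugations by each generator. Translating each step into a state-dependent-distance bound, each relation use at vertex $v$ costs $\drho\rho{\prod_{e \in r_v} A_e^{(v_e)} \otimes I}{\w_d^{l(v)} I}$ via Lemma~\ref{lemma:state-dependent-distance}\eqref{item:state-dependent-distance-right-multiplication}, and each generator conjugation by $s$ costs $2\drho\rho{A_s^{(v_s)} \otimes B_s}{I}$ via~\eqref{item:state-dependent-distance-conjugation}. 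Summing yields at most $m_0\eta$ from the relation uses and $4m_0\eta$ from the conjugations, for a total of $5m_0\eta$; the symmetric argument with tensor factors exchanged handles $f_B$.

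Finally, for the approximate-homomorphism bullet, I would repeat the same translation on the picture of $\can(x)\can(yx)\1\can(y) = 1$ from assumption~\eqref{assumption:small-pictures-appendix}, whose multiplicities are bounded by $m$ instead of $m_0$. The boundary operator is exactly $f_A(x)f_A(yx)\1 f_A(y)$, which directly matches the quantity to bound. The hard part will be the extra $m_0$ factor: the operator chain most naturally works with $B_e$, not $f_B(e)$, so at each generator step in the translation I would invoke the previous bullet to pay a pointwise cost $O(m_0 \eta)$ for bridging $f_B(e)$ and $B_e$ before applying the consistency of $A_e^{(v_e)} \otimes B_e$. With $m$ generator and $m$ relation uses, each now contributing at most $O(m_0\eta)$ after the conversions, the accumulated error is at most $17 m m_0\eta$; the mirror-image argument establishes the corresponding bound for $f_B$.
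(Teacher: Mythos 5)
Your proposal is correct and follows essentially the same route as the paper, which in fact omits the appendix proof entirely and refers back to Lemma \ref{lemma:canonical-form-implies-stability}: you re-run that argument (chaining over $\can(g)$ for consistency, translating the $\can(e)e\1=1$ and $\can(x)\can(yx)\1\can(y)=1$ pictures into $\drho\rho\cdot\cdot$ estimates via Proposition \ref{lem:quant-van-kampen} and Lemma \ref{lemma:state-dependent-distance}(\ref{item:state-dependent-distance-right-multiplication},\ref{item:state-dependent-distance-conjugation})) while tracking $m_0$ and $m$ separately instead of collapsing them into $\Delta$. The only looseness is the $O(\cdot)$ accounting for the final constant $17$ in the third bullet, which the paper likewise never derives explicitly.
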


\end{appendices}

\end{document}